\documentclass[12pt,english]{article}

\usepackage{geometry}
\usepackage[protrusion=true,expansion=false]{microtype}
\usepackage{setspace}
\usepackage{etoolbox}
\usepackage{babel}
\usepackage{hyperref}
\usepackage{placeins}
\usepackage{comment}
\usepackage{subcaption}
\usepackage{csquotes}
\usepackage{pdflscape}
\usepackage{float}
\usepackage{algorithm}
\usepackage{algpseudocode}
\usepackage{xcolor}
\usepackage[normalem]{ulem}
\usepackage{manuscript-root}
\usepackage[
  backend=biber,
  style=apa,
  doi=false,
  eprint=true,
  isbn=false,
  sorting=nyt,
  url=false,
  natbib=true
]{biblatex}
\AtEveryBibitem{%
  \iffieldequalstr{eprinttype}{arxiv}
    {}%
    {%
      \clearfield{eprint}%
      \clearfield{eprinttype}%
      \clearfield{eprintclass}%
    }%
}
\DeclareFieldFormat{eprint:arxiv}{%
  \href{https://arxiv.org/abs/#1}{arXiv\addspace #1}%
}
\renewbibmacro*{location+publisher}{\printlist{publisher}}

\graphicspath{{figures/}}
\makeatletter

\renewenvironment{abstract}
  {\par\noindent\textbf{\abstractname.}\ \ignorespaces}
  {\par}

\patchcmd{\@maketitle}{\vskip 2em}{\vskip 0.5em}{}{}
\patchcmd{\@maketitle}{\vskip 1.5em}{\vskip 0.5em}{}{}
\patchcmd{\@maketitle}{\vskip 1em}{\vskip 0.5em}{}{}
\patchcmd{\@maketitle}{\vskip 1.5em}{\vskip 0.5em}{}{}

\makeatother

\begin{document}
\title{%
  An Adversarial Approach to Identification, Computation, and Inference in Models with a Linear-in-Measures Representation
  \thanks{Botosaru: Department of Economics, McMaster University, \texttt{botosari@mcmaster.ca}.
  Loh: Department of Economics, University of North Carolina Wilmington, \texttt{lohi@uncw.edu}.
  Muris: Department of Economics, McMaster University, \texttt{muerisc@mcmaster.ca}.
  We thank Andr\'es Aradillas-L\'opez, Tim Christensen, Inga Deimen, Jiaying Gu, Bo Honor\'e, Hide Ichimura, Hiro Kasahara, Vadim Marmer, Francesca Molinari, Adam Rosen, Rami Tabri, Alex Torgovitsky, and Victoria Zinde-Walsh for discussions and suggestions.
  We also thank audiences at various seminars and conferences for questions and comments.
  Botosaru gratefully acknowledges financial support from the Canada Research Chairs Program.
  Muris gratefully acknowledges financial support from the Social Sciences and Humanities Research Council of Canada (Insight Grant 435-2025-1380).
  }%
}
\author{Irene Botosaru \and Isaac Loh \and Chris Muris}
\date{September 25, 2026}
\maketitle

\begin{abstract}
We develop a framework for identification, computation, and inference in econometric models with a \emph{linear-in-measures representation}.
These models express maintained restrictions as moment conditions linear in the joint probability measure of observed and latent inputs, and map that measure linearly to the distribution of outputs, even with nonlinear outcome equations.
We construct an \emph{adversarial discrepancy function} whose zeros characterize the identified set for structural and counterfactual parameters.
With finite output support, finite linear programs compute the discrepancy function or provide certified bounds even when latent inputs have infinite support, and a penalized bootstrap yields confidence sets with uniform per-point coverage.
We apply the framework to two open cases in binary choice panels with fixed effects and discrete covariates: sequential exogeneity with unspecified conditional marginal error distributions, and known conditional marginal error distributions with unrestricted serial dependence.
In an entry game with multiple equilibria, the framework recovers the known sharp identification region.
\end{abstract}

\smallskip
\noindent\textbf{Keywords:} identification; latent-variable models; nonlinear panel models; sequential exogeneity; linear programming; uniform inference.\par
\noindent\textbf{JEL:} C12, C14, C23, C61.

\section{Introduction}
\label{sec:intro}

\subsection{Motivation and approach}

We develop a new approach to identification, computation, and inference for a broad class of econometric models. 
A model in this class specifies two objects: (i) moment conditions that restrict the joint probability measure of its \emph{inputs}, the set of observed and latent variables from which it generates its \emph{outputs};%
\footnote{``Input'' and ``output'' refer to roles in the representation. 
Observed inputs may be covariates, instruments, initial conditions, or the endpoints of an interval-censored covariate.
These variables are also outputs, which the model returns unchanged. 
Outcomes are outputs generated from the observed and latent inputs.}
and (ii) the conditional distribution of the outputs given the inputs.
Averaging this conditional distribution over the \emph{input measure} yields the output distribution.
Thus, at each parameter value, the model has a \emph{linear-in-measures representation} (LIMR) because both the moment restrictions and the mapping to the output distribution are linear in the input measure.
A parameter value belongs to the identified set if some input measure satisfies the moment restrictions at that value and reproduces the observed output distribution.

Using the LIMR and a separation argument in the space of distributions, we construct the \emph{adversarial discrepancy function} (ADF), a criterion function whose zeros characterize the identified set.
This places the adversarial approach in the criterion function tradition of \textcite{manskiInferenceRegressionsInterval2002} and \textcite{Chernozhukovetal2007}.
Here the ADF has a zero-sum-game form: a test function seeks to separate the observed output distribution from those generated by the admissible input measures, while the model seeks an input measure that minimizes the resulting discrepancy.
Identification requires neither an explicit characterization of the set of model-implied output distributions nor a separate characterization of its observable implications.

When the outputs take finitely many values, even when the latent inputs have infinite support, finite linear programs (LPs) either compute the ADF exactly or yield certified bounds.
The bounds are valid at every certified iterate, so no \textit{a priori} convergence guarantee is required.
The test statistic is the sample analog of the ADF, computed by the same LPs.
Inference is by test inversion, with critical values from a penalized bootstrap, and the resulting confidence sets have uniform per-point coverage.

Within the LIMR class, specifications differ in the choice of inputs, the restrictions on their joint probability measure, and the conditional distribution of the outputs given the inputs.
The ADF, and hence the LPs and the bootstrap procedure, is constructed from these objects.
Thus, changes in the outcome equation or maintained restrictions alter the ingredients of the same construction rather than requiring a new identification, computation, or inference method.

We demonstrate the scope of the LIMR in short nonlinear panel models with fixed effects.
We address open questions in binary choice panel models under sequential exogeneity without parametric restrictions on the error distributions, and under parametrically specified marginal error distributions with unrestricted serial dependence.
We analyze additional specifications that allow unrestricted dependence between the fixed effect and the error terms.
We also find that the framework recovers the point identification benchmark for the panel logit model with fixed effects.
The LIMR class also includes models with interval-censored covariates and dynamic discrete choice, and extends beyond panels to entry games with multiple equilibria and randomized trials with imperfect compliance.
Section~\ref{sec:examples} constructs a LIMR for each model mentioned above.

We call the map from input measures to output distributions the \emph{output operator}.
Linearity in the input measure restricts neither the outcome equation nor the moment functions, which may be nonlinear or nonseparable.
The moment conditions may be unconditional or conditional on any subset of the inputs and may form a continuum, as when a restriction holds at every value of a continuously distributed fixed effect.
They include zero-mean and median restrictions, parametric distributional restrictions, stationarity restrictions, and definitions of counterfactual parameters such as the average structural function (ASF).
Adding an equilibrium selector accommodates incomplete entry games, while a change of inputs yields a LIMR under sequential exogeneity in a nonlinear panel model with fixed effects.

In nonlinear panel models with fixed effects and few time periods, repeated observations do not in general reveal the individual fixed effect.
Hence, its unrestricted distribution given the covariates is an infinite-dimensional latent object.
The fixed effect also enters counterfactual parameters that average over its distribution, such as the ASF.
Under special parametric structure, the fixed effect can be eliminated through a sufficient statistic or a functional differencing transformation for the identification of some structural coefficients.
More generally, restrictions on the errors are often stated conditional on the fixed effect and must therefore hold at every value of its unknown support.
Structural and counterfactual parameters may consequently be only partially identified.

A substantial literature has developed powerful identification methods for important specifications of these models, both for point identification \parencite[e.g.,][]{cham1980,Manski1987,honorePanelDataDiscrete2000,Bonhomme2012} and for partial identification of structural coefficients and counterfactual parameters \parencite[e.g.,][]{honoreBoundsParametersPanel2006,ChernValHahnNewey2013,daveziesIdentificationEstimationAverage2022,botosaruMuris2025}.
These results exploit various restrictions on the errors and use constructions tailored to the outcome equation and the parameter of interest.
For example, stationarity restrictions on error distributions yield model-specific inequalities that depend only on observables \parencite[e.g.,][]{khanIdentificationDynamicBinary2023,gaoIdentificationNonlinearDynamic2024,pakesPorterMomentInequalities,mbakopIdentificationSomeDiscrete2023}, while with parametric errors functional differencing yields moment equalities free of the fixed effect \parencite{Bonhomme2012,honoreWeidnerMomentConditions2025}, which need not exhaust the observable implications \parencite{dobronyiGuKim2021}.

Our two binary choice applications depart from familiar benchmark restrictions in different ways.
Fully parametric specifications obtain a likelihood by specifying the joint error distribution \parencite{cham1980,Chamberlain2010}, whereas semiparametric approaches exploit restrictions such as equality of error distributions across periods \parencite{Manski1987,ChernValHahnNewey2013}.
Under sequential exogeneity, feedback from past outcomes to future covariates makes the conditioning information in the restrictions on the errors evolve over time.
This has long been a difficult case in nonlinear panel models \parencite{arellanoPanelDataModels2001,chamberlainFeedbackPanelData2022,bonhommeDanoGraham2025}.
With finitely supported covariates, we leave the conditional marginal error distributions unspecified and characterize the identified set of the coefficient and the ASF.%
\footnote{Existing identification results for the coefficient rely on a parametric error distribution \parencite{arellanoBinaryChoicePanel2003,piginiConditionalInferenceBinary2022,bonhommeDanoGraham2023} or a special regressor \parencite{honoreSemiparametricBinaryChoice2002}, while \textcite{ChernValHahnNewey2013} derive bounds for the ASF.}
Our second specification fixes the conditional marginal distribution of each error term given the fixed effect and covariates but leaves serial dependence unrestricted.
It therefore lies between the two benchmarks: the marginal distributions are specified, but the joint error distribution, and hence the likelihood, is not.%
\footnote{Specified serial dependence can instead be incorporated into a likelihood \parencite{heckmanStatisticalModelsDiscrete1981,Hyslop1999}.}
In both specifications, the LIMR retains the fixed effect among the inputs, so restrictions stated conditional on it and counterfactual parameters defined through its distribution can be imposed directly.

Random set, optimal transport, and entropic latent-variable methods also provide general identification procedures for partially identified models.
Section~\ref{sec:related_literature} compares those approaches with the adversarial approach.

\subsection{Contributions}
\label{sec:contributions}

Our first contribution is an adversarial characterization of the identified set.
Fix $\theta$ and let $\Gamma_\theta$ denote the set of admissible input measures.
Let $\Lgen$ denote the output operator, so that $\Lgen\gamma$ is the output measure, the output distribution generated by $\gamma\in\Gamma_\theta$.
Let $\mu^*$ denote the true output measure.
Under the conditions of Theorem~\ref{thm:main_sharpness}, we construct the ADF below
\begin{equation}
T(\theta)
\equiv
\sup_{\substack{\phi\ \mathrm{measurable}\\ 0\leq\phi\leq 1}} \;
\inf_{\gamma\in\Gamma_\theta} \;
\Big(
\EE{\mu^*}{\phi}
-
\EE{\Lgen\gamma}{\phi}
\Big),
\label{eq:discrepancy_function}
\end{equation}
whose zeros characterize the identified set. 
The outer problem searches over test functions $\phi$ and the inner problem chooses the admissible input measure that minimizes the resulting discrepancy.%
\footnote{\textcite{Kaji2023} use a generator--discriminator minimax construction for simulation-based estimation of parametric structural models.
The adversary here instead ranges over test functions that separate $\mu^*$ from the set of output measures generated by the model.}
Thus $T(\theta)>0$ whenever some $\phi$ separates $\mu^*$ from every output measure at $\theta$.
By the separation argument in Section~\ref{sec:identification}, $T(\theta)$ equals the total variation (TV) distance from $\mu^*$ to the TV closure of the set of output measures.\footnote{Hence $T(\theta)=0$ means that the model can approximate $\mu^*$ arbitrarily closely in TV. When the output measure set is TV closed, exact and approximate compatibility coincide. The set of output measures is convex by the LIMR. See Section~\ref{sec:discussion_discrepancy} for discussion.}
The max--min order in \eqref{eq:discrepancy_function} is intentional: by the LIMR, the inner problem is linear in $\gamma$, while the outer problem is linear in $\phi$. 
This linear structure yields the LP formulations used for computation and inference.

Our second contribution is computation: when the output space is finite, an auxiliary finite linear program computes $T(\theta)$ exactly under two checkable conditions and, under one of them together with a row bound, encloses it.
The auxiliary program keeps finitely many input values as rows and finitely many moment restrictions as columns.
Omitting input values weakly raises the auxiliary value, whereas omitting moment restrictions weakly lowers it, so the auxiliary value need not bound $T(\theta)$ in either direction.
\emph{Column certification} verifies that omitted moment restrictions leave the auxiliary value unchanged, making it an upper bound on $T(\theta)$.
\emph{Row certification} verifies that no omitted input value lowers the auxiliary value.
When both hold, the auxiliary value equals $T(\theta)$ (Theorem~\ref{thm:exact_computation}).
Column certification requires no optimization in any of our examples, whereas row certification requires a global optimization over the input space.
Deciding whether $\theta$ belongs to the identified set needs less than equality.
A \emph{row bound}, a certified bound on how much omitted input values can lower the auxiliary value, gives a lower bound, and with column certification it encloses $T(\theta)$ (Theorem~\ref{thm:enclosure}).
An upper bound of zero settles inclusion, and a positive lower bound settles exclusion.
If neither bound decides, column-and-row generation adds as a new row an input value that lowers the auxiliary value, constructs a column-certified set of moment restrictions for the enlarged support, and solves the program again.
We make no general convergence claim, and a decision does not need one: every column-certified iterate carries a valid enclosure.

Our third contribution is inference.
Replacing $\mu^*$ by the empirical output measure gives $T_n(\theta)$, the sample analog of \eqref{eq:discrepancy_function}.
With finite output support, the pointwise limiting distribution of $\sqrt{n}(T_n(\theta)-T(\theta))$ depends on the \emph{contact set}, the set of test functions attaining the population supremum.
We use a penalized bootstrap whose diverging penalty localizes the bootstrap supremum to the contact set.
Uniform validity does not require uniform estimation of the contact set.
For $\theta$ in the identified set, a finite-sample inequality reduces uniform size control to a uniform empirical-process approximation on the finite output space.
Test inversion then gives confidence sets that cover each point of the identified set with asymptotic probability at least $1-\alpha$, uniformly over $\mu^*$ and over $\theta$ in its identified set, including output measures with zero-probability cells.%
\footnote{The tests add a fixed $\varepsilon>0$ to the bootstrap critical value on the $\sqrt{n}T_n(\theta)$ scale, as in Corollary~\ref{C:inf_finite}.}
For alternatives satisfying $T(\theta)\geq\Delta/\sqrt n$, the worst-case asymptotic acceptance probability converges to zero uniformly over the sampling distribution and the parameter value as $\Delta\to\infty$.
The population ADF, its sample analog, and the penalized bootstrap statistic are computed with the same LP structure.

Our fourth contribution concerns binary choice panels with fixed effects.
Under the baseline specification, the error terms have a known distribution, are independent across periods, and are independent of the fixed effect and covariates.
In two-period designs based on six error distributions, logit reproduces the classical point identification result \parencite{cham1980,Chamberlain2010}, whereas the other five distributions produce interval-identified coefficients.
If serial dependence is unrestricted while each error term retains its known conditional distribution given the fixed effect and covariates, the coefficient set widens but its sign remains identified in all six designs.
If instead the fixed effect and the errors may be arbitrarily dependent while the errors remain serially independent conditional on the covariates, the identified sets contain zero in these designs.
For the probit design, we also compute joint identified sets for the coefficient and the ASF.
Additional periods contract these sets, but more slowly under either relaxation than under the baseline.
When the fixed effect and errors may be arbitrarily dependent, five periods are required in this design to identify the sign of the coefficient.
With two periods and interval-censored covariates, the coefficient sign remains identified even when each covariate is observed only through three equal-width bins; see Additional Appendix~\ref{app:interval_results}.
In the probit simulations, rejection rates at evaluated coefficients in the identified set do not exceed the nominal level, while rejection rates outside the set increase with sample size.

We also characterize the identified set under sequential exogeneity.
For binary choice panels with finite covariate support, this gives, to our knowledge, the first identified-set characterization for the coefficient without a parametric error distribution or a special regressor.
In our baseline numerical design, adding a third period produces an upper bound on the coefficient, although its sign remains unidentified with either two or three periods.
Feedback from past outcomes to future covariates may violate conditional stationarity \parencite{Manski1987}, which conditions on the complete covariate history, while leaving sequential exogeneity correctly specified.

Finally, the framework applies beyond panel models.
We study the two-player complete-information entry game of \textcite{Tamer2003} under the specification of \textcite{beresteanuSharpIdentificationRegions2011} (BMM).
Without covariates and with bivariate standard normal error terms, augmenting the inputs with an unrestricted equilibrium selector yields a LIMR whose ADF has BMM's identified set as its zero set.
Thus an incomplete model can be handled by completing it through the input measure rather than first deriving a random set characterization.
The two criteria agree on membership, and finite linear programs approximate the adversarial criterion to arbitrary accuracy; see Supplemental Appendix~\ref{app:entry_game}.

\subsection{Related literature}
\label{sec:related_literature}

\paragraph{Identification.}
A useful distinction among identification methods for latent-variable models is whether they eliminate the latent variables before characterizing compatibility or retain a latent object in the characterization.

Random set methods characterize compatibility in observable space \parencite{BeresteanuMolinari2008,beresteanuSharpIdentificationRegions2011}.
BMM represent the model-implied moment set as an Aumann expectation and characterize membership by support functions.
For their entry-game specification, which fixes the distribution of the error terms and leaves equilibrium selection unrestricted, the completed LIMR of Example~\ref{ex:entry_game_main} generates the same set of outcome probability vectors as the Aumann expectation of their equilibrium-outcome random set; see Supplemental Appendix~\ref{app:entry_game}.
This equivalence relies on unrestricted equilibrium selection.
Restrictions on equilibrium selection change the set of admissible selections, and BMM note that the resulting moment set need not remain convex \parencite[p.~1788]{beresteanuSharpIdentificationRegions2011}.
The input measure fixes the marginal distribution of the error terms.
A restriction that is linear in the conditional distribution of the equilibrium selector given the error terms is therefore linear in the input measure, and the LIMR can impose it.
The same applies to restrictions on errors conditional on other latent variables whenever those restrictions are linear in the input measure.
Additionally, the approach here does not require an explicit characterization of the set of model-implied output distributions or its observable implications.

\textcite{chesherRosenZhang2026} project out the fixed effect and then apply random set methods to characterize compatibility in the resulting incomplete model.
Projection permits unrestricted dependence between the fixed effect and the error terms.
The LIMR can allow unrestricted dependence while retaining the fixed effect among the inputs, as in Example~\ref{ex:binary_parametric}.
Restrictions or counterfactual parameters involving the fixed effect or its distribution can then be imposed directly.
After projection, they must instead admit an equivalent representation in terms of the projected model.
For discrete-outcome specifications, the projected characterization can require containment inequalities indexed by a core-determining collection of sets.
With finite observable support, the LIMR evaluates compatibility through the linear programming procedure of Section~\ref{sec:computation} without constructing such a collection.

A second class of methods retains a latent object in the compatibility problem.
Optimal transport methods use couplings of observed and latent variables, with the latent marginal specified or restricted through finitely many moments \parencite{ekelandOptimalTransportationFalsifiability2010,galichonSetIdentificationModels2011}.
\textcite{schennachEntropicLatentVariable2014} profiles out the latent distribution by entropic tilting when the model is represented by finitely many moment restrictions, and treats countably many restrictions through an increasing sequence of finite systems.
\textcite{li2026} uses a support-function criterion to characterize the moment closure of the identified set and treats structural and counterfactual parameters jointly.
The LIMR also accommodates models in which the distribution of the fixed effect is unrestricted and an uncountable family of restrictions is imposed conditional on it.
Those restrictions enter directly through the joint input measure, while the closure relevant for our identification result is taken in the space of output probability measures rather than in moment space.

Other methods retain type probabilities, cell probabilities, a subdistribution, or a latent function \parencite{balkePearlBoundsTreatmentEffects1997,laffersIdentificationModelsDiscrete2019,MogstadSantosTorgovitsky2018,torgovitskyPartialIdentificationExtending2019,Tebaldi2023,guCounterfactualIdentificationLatent2025}.
In nonlinear panels, related methods optimize over latent heterogeneity or exploit model-specific reductions to characterize coefficients or average effects \parencite{honoreBoundsParametersPanel2006,ChernValHahnNewey2013,daveziesIdentificationEstimationAverage2022,bonhommeDanoGraham2023}.
\textcite{guCounterfactualIdentificationLatent2025} instead construct an exact finite representation for discrete-outcome models with indices linear in latent variables and characterize additional restrictions that can be imposed in that representation.
These reductions depend on the outcome equation, restrictions on the latent variables, and the parameter of interest, and need not preserve arbitrary restrictions indexed by continuously distributed latent heterogeneity.

\textcite{christensenCounterfactualSensitivityRobustness2023} study sensitivity of counterfactuals to a parametric specification of the distribution of the unobservables, optimizing over a $\varphi$-divergence neighborhood subject to finitely many moment restrictions.
Their nonparametric case also yields a membership criterion whose dual contains an optimization over latent values; finite-radius neighborhoods replace this optimization by a convex expectation (their Section~2.5 and Remark~2.7).
With finite observable support, we instead bound the analogous optimization over input values and obtain certified lower and upper bounds on $T(\theta)$ (Section~\ref{sec:computation}).

\paragraph{Computation.}
Finite computation in latent-variable models typically follows either from an exact finite representation \parencite{balkePearlBoundsTreatmentEffects1997,honoreBoundsParametersPanel2006,KitamuraStoye2018,laffersIdentificationModelsDiscrete2019,Tebaldi2023,guCounterfactualIdentificationLatent2025} or from a finite approximation to an infinite-dimensional latent object \parencite{ChernValHahnNewey2013,MogstadSantosTorgovitsky2018,bonhommeDanoGraham2023}.
For their finite linear programming formulation, \textcite{honoreBoundsParametersPanel2006} impose finite support on the fixed effect, while \textcite{ChernValHahnNewey2013}, \textcite{bonhommeDanoGraham2023}, and \textcite{botosaru2024} implement continuous-support problems on finite grids without formal bounds on the discretization error.
\textcite{pakelBoundsAverageEffects2026} instead obtain tractable outer bounds.
Our finite input support and retained moment restrictions define an auxiliary problem.
The target remains the unrestricted $T(\theta)$, and the certificates in Section~\ref{sec:computation} translate the finite problem into valid bounds on that target.
The resulting procedure is related to cutting-plane, row-exchange, and column-generation methods for semi-infinite and large-scale linear programs \parencite{Kelley1960,HettichKortanek1993,Luebbecke2011,Muter2013}.
In econometrics, \textcite{SmeuldersCramaSpieksma2021} generate rational types for the finite program of \textcite{KitamuraStoye2018}.

\paragraph{Inference.}
We use the per-point coverage criterion of \textcite{im2004} and \textcite{stoya2009}, uniformly over the sampling distribution and the parameter value under test, rather than simultaneous coverage of the identified set \parencite{Chernozhukovetal2007,romano2010inference}.
The nonregularity is analogous to that in moment inequality models, where the binding restrictions can change along sequences of data-generating processes and motivate moment selection \parencite{RomanoShaikh2008,andrews2010}.
There, moment selection operates on observable moment inequalities.
Here the corresponding object is the contact set of test functions generated by the adversarial criterion, and uniform validity does not require uniform estimation of that set.
\textcite{GalichonHenry2009} and \textcite{loh2024} use related minimax statistics, while the pointwise analysis uses results for directionally differentiable functionals \parencite{fang2018,HongLi2018}.

\subsection{Notation}

We use $\one\{\cdot\}$ for the indicator function and $\overset{d}{=}$ for equality in distribution.
For each measurable space $\mathcal S$, let $\mathcal B(\mathcal S)$ denote its $\sigma$-algebra.
Product spaces carry the product $\sigma$-algebra.
We write $\Pa(\mathcal S)$ for the set of probability measures on $\mathcal B(\mathcal S)$ and $\delta_s$ for the Dirac measure at $s\in\mathcal S$, defined by $\delta_s(B)=\one\{s\in B\}$.
A probability kernel from $\mathcal S$ to $\mathcal S'$ is a map $K(\cdot\mid\cdot)$ such that $K(\cdot\mid s)$ is a probability measure on $\mathcal S'$ for each $s\in\mathcal S$ and $K(B\mid\cdot)$ is measurable for each $B\in\mathcal B(\mathcal S')$.
For a measurable map $f\colon \mathcal S\to\mathcal S'$ and $\mu\in\Pa(\mathcal S)$, the pushforward measure $f_*\mu\in\Pa(\mathcal S')$ is defined by $(f_*\mu)(B)=\mu(f^{-1}(B))$ for $B\in\mathcal B(\mathcal S')$.
For an index set $\mathcal I$, $\R^{(\mathcal I)}$ denotes the set of vectors in $\R^{\mathcal I}$ with finite support, and $\R^{(\mathcal I)}_+$ its nonnegative cone.
For a finite signed measure $\nu$, write $\inner{\phi,\nu}\equiv\int\phi\,\d\nu$ and, when $\nu=\mu$ is a probability measure, $\EE{\mu}{\phi}\equiv\inner{\phi,\mu}$.
For probability measures, we use the convention
$\norm{\mu-\mu'}_{\mathrm{TV}}\equiv\sup_{B\in\mathcal B(\mathcal S)}|\mu(B)-\mu'(B)|$
for total variation distance.
For a vector $X$, $X'$ denotes its transpose.

\section{Model}
\label{sec:model}

This section defines what it means for an econometric model to have a \emph{linear-in-measures representation} (LIMR).
Section~\ref{sec:examples} gives six examples of models with a LIMR.

Let $X \in \mathcal{X}$ and $U \in \mathcal{U}$ denote observable and unobservable inputs, respectively, and collect them into the input $W = (X, U) \in \mathcal{W} = \mathcal{X} \times \mathcal{U}$.
Let $Y \in \mathcal{Y}$ denote the outcome, and define the output as $Z = (Y, X) \in \mathcal{Z} = \mathcal{Y} \times \mathcal{X}$, so that the observable $X$ serves as both an input and an output.

In our leading binary choice panel models, $X = (X_1,\dots,X_T)$ collects time-varying covariates, $Y = (Y_1,\dots,Y_T)$ collects binary outcomes, and $U$ contains a time-invariant fixed effect and, in some specifications, idiosyncratic error terms.
In other applications, $U$ may contain random coefficients, true values of partially observed covariates, equilibrium selectors, potential outcomes, or other structural primitives.

\begin{asm}
\label{asm:measurable}
The supports $\mathcal Y$, $\mathcal X$, and $\mathcal U$ are measurable spaces.
\end{asm}
A probability measure $\gamma\in\Pa(\mathcal W)$ on the inputs is called an \emph{input measure}.
A parameter $\theta\in\Theta$ collects structural parameters, such as regression coefficients, and counterfactual parameters defined by restrictions linear in $\gamma$, such as ASFs.
\begin{asm}
\label{asm:moment_conditions}
For each $\theta\in\Theta$, the set of admissible input measures $\Gamma_\theta \subseteq \Pa(\mathcal W)$ is characterized by a system of moment restrictions:\footnote{These restrictions implicitly require integrability of the moment functions with respect to $\gamma$.}
$$
\Gamma_\theta = \left\{ \gamma \in \Pa(\mathcal W) : \begin{aligned}
\EE{\gamma}{g_{1,j}(W;\theta)} &= 0,   &&\quad\text{for all } j \in \mathcal J \\
\EE{\gamma}{g_{2,k}(W;\theta)} &\le 0, &&\quad\text{for all } k \in \mathcal K
\end{aligned} \right\},
$$
where $\mathcal J$ and $\mathcal K$ are arbitrary index sets, and $\{g_{1,j}(\cdot;\theta)\}_{j \in \mathcal J}$ and $\{g_{2,k}(\cdot;\theta)\}_{k \in \mathcal K}$ are families of real-valued measurable functions on $\mathcal W$.
\end{asm}
In the examples, the equalities represent sequential exogeneity, parametric restrictions, and random assignment.
The inequalities may represent shape restrictions.
The equalities can also define the counterfactual parameters, as in Example~\ref{ex:binary_parametric}.
The index sets $\mathcal J$ and $\mathcal K$ may be infinite, which matters in nonlinear panel models because restrictions conditional on fixed effects typically generate a continuum of moment equalities.

Not every model assumption is a linear restriction in $\gamma$ in its natural parameterization.
For example, independence with unrestricted marginals imposes a nonlinear factorization of the joint measure.
Such restrictions may admit an equivalent linear representation after reparameterization or augmentation, as in Example~\ref{ex:binary_sequential}.

\begin{asm}
\label{asm:linear_operator}
For each $\theta\in\Theta$, the model specifies an \emph{output kernel}, a probability kernel $K_\theta(\cdot\mid\cdot)$ from $\mathcal W$ to $\mathcal Z$ that does not depend on $\gamma$ and that preserves the observable input:
\begin{equation}
\label{eq:probability_kernel}
(\pi_{\mathcal X})_*K_\theta(\cdot\mid x,u)=\delta_x
\quad\text{for all } (x,u)\in\mathcal W,
\end{equation}
where $\pi_{\mathcal X}\colon\mathcal Z\to\mathcal X$ denotes the coordinate projection.
\end{asm}
Integrating the output kernel with respect to $\gamma$ defines the \emph{output measure} $\mu_{\theta,\gamma} \equiv \Lgen\gamma$ by
\begin{equation}
\label{eq:kernel_operator}
(\Lgen\gamma)(B) \equiv \int_{\mathcal W} K_\theta(B\mid w)\,\d\gamma(w)
\quad\text{for all } B\in\mathcal B(\mathcal Z).
\end{equation}
We call $\Lgen$ the \emph{output operator}.
Because $K_\theta$ is fixed as $\gamma$ varies, the output operator is linear in the input measure.%
\footnote{Formally, because $\Pa(\mathcal W)$ is not a vector space, the map $\Lgen\colon\Pa(\mathcal W)\to\Pa(\mathcal Z)$ is affine. Throughout, linearity refers to its extension to finite signed measures.}

Assumption~\ref{asm:linear_operator} does not require the outcome equation to be linear, separable, or parametric.
For example, consider a model with outcome equation $Y=h_\theta(W)$, where $h_\theta\colon\mathcal W\to\mathcal Y$ is measurable for each $\theta$, and define $\psi_\theta\colon\mathcal W\to\mathcal Z$ by $\psi_\theta(w)\equiv\bigl(h_\theta(w),x\bigr)$ for $w=(x,u)$.
The output kernel is
$K_\theta(B\mid w)=\one\{\psi_\theta(w)\in B\}$,
and the output measure $\Lgen\gamma=(\psi_\theta)_*\gamma$ is the pushforward of $\gamma$ through the map $w\mapsto\psi_\theta(w)$.

The representation is flexible in how the model assumptions are allocated among $\mathcal W$, $\Gamma_\theta$, and $K_\theta$.
Section~\ref{sec:examples} illustrates this flexibility: the parametric panel examples integrate out the error terms through $K_\theta$; the semiparametric panel example retains them in $W$ and restricts their distribution through $\Gamma_\theta$; the interval-censoring example builds the censoring restriction directly into $\mathcal W$; and the entry game augments $W$ with an equilibrium selector.

We call the family $\{(\Gamma_\theta,\Lgen):\theta\in\Theta\}$ a LIMR if, for each $\theta$, the set $\Gamma_\theta\subseteq\Pa(\mathcal W)$ is characterized by Assumption~\ref{asm:moment_conditions}, and the output operator $\Lgen\colon\Pa(\mathcal W)\to\Pa(\mathcal Z)$ is characterized by Assumption~\ref{asm:linear_operator}.
The name reflects that the restrictions characterizing $\Gamma_\theta$ and the output operator $\Lgen$ are linear in the input measure.
We say that an econometric model has a LIMR if there exists such a family for which, at every $\theta\in\Theta$, $\{\Lgen\gamma:\gamma\in\Gamma_\theta\}$ coincides with the set of probability measures of $Z$ that the model can generate at $\theta$.

\begin{asm}
\label{asm:domination}
For each $\theta\in\Theta$, there exists a $\sigma$-finite measure $\lambda_\theta$ on $\mathcal Z$ such that, for all $\gamma \in \Gamma_\theta$, $\Lgen\gamma$ is absolutely continuous with respect to $\lambda_\theta$.
\end{asm}

Assumption~\ref{asm:domination} ensures that every output measure has a Radon--Nikodym density in the common space $L^1(\lambda_\theta)$.
It supplies the dominating measure used in the separation argument of Section~\ref{sec:identification}.
This assumption is satisfied when $\mathcal Z$ is finite by taking $\lambda_\theta$ to be counting measure.
With continuous $X$ or $Y$, Assumption~\ref{asm:domination} holds, for example, when $\Gamma_\theta$ fixes the marginal of $X$ at some probability measure and, for every $(x,u)\in\mathcal W$, $K_\theta(\cdot\times\mathcal X\mid x,u)$ is dominated by a probability kernel from $\mathcal X$ to $\mathcal Y$ that does not depend on $u$.

\section{Examples}
\label{sec:examples}

This section illustrates the scope of the LIMR by constructing one for each of six econometric models.
Each example specifies the input $W$, the admissible set $\Gamma_\theta$, and the output kernel $K_\theta$.
Throughout, the outcomes and the observable input take finitely many values, so Assumption~\ref{asm:domination} holds with $\lambda_\theta$ equal to counting measure on $\mathcal Z$, and we can write the output kernel as a probability mass function.
The panel examples feature linear indices, additive fixed effects, and binary outcomes, although the framework does not require them.

\begin{example}[Parametric binary choice with fixed effects]
\label{ex:binary_parametric}
For $t=1,\ldots,T$, let
\begin{equation}
\label{eq:panel_binary_outcome}
Y_t=\one\{X_t'\beta+A-V_t\ge0\},
\end{equation}
where $A$ is a fixed effect whose distribution given the covariate path $X=(X_1,\ldots,X_T)$ is unrestricted, and $V=(V_1,\ldots,V_T)$ collects the error terms.
Let $H$ be a known continuous CDF.
We consider a baseline in which the errors are i.i.d.\ given $(A,X)$, and two relaxations:
\begin{equation}
\label{eq:parametric_three_specifications}
\begin{aligned}
V\mid(A,X)&\sim H^{\otimes T} && \text{(baseline)},\\
V_t\mid(A,X)&\sim H,\quad t=1,\ldots,T && \text{(serial dependence)},\\
V\mid X&\sim H^{\otimes T} && \text{(fixed effect--error dependence)}.
\end{aligned}
\end{equation}
The first relaxation allows serial dependence while retaining independence of each error from $(A,X)$.
The second allows dependence between the fixed effect and the errors while retaining serial independence given $X$.
Section~\ref{subsec:numerical_example1} compares their identifying information.
For the baseline specification, take $W=(X,A)$, $Z=(Y,X)$, and $\Gamma_\beta=\Pa(\mathcal X\times\R)$, leaving the input measure unrestricted.
Integrating out $V$ gives the output kernel
$$
K_\beta(\{(y,x)\}\mid x,a)=\prod_{t=1}^T H(x_t'\beta+a)^{y_t}\bigl(1-H(x_t'\beta+a)\bigr)^{1-y_t}.
$$
We incorporate the ASF $\tau_{\mathrm{ASF}}\equiv\Pr(\bar x'\beta+A-V_1\ge0)$ at a counterfactual covariate value $\bar x$ by setting $\theta=(\beta,\tau_{\mathrm{ASF}})$ and imposing the linear restriction $\EE{\gamma}{H(\bar x'\beta+A)-\tau_{\mathrm{ASF}}}=0$, which refines $\Gamma_\beta$ to $\Gamma_\theta$.
Similarly, an average treatment effect (ATE) between two counterfactual covariate values $\bar x^0$ and $\bar x^1$ can be incorporated with $\theta=(\beta,\tau_{\mathrm{ATE}})$ and $\EE{\gamma}{H\bigl((\bar x^1)'\beta+A\bigr)-H\bigl((\bar x^0)'\beta+A\bigr)-\tau_{\mathrm{ATE}}}=0$.
For the two relaxations, take $W=(X,A,V)$, use the kernel induced by~\eqref{eq:panel_binary_outcome}, and impose the corresponding distributional restriction through the admissible set.
Supplemental Appendix~\ref{app:computation_example1} develops the LIMR for each specification.
\end{example}

\begin{example}[Interval-censored covariates]
\label{ex:binary_interval}
For $t=1,\ldots,T$, let
$$
Y_t=\one\{(X_t^\star)'\beta+A-V_t\ge0\},
\qquad
X_{L,t}\le X_t^\star\le X_{U,t}.
$$
The covariate path $X^\star$ and the fixed effect $A$ are latent, and $X^\star$ is observed only through the endpoints $X=(X_L,X_U)$, the observable input, so the output is $Z=(Y,X_L,X_U)$.
Impose $V\mid(X_L,X_U,X^\star,A)\sim H^{\otimes T}$ for known $H$.
With $\theta=\beta$, take $W=(X_L,X_U,X^\star,A)$ on the input space $\mathcal W\equiv\{(x_L,x_U,x^\star,a):x_L\le x^\star\le x_U\}$ and set $\Gamma_\theta=\Pa(\mathcal W)$.
The support restriction carries the interval censoring, and the input measure is otherwise unrestricted.
Integrating out $V$ gives the output kernel: $Y$ has the probability mass function of Example~\ref{ex:binary_parametric} with $x^\star$ in place of $x$, and the endpoints in the output equal those in the input.
Additional Appendix~\ref{app:interval_results} reports identified sets for $\beta$ in a numerical example.
\end{example}

\begin{example}[Panel binary choice with sequential exogeneity]
\label{ex:binary_sequential}
Consider outcome equation~\eqref{eq:panel_binary_outcome}.
Denote by $X^t\equiv(X_1,\ldots,X_t)$ the covariate history through period $t$ and impose sequential exogeneity:
\begin{equation}
\label{eq:seq_exog}
V_t\mid (A,X^t)\overset{d}{=}V_1\mid (A,X_1),
\qquad
t=2,\ldots,T.
\end{equation}
This is the predetermined version of time homogeneity in \textcite[Assumption~3]{ChernValHahnNewey2013}: given the fixed effect, the error has the same distribution in every period, and that distribution may depend on the covariate history only through $X_1$.
Future covariates do not enter the conditioning set, so covariates may respond to past outcomes \parencite{ChernValHahnNewey2013,chamberlainFeedbackPanelData2022,bonhommeDanoGraham2023,Bonhomme2025BackToFeedback}.
The restriction is nonlinear in the distribution of $(X,V,A)$, but becomes linear once we replace $A$ by $\nu\equiv(\Pr(X=x\mid A))_{x\in\mathcal X}$, the distribution of the covariate path given the fixed effect, and $V_t$ by the composite error $\tilde V_t\equiv A-V_t$.
Write $\nu_t(x^t)\equiv\sum_{\tilde x\in\mathcal X:\,\tilde x^t=x^t}\nu(\tilde x)$ for the probability of the history $x^t$ under $\nu$.
Lemma~\ref{lem:nu_lifting} in Supplemental Appendix~\ref{app:semiparametric_binary_choice} shows that, for finite $\mathcal X$, the model has a LIMR with input $W=(X,\tilde V,\nu)$, output $Z=(Y,X)$, output operator the pushforward of $\gamma$ through the map $w\mapsto\bigl((\one\{x_t'\beta+\tilde v_t\ge0\})_{t=1}^T,x\bigr)$, and admissible set $\Gamma_\beta$ defined by
\begin{align}
&\EE{\gamma}{r(\nu)\bigl(\one\{X=x\}-\nu(x)\bigr)}=0,
\quad\text{for all }x,\ \text{and all bounded measurable }r,
\label{eq:nu_consistency}\\
&\EE{\gamma}{r(\nu)\bigl[f(\tilde V_t)\one\{X^t=x^t\}\nu_1(x_1)-f(\tilde V_1)\one\{X_1=x_1\}\nu_t(x^t)\bigr]}=0,
\label{eq:seq_lifted}\\
&\quad\text{for all }t=2,\ldots,T,\ \text{all }x^t,\ \text{and all bounded measurable }r,f.
\notag
\end{align}
Restriction~\eqref{eq:seq_lifted} is sequential exogeneity with $\nu$ in place of $A$, multiplied through by the history probabilities so that it is linear in $\gamma$.
We incorporate the ASF at a counterfactual covariate value $\bar x$ by setting $\theta=(\beta,\tau_{\mathrm{ASF}})$ and imposing $\EE{\gamma}{\one\{\bar x'\beta+\tilde V_1\ge0\}-\tau_{\mathrm{ASF}}}=0$ in addition to the restrictions defining $\Gamma_\beta$.
Section~\ref{subsec:numerical_semiparametric} compares the identified sets with those under conditional stationarity.
\end{example}

\begin{example}[Dynamic discrete choice and dynamic panels]
\label{ex:dynamic_state_dependence}
\textcite{honoreBoundsParametersPanel2006} study the dynamic panel model with outcome equation
$$
Y_t=\one\{X_t'\beta+\rho Y_{t-1}+A+V_t\ge0\}, \qquad t=1,\ldots,T,
$$
where the errors $V_t$ are i.i.d.\ standard normal variables independent of $(X,A,Y_0)$, and the binary initial condition $Y_0$ is latent.
With $\theta=(\beta,\rho)$, take $W=(X,A,Y_0)$, $Z=((Y_1,\ldots,Y_T),X)$, and $\Gamma_\theta=\Pa(\mathcal X\times\R\times\{0,1\})$, leaving the distribution of $(A,Y_0)$ given $X$ unrestricted.
Integrating out $V$ gives the output kernel
$$
K_\theta(\{(y,x)\}\mid x,a,y_0)=\prod_{t=1}^T \Phi(x_t'\beta+\rho y_{t-1}+a)^{y_t}\bigl(1-\Phi(x_t'\beta+\rho y_{t-1}+a)\bigr)^{1-y_t},
$$
where $\Phi$ denotes the standard normal CDF.

The same construction extends immediately to the habit persistence specification of \textcite{heckmanStatisticalModelsDiscrete1981}, in which the lagged latent index replaces $Y_{t-1}$ in the outcome equation: $Y_t=\one\{Y_t^*\ge0\}$ with $Y_t^*=X_t'\beta+\rho Y_{t-1}^*+A+V_t$.
The latent input component becomes $(A,Y_0^*)$, and the output kernel is given by multivariate normal orthant probabilities.
\end{example}

\begin{example}[Simultaneous entry game with multiple equilibria]
\label{ex:entry_game_main}
Firms $j=1,2$ simultaneously choose entry actions $y_j\in\{0,1\}$.
Firm $j$ receives payoff $y_j(\delta_jy_{-j}+V_j)$, where $\theta=(\delta_1,\delta_2)\in(-\infty,0)^2$ collects the spillover parameters.
Both firms observe the error terms $V=(V_1,V_2)$ before play, and $V$ follows a known joint CDF $H$ with continuous marginals.
The game has three Nash equilibria when $0<V_j<-\delta_j$ for both firms, so the model is incomplete \parencite{Tamer2003}.
We complete the model with an equilibrium selector $S\in\{1,2,3\}$ that indexes those equilibria.
Take $W=(V,S)$ and $Z=Y=(Y_1,Y_2)$, the realized action profile, since $\mathcal X$ is a singleton.
The output kernel $K_\theta(\cdot\mid v,s)$ is the outcome distribution of the equilibrium that $s$ selects at $v$.
The admissible set $\Gamma_\theta$ imposes the moment equalities $\EE{\gamma}{\one\{V\le c\}-H(c)}=0$ for all $c\in\R^2$, ensuring that $V$ has CDF $H$.
It leaves the conditional distribution of $S$ given $V$ unrestricted, thereby spanning every equilibrium selection mechanism \parencite{BerryTamer2007}.
Proposition~\ref{prop:entry_equivalence} in Supplemental Appendix~\ref{app:entry_game} shows, for bivariate standard normal errors, that the identified set coincides with the sharp identification region of \textcite{beresteanuSharpIdentificationRegions2011}.
\end{example}

\begin{example}[Randomized trial with imperfect compliance]
\label{ex:imperfect_compliance}
Let $R\in\{0,1\}$ be a randomized assignment, $D\in\{0,1\}$ the treatment received, and $Y\in\{0,1\}$ the outcome, allowing $D\ne R$ \parencite{balkePearlBoundsTreatmentEffects1997}.
Write $D(r)$ for treatment under assignment $r$ and $Y(d)$ for the outcome under treatment $d$, and collect the latent response type as $U=(Y(0),Y(1),D(0),D(1))\in\{0,1\}^4$.
Take the observable input to be $X=R$ and the model outcome to be $(Y,D)$, so $W=(R,U)$ and $Z=(Y,D,R)$.
The output kernel maps $(r,u)$, with $u=(y(0),y(1),d(0),d(1))$, to $(y(d(r)),d(r),r)$.
This mapping incorporates the exclusion restriction.
Let $\theta=\tau$ denote the ATE.
Random assignment and the target enter $\Gamma_\theta$ through the linear equalities
$$
\begin{aligned}
\EE{\gamma}{\bigl(\one\{R=1\}-p\bigr)h(U)}&=0
&&\text{for all bounded measurable }h,\\
\EE{\gamma}{Y(1)-Y(0)-\tau}&=0,
\end{aligned}
$$
where $p\equiv\Pr(R=1)$ is known by design.
Because $W$ takes finitely many values, the identified set for $\tau$ is the interval between the sharp bounds of \textcite{balkePearlBoundsTreatmentEffects1997}, which solve the same linear program over the distribution of $U$.
Restrictions beyond those of \textcite{balkePearlBoundsTreatmentEffects1997} enter directly: monotonicity $D(1)\ge D(0)$ is the moment equality $\EE{\gamma}{\one\{D(0)=1,\ D(1)=0\}}=0$, which rules out defiers.
\end{example}

\section{Identification}
\label{sec:identification}

We first define the identified set, then construct the ADF $T(\theta)$ via a separation argument.
Our main result shows that $\theta$ belongs to the identified set if and only if $T(\theta)=0$.

The econometrician observes the true probability measure $\mu^*$ of the output $Z$.
For each $\theta\in\Theta$, let $\mathcal M_\theta\subseteq\Pa(\mathcal Z)$ denote the set of probability measures of $Z$ that the model can generate at $\theta$.
If the model has a LIMR, this set can be written as
\begin{equation}
    \label{eq:model_probabilities}
    \mathcal M_\theta = \Lgen\Gamma_\theta = \{\mu_{\theta,\gamma}:\gamma\in\Gamma_\theta\}.
\end{equation}
Denote the TV closure of $\mathcal M_\theta$ by
\begin{equation}
    \label{eq:closure_definition}
    \Mtheta
    \equiv
    \Bigl\{
     m \in \Pa(\mathcal{Z}) :
     \forall \epsilon > 0,\ \exists\, \gamma \in \Gamma_\theta,
     \ \norm{\mu_{\theta,\gamma} - m}_{\mathrm{TV}} < \epsilon \Bigr\}.
\end{equation}
We define the identified set as
\begin{equation}
    \label{eq:identified_set}
    \Thetaw \equiv
    \{\theta \in \Theta :
    \mu^*\in\Mtheta\}.
\end{equation}

The closure adds probability measures that can be approximated arbitrarily closely in TV by measures in $\mathcal M_\theta$.%
\footnote{Identification using closures also arises in \textcite{schennachEntropicLatentVariable2014} and \textcite{li2026}.
The former takes the closure of attainable expected moment values, while the latter characterizes the moment closure of the identified set.
Both closures are taken in finite-dimensional moment space rather than in the space of output measures used here.}
When $\mathcal M_\theta$ is already TV closed, $\mathcal M_\theta=\Mtheta$ and the closure is redundant.
This is the case in Example~\ref{ex:binary_sequential}, as shown in Proposition~\ref{prop:se_finite_support} in Supplemental Appendix~\ref{app:semiparametric_binary_choice}.
By contrast, $\mathcal M_\theta\subsetneq\Mtheta$ in the logit specification of Example~\ref{ex:binary_parametric}.
For a fixed covariate path $x$, the point mass at $(Y,X)=((1,\ldots,1),x)$ is the TV limit of output measures as the fixed effect diverges to $+\infty$, but no admissible input measure generates it (Remark~\ref{rem:closure_role}).

Under i.i.d.\ sampling, the boundary probability measures added by the TV closure $\Mtheta$ cannot be statistically distinguished from the measures in $\mathcal M_\theta$: any test whose size is controlled uniformly over $\mathcal M_\theta$ has power no greater than size against a measure in $\Mtheta\setminus\mathcal M_\theta$; see Remark~\ref{rem:tv_indistinguishability}.

We construct the ADF from the LIMR and Assumptions~\ref{asm:measurable} and~\ref{asm:domination}.
The LIMR makes $\mathcal M_\theta$ convex, so its TV closure $\Mtheta$ is closed and convex, while Assumption~\ref{asm:domination} represents every element of $\Mtheta$ by a density in $L^1(\lambda_\theta)$.
The associated separating hyperplane argument yields $\mu^*\notin\Mtheta$ if and only if there is a bounded measurable test function $\phi$ such that $\EE{\mu^*}{\phi}-\sup_{\mu\in\Mtheta}\EE{\mu}{\phi}>0$.
By $L^1$--$L^\infty$ duality, every continuous linear functional on $L^1(\lambda_\theta)$ is integration against a bounded measurable function, so no larger class of test functions is needed.
Because positive rescaling and translation preserve the sign of the separating gap, we normalize this full class as
\begin{equation}
    \Phi(\mathcal{Z})
    \equiv
    \{\phi\colon\mathcal{Z}\to[0,1]\text{ measurable}\}.
    \label{eq:phi_class}
\end{equation}

The ADF is then the value of this separation problem.
The outer supremum maximizes the separating gap over $\Phi(\mathcal Z)$, while the inner infimum searches over the admissible input measures $\gamma\in\Gamma_\theta$:%
\footnote{We use the convention $\inf\varnothing=+\infty$, so parameter values $\theta$ with $\Gamma_\theta=\varnothing$ are automatically excluded.}
\begin{align}
    T(\theta)
    &\equiv
    \sup_{\phi \in \Phi(\mathcal{Z})} \inf_{\gamma \in \Gamma_\theta}
        \Bigl( \EE{\mu^*}{\phi} - \EE{\Lgen\gamma}{\phi} \Bigr)
    \label{eq:discrepancy_function_input}
    \\
    &=
    \sup_{\phi \in \Phi(\mathcal{Z})} \inf_{\mu \in \Mtheta}
        \Bigl( \EE{\mu^*}{\phi} - \EE{\mu}{\phi} \Bigr).
    \label{eq:discrepancy_function_model}
\end{align}
For each $\phi\in\Phi(\mathcal Z)$, the gap $\mu\mapsto\EE{\mu^*}{\phi}-\EE{\mu}{\phi}$ is continuous in TV.
Its infimum over $\mathcal M_\theta=\Lgen\Gamma_\theta$ therefore equals its infimum over the closure $\Mtheta$, which gives the second equality.

Let $\Thetam$ denote the zero set of $T(\theta)$:
$$\Thetam \equiv \{\theta \in \Theta : T(\theta)=0\}.$$
\begin{thm}
    \label{thm:main_sharpness}
    Under Assumptions~\ref{asm:measurable}--\ref{asm:domination}, the identified set is
    \begin{equation}
        \Thetaw = \Thetam.
    \end{equation}
\end{thm}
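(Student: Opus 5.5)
We prove the two inclusions separately, working throughout with the representation \eqref{eq:discrepancy_function_model} of $T(\theta)$ as a supremum over $\Phi(\mathcal Z)$ of an infimum over $\Mtheta$. Parameter values with $\Gamma_\theta=\varnothing$ need a word first. For such $\theta$, $T(\theta)=+\infty$ by the convention $\inf\varnothing=+\infty$. Also $\Mtheta=\varnothing$, so $\theta\notin\Thetaw$ and $\theta\notin\Thetam$. Hence we fix $\theta$ with $\Gamma_\theta\neq\varnothing$, so that $T(\theta)\in[0,1]$. The value is nonnegative because $\phi\equiv 0$ gives gap zero, and it is at most one because every gap lies in $[-1,1]$.

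\emph{Step 1 ($\Thetaw\subseteq\Thetam$).} Suppose $\mu^*\in\Mtheta$. For every $\phi\in\Phi(\mathcal Z)$, taking $\mu=\mu^*$ in the inner infimum of \eqref{eq:discrepancy_function_model} gives a value at most $0$. Therefore $T(\theta)\le 0$, and with $T(\theta)\ge0$ we get $T(\theta)=0$. This direction uses only the equality of \eqref{eq:discrepancy_function_input} and \eqref{eq:discrepancy_function_model}, which follows from TV continuity of $\mu\mapsto\EE{\mu}{\phi}$.

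\emph{Step 2 ($\Thetam\subseteq\Thetaw$), by contraposition.} Suppose $\mu^*\notin\Mtheta$. The argument proceeds in the following order.

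(a) $\mathcal M_\theta$ is convex. Assumption~\ref{asm:moment_conditions} makes $\Gamma_\theta$ convex, because each restriction is linear in $\gamma$ and integrability is preserved under mixtures. The map $\Lgen$ is affine by \eqref{eq:kernel_operator}. Since the TV distance between probability measures is half the total variation norm, the TV closure of a convex set is convex, so $\Mtheta$ is convex.

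(b) Embed into $L^1(\lambda_\theta)$. By Assumption~\ref{asm:domination}, every element of $\mathcal M_\theta$ has a density with respect to $\lambda_\theta$. TV limits of measures dominated by $\lambda_\theta$ remain dominated by $\lambda_\theta$, so every element of $\Mtheta$ also has a density. The map $m\mapsto \d m/\d\lambda_\theta$ is an isometry, up to a factor of $2$, from TV onto the $L^1$ norm. Thus the image $D$ of $\Mtheta$ is a closed convex subset of $L^1(\lambda_\theta)$.

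(c) Handle $\mu^*$ not dominated by $\lambda_\theta$. In that case there is a set $B$ with $\lambda_\theta(B)=0$ and $\mu^*(B)>0$. Take $\phi=\one_B$. Then $\EE{\mu^*}{\phi}-\EE{\mu}{\phi}=\mu^*(B)>0$ for every $\mu\in\Mtheta$, so $T(\theta)>0$. Otherwise $\mu^*$ has a density $f^*\notin D$.

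(d) Separate. Apply Hahn--Banach strict separation of the point $f^*$ from the closed convex set $D$ in the Banach space $L^1(\lambda_\theta)$. This yields a continuous linear functional with a strictly positive gap. Because $\lambda_\theta$ is $\sigma$-finite, $(L^1)^*=L^\infty$, so the functional is integration against some bounded measurable $\psi$ with
$$\int\psi f^*\,\d\lambda_\theta-\sup_{f\in D}\int\psi f\,\d\lambda_\theta=c>0.$$

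(e) Normalize. Set $\phi=(\psi+\|\psi\|_\infty)/(2\|\psi\|_\infty)\in\Phi(\mathcal Z)$, after choosing a bounded measurable version of $\psi$. Translation by a constant cancels in the gap because all measures involved are probability measures. The gap of $\phi$ is therefore $c/(2\|\psi\|_\infty)>0$, which gives $T(\theta)>0$.

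\emph{Main obstacle.} The step I expect to need the most care is (b), together with the domination issue in (c). Two points must be checked. First, domination must pass to the TV closure: if $\lambda_\theta(B)=0$, then $m(B)=\lim\mu_n(B)=0$. Second, the closure must be taken consistently in $\Pa(\mathcal Z)$ versus $L^1$. A density limit is nonnegative and integrates to one, so the $L^1$ closure of the image of $\mathcal M_\theta$ equals the image of $\Mtheta$. The duality step requires $\sigma$-finiteness, which Assumption~\ref{asm:domination} supplies.
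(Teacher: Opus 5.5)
Your proof is correct. Its skeleton matches the paper's: you dispose of $\Gamma_\theta=\varnothing$, show convexity of $\Mtheta$, show that domination passes to TV limits, split into the cases $\mu^*\not\ll\lambda_\theta$ and $\mu^*\ll\lambda_\theta$ with the indicator $\one_B$ handling the first, and pass to densities in $L^1(\lambda_\theta)$ with TV equal to half the $L^1$ distance.

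The difference is the key step in the dominated case. You strictly separate $f^*$ from the closed convex set $D$ by Hahn--Banach. You then represent the functional by a bounded $\psi$ via $(L^1)^*=L^\infty$ and rescale it into $\Phi(\mathcal Z)$. The paper instead notes that $\tilde\Phi$ is weak-* compact (Banach--Alaoglu) and applies Sion's minimax theorem to exchange the supremum and infimum. This yields the identity $T(\theta)=\inf_{\mu\in\Mtheta}\norm{\mu^*-\mu}_{\mathrm{TV}}$, and membership then follows from TV-closedness of $\Mtheta$. The paper itself remarks that the direct separating-hyperplane proof is available and that it chose the minimax route deliberately.

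Your route is more elementary: it needs no compactness of the test-function class and no minimax theorem, and it delivers exactly the qualitative statement the theorem requires. The paper's route buys the quantitative interpretation of $T(\theta)$ as the TV distance from $\mu^*$ to the model set, which Remark~\ref{rem:ipm} and the introduction rely on. Closedness enters both proofs at the corresponding point. You need $D$ closed in $L^1$ for strict separation, which you verify. The paper needs $\Mtheta$ closed to turn zero distance into membership.
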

\begin{proof}
See Appendix~\ref{app:proofs}.
\end{proof}

Identification can be viewed as a separation problem in the space of probability measures on $\mathcal Z$.
By \eqref{eq:discrepancy_function_model}, $T(\theta)=0$ exactly when no bounded test function separates $\mu^*$ from $\Mtheta$; Theorem~\ref{thm:main_sharpness} shows that this happens exactly when $\theta\in\Thetaw$.\footnote{The value $T(\theta)$ is nonnegative because $\phi\equiv0$ is feasible.}
The separation argument uses only two properties of the set of output measures: convexity and common domination.
The LIMR supplies the first and Assumption~\ref{asm:domination} the second, and the LIMR also makes the model's response to each test function a linear problem in the input measure.
Evaluating $T(\theta)$ therefore does not require characterizing $\Mtheta$.
For computation and inference, we instead use \eqref{eq:discrepancy_function_input}, which works with the input measure $\gamma$.

\subsection{Discussion}
\label{sec:discussion_discrepancy}

\begin{remark}[Role of the closure]
\label{rem:closure_role}
Consider the baseline specification of
Example~\ref{ex:binary_parametric}, where $\theta=\beta$, $\Gamma_\beta=\Pa(\mathcal X\times\R)$, and $H$ has full support on $\R$, as in the logit case.
Fix a covariate path $x$ and let $\mu^*$ assign probability one to $(Y,X)=((1,\ldots,1),x)$.
For the admissible input measures $\gamma_a\equiv\delta_{(x,a)}$, the output measures satisfy $\mu_{\beta,\gamma_a}(\{((1,\ldots,1),x)\})=\prod_{t=1}^T H(x_t'\beta+a)\to1$ as $a\to+\infty$, so $\mu_{\beta,\gamma_a}\to\mu^*$ in TV and $\mu^*\in\overline{\mathcal M}_\beta$ for every $\beta$.
No admissible input measure generates $\mu^*$, because $\EE{\gamma}{\prod_{t=1}^T H(X_t'\beta+A)}<1$ for every $\gamma\in\Gamma_\beta$.
The limiting output measure here is the one that would be generated if $A$ were allowed to take the boundary value $+\infty$.
More generally, allowing the boundary values $\pm\infty$ would let $\Pr(Y=(1,\ldots,1)\mid X=x)$ range over $[0,1]$ rather than $(0,1)$.
This matters when $\mu^*$ is itself such a boundary point, as in this example: the closure yields $\Thetaw=\Theta$ rather than the empty set.
In this example, that inclusiveness is desirable, because a population with no outcome variation cannot distinguish values of $\beta$.

For this baseline specification, the closed set of output measures can be obtained by adjoining $\pm\infty$ to the support of $A$ and extending the output kernel continuously to those values.
We instead take the closure in output space, which avoids imposing model-specific conditions guaranteeing a closed set of output measures.
\end{remark}

\begin{remark}[Statistical indistinguishability of TV-closure points]
\label{rem:tv_indistinguishability}
Fix $\theta\in\Theta$ and $n\in\mathbb N$, and suppose that
$Z_1,\ldots,Z_n$ are i.i.d.
Let $\psi_n\colon\mathcal Z^n\to[0,1]$ be any possibly randomized test satisfying
\begin{equation}
    \sup_{\nu\in\mathcal M_\theta}
    \EE{\nu^{\otimes n}}{\psi_n}
    \leq \alpha .
    \label{eq:uniform_size_exact_model}
\end{equation}
Then, for every $\mu\in\Mtheta$,
\begin{equation}
    \EE{\mu^{\otimes n}}{\psi_n}
    \leq \alpha .
    \label{eq:closure_indistinguishability}
\end{equation}
Hence, if $\mu\in\Mtheta\setminus\mathcal M_\theta$, no test whose size is at most $\alpha$ uniformly over $\mathcal M_\theta$ rejects $\mu$ with probability above $\alpha$.\footnote{To see~\eqref{eq:closure_indistinguishability}, fix $\mu\in\Mtheta$ and choose
$\nu_j\in\mathcal M_\theta$ with
$\norm{\nu_j-\mu}_{\mathrm{TV}}\to0$. Then, for fixed $n$,
$$
    \norm{\nu_j^{\otimes n}-\mu^{\otimes n}}_{\mathrm{TV}}
    \leq
    n\norm{\nu_j-\mu}_{\mathrm{TV}}
    \longrightarrow0.
$$
Since $0\leq\psi_n\leq1$,
$\EE{\nu_j^{\otimes n}}{\psi_n}
\to\EE{\mu^{\otimes n}}{\psi_n}$, and
\eqref{eq:closure_indistinguishability} follows from
\eqref{eq:uniform_size_exact_model}.
The result applies only to $\mu\in\Mtheta$ and gives no impossibility
statement for $\mu\notin\Mtheta$.}
This TV-closure indistinguishability argument is standard in the literature on impossible inference; see, e.g., \textcite{BERTANHA2020247}.
Relatedly, \textcite{BaiPonomarevSantosShaikhTabordMeehanTorgovitsky2026} characterize the TV closure of the null hypothesis in their analysis of partially identified linear systems.
When $\mathcal Z$ is finite, $\Mtheta$ is simply the Euclidean closure of the feasible set of cell-probability vectors.
\end{remark}

\begin{remark}[Point-to-set integral probability metric]
\label{rem:ipm}
The ADF can be interpreted as an extension of an integral probability metric (IPM) from two fixed probability measures to a point-to-set discrepancy.
More generally, for any class $\mathcal F$ for which the expectations are well defined, define the corresponding point-to-set discrepancy:
$$
T_{\mathcal F}(\theta)
\equiv
\sup_{\phi\in\mathcal F}\inf_{\gamma\in\Gamma_\theta}
\Bigl(
\EE{\mu^*}{\phi}-\EE{\Lgen\gamma}{\phi}
\Bigr).
$$
For two fixed probability measures, different choices of $\mathcal F$ give familiar discrepancies, e.g., the unit ball of a reproducing kernel Hilbert space gives maximum mean discrepancy \parencite{Gretton2012}, while the $1$-Lipschitz functions give the $1$-Wasserstein distance.
The corresponding max--min criterion equals the infimum of these pairwise discrepancies over the model set when the relevant minimax conditions hold.
We establish this equality for TV under our maintained assumptions.

The separation argument above, which uses Assumptions~\ref{asm:measurable}--\ref{asm:domination}, \emph{selects} $\mathcal F = \Phi(\mathcal Z)$.
More regular classes require additional topological, metric, or kernel structure on $\mathcal Z$ that the construction of $T(\theta)$ does not use.
If $\mathcal F\subseteq\Phi(\mathcal Z)$ and $0\in\mathcal F$, then
$0\leq T_{\mathcal F}(\theta)\leq T(\theta)$, so restricting the class of test functions can only enlarge the zero set.

For $\Phi(\mathcal Z)$ in \eqref{eq:phi_class}, convexity of $\Mtheta$ and the minimax argument in the proof of Theorem~\ref{thm:main_sharpness} give:
$$
T(\theta)
=
\inf_{\mu\in\Mtheta}
\sup_{\phi\in\Phi(\mathcal Z)}
\Bigl(
\EE{\mu^*}{\phi}-\EE{\mu}{\phi}
\Bigr)
=
\inf_{\mu\in\Mtheta}
\norm{\mu^*-\mu}_{\mathrm{TV}}.
$$
Thus, $T(\theta)$ is the TV distance from $\mu^*$ to $\Mtheta$.
We do not use this representation for computation or inference.

The IPM representation shows that $T(\theta)$ is a supremum of linear expectation differences over the class of test functions.
Combined with the LIMR, this makes the objective bilinear in $(\phi, \gamma)$, leading to the finite LPs of Section~\ref{sec:computation}.
Since TV is an $f$-divergence, $T(\theta)$ is the distance from $\mu^*$ to $\Mtheta$ in an $f$-divergence.
Inference and model specification using $f$-divergences within the point-to-set geometry appear in, e.g., \textcite{KitamuraStutzer1997,kaidoMolinari2025}.
While $T(\theta)$ retains the connection to the divergence-based literature, it exploits the linear structure of both the IPM representation and the LIMR.
\end{remark}

\section{Computation}
\label{sec:computation}

Section~\ref{sec:identification} characterizes $\Thetaw$ through the zeros of $T(\theta)$, but evaluating $T(\theta)$ requires optimization over all test functions and all admissible input measures.
We construct an auxiliary finite linear program by restricting the input measure to finite support and retaining finitely many moment restrictions.
Because the program omits input values and moment restrictions, its value need not equal $T(\theta)$.
The program is auxiliary because $T(\theta)$ remains the computational target.

Section~\ref{subsec:computation_main} gives two conditions under which the auxiliary program is exact.
Column certification requires an optimal input measure of the program to satisfy the full family of moment restrictions, and row certification requires an optimal solution to satisfy its constraint at every input value, not only at the finitely many imposed.
Together they give $T_{\mathrm{LP}}(\theta)=T(\theta)$ (Theorem~\ref{thm:exact_computation}).
Section~\ref{subsec:computation_exchange} shows that a verdict on $\theta$ needs less: column certification alone bounds $T(\theta)$ from above, a bound on the row residual bounds it from below (Theorem~\ref{thm:enclosure}), and a column-and-row generation algorithm iteratively enlarges the finite sets, tightening the two bounds.

We maintain Assumptions~\ref{asm:measurable}--\ref{asm:linear_operator} and replace Assumption~\ref{asm:domination} by:
{\renewcommand{\theasm}{4\ensuremath{'}}%
\def\theHasm{4prime}%
\begin{asm}
\label{asm:finite_Z}
The observable space $\mathcal Z$ is finite and every subset of $\mathcal Z$ is measurable.
\end{asm}}\setcounter{asm}{4}%
The assumption restricts only the observable space and imposes no restriction or topology on $\mathcal W$.
It makes $\Phi(\mathcal Z)=[0,1]^{\mathcal Z}$ a convex polytope, so the outer supremum in $T(\theta)$ is finite dimensional.
The remaining sources of infinite dimensionality are the input measure $\gamma$ and the possibly infinite family of moment restrictions defining $\Gamma_\theta$.
When both the outcomes and the covariates are discrete, point identification typically fails in nonlinear panels with fixed effects \parencite{Chamberlain2010}, so this is the setting in which a characterization of the identified set matters most.

\subsection{Main result}
\label{subsec:computation_main}

Fix $\theta\in\Theta$.
For a nonempty finite set $\mathcal W'\subseteq\mathcal W$ and finite sets
$\mathcal J'\subseteq\mathcal J$ and $\mathcal K'\subseteq\mathcal K$, define the \emph{restricted discrepancy}
\begin{align}
\TLD{\theta}{\mathcal W'}{\mathcal J',\mathcal K'}
&\equiv
\sup_{\phi\in[0,1]^{\mathcal Z}}
\inf_{\gamma_p\in\Gamma_\theta(\mathcal W',\mathcal J',\mathcal K')}
\Bigl(
\EE{\mu^*}{\phi}-\EE{\Lgen\gamma_p}{\phi}
\Bigr),
\label{eq:T_restricted}
\end{align}
where $\Gamma_\theta(\mathcal W',\mathcal J',\mathcal K')$ consists of finite-support input measures
$\gamma_p\equiv\sum_{w\in\mathcal W'}p_w\delta_w$, with probability weights
$p=(p_w)_{w\in\mathcal W'}$, satisfying the retained moment restrictions:
\begin{equation}
\label{eq:restricted_gamma}
\Gamma_\theta(\mathcal W',\mathcal J',\mathcal K')
\equiv
\left\{
\gamma_p:
\begin{aligned}
p&\in\R^{(\mathcal W')}_+, &&\textstyle\sum_{w\in\mathcal W'}p_w=1,\\
\EE{\gamma_p}{g_{1,j}(W;\theta)}&=0, &&\quad\text{for all } j\in\mathcal J',\\
\EE{\gamma_p}{g_{2,k}(W;\theta)}&\le0, &&\quad\text{for all } k\in\mathcal K'
\end{aligned}
\right\}.
\end{equation}
The definitions in~\eqref{eq:T_restricted} and~\eqref{eq:restricted_gamma} also apply to the full index sets $\mathcal J$ and $\mathcal K$.
Because $\mathcal W'$, $\mathcal J'$, and $\mathcal K'$ are finite, the inner infimum in~\eqref{eq:T_restricted} is a finite-dimensional linear program.
Dualizing it gives the auxiliary finite linear program
\begin{align}
\TLD{\theta}{\mathcal W'}{\mathcal J',\mathcal K'}
&=
\sup_{\substack{\phi\in[0,1]^{\mathcal Z},\ \zeta\in\R\\
u\in\R^{(\mathcal J')},\ v\in\R^{(\mathcal K')}_{+}}}
\left\{\EE{\mu^*}{\phi}-\zeta\right\}
\label{eq:TLD_restricted}
\\
\quad\text{subject to}\quad
q_{\phi,\theta}(w)
&\le
\zeta
+
\sum_{j\in\mathcal J'}u_j g_{1,j}(w;\theta) +
\sum_{k\in\mathcal K'}v_k g_{2,k}(w;\theta)
\quad\text{for all }w\in\mathcal W',
\nonumber
\end{align}
where the \emph{input-space payoff}
$
q_{\phi,\theta}(w)
\equiv
\EE{\Lgen\delta_w}{\phi}
=
\sum_{z\in\mathcal Z}\phi(z)\,K_\theta(\{z\}\mid w)
$
is the mean of $\phi$ implied by the model at input $w$.
For fixed $\mathcal W'$, $\mathcal J'$, and $\mathcal K'$, write
$$
T_{\mathrm{LP}}(\theta)\equiv\TLD{\theta}{\mathcal W'}{\mathcal J',\mathcal K'}.
$$

Restricting the input measure to $\mathcal W'$ weakly raises $T_{\mathrm{LP}}(\theta)$, whereas omitting moment restrictions weakly lowers it, so $T_{\mathrm{LP}}(\theta)$ need not bound $T(\theta)$ in either direction.
In~\eqref{eq:TLD_restricted}, each input value $w\in\mathcal W'$ contributes a constraint, or row, and each retained moment restriction contributes a multiplier, or column.

Column certification controls the error from omitting moment restrictions.
\begin{defi}[Column certification]
\label{def:column_certified}
The pair $(\mathcal J',\mathcal K')$ is \emph{column-certified for $\mathcal W'$} if $\Gamma_\theta(\mathcal W',\mathcal J,\mathcal K)$ is nonempty and
\begin{equation}
\label{eq:column_certificate}
\TLD{\theta}{\mathcal W'}{\mathcal J',\mathcal K'}
=
\TLD{\theta}{\mathcal W'}{\mathcal J,\mathcal K}.
\end{equation}
\end{defi}
Under column certification, imposing the omitted moment restrictions in $\mathcal J\setminus\mathcal J'$ and $\mathcal K\setminus\mathcal K'$ leaves the value at $\mathcal W'$ unchanged.
Equivalently, $(\mathcal J',\mathcal K')$ is column-certified for $\mathcal W'$ when some optimal $p^*$ of the inner program in~\eqref{eq:T_restricted} gives an input measure $\gamma_{p^*}$ that satisfies the full family of moment restrictions.
Because the admissible input measures supported on $\mathcal W'$ are a subset of $\Gamma_\theta$, column certification implies $T(\theta)\le T_{\mathrm{LP}}(\theta)$, leaving only the error from the omitted input values.
In Examples~\ref{ex:binary_parametric} and~\ref{ex:binary_sequential}, column certification can be established directly and requires no additional optimization (Section~\ref{subsec:computation_exchange}).

Write the slack of~\eqref{eq:TLD_restricted} at $w\in\mathcal W$ as
\begin{equation}
\label{eq:slack}
s(w;\phi,\zeta,u,v)
\equiv
\zeta
+
\sum_{j\in\mathcal J'}u_j g_{1,j}(w;\theta)
+
\sum_{k\in\mathcal K'}v_k g_{2,k}(w;\theta)
-
q_{\phi,\theta}(w),
\end{equation}
so that the constraint in~\eqref{eq:TLD_restricted} is
$s(w;\phi,\zeta,u,v)\ge0$ for all $w\in\mathcal W'$.
For an optimizer $(\phi^*,\zeta^*,u^*,v^*)$, define the \emph{row residual}
\begin{equation}
\label{eq:row_oracle_residual}
r^*
\equiv
\inf_{w\in\mathcal W}
s(w;\phi^*,\zeta^*,u^*,v^*) \leq 0.
\end{equation}
If $r^*=0$, the constraint holds at every input value in $\mathcal W$, not only on $\mathcal W'$.

\begin{defi}[Row certification]
\label{def:row_certified}
The support $\mathcal W'$ is \emph{row-certified for $(\mathcal J',\mathcal K')$} if the row residual~\eqref{eq:row_oracle_residual} equals zero at some optimizer of~\eqref{eq:TLD_restricted}.
\end{defi}

\begin{thm}[Exact computation]
\label{thm:exact_computation}
Under Assumptions~\ref{asm:measurable}--\ref{asm:linear_operator} and~\ref{asm:finite_Z}, if $(\mathcal J',\mathcal K')$ is column-certified for $\mathcal W'$ and $\mathcal W'$ is row-certified for $(\mathcal J',\mathcal K')$, then
\begin{equation}
\label{eq:exact_computation}
T_{\mathrm{LP}}(\theta)=T(\theta).
\end{equation}
\end{thm}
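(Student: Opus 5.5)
We prove the two inequalities $T(\theta)\le T_{\mathrm{LP}}(\theta)$ and $T(\theta)\ge T_{\mathrm{LP}}(\theta)$ separately. The first uses column certification and the second uses row certification.

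\emph{Upper bound.} Column certification gives $T_{\mathrm{LP}}(\theta)=\TLD{\theta}{\mathcal W'}{\mathcal J,\mathcal K}$. Every $\gamma_p\in\Gamma_\theta(\mathcal W',\mathcal J,\mathcal K)$ is a probability measure on $\mathcal W$ that satisfies all moment restrictions, so $\Gamma_\theta(\mathcal W',\mathcal J,\mathcal K)\subseteq\Gamma_\theta$, and this set is nonempty by Definition~\ref{def:column_certified}. For each $\phi\in[0,1]^{\mathcal Z}$, an infimum over a smaller set is larger. Hence
\[
\inf_{\gamma\in\Gamma_\theta}\bigl(\EE{\mu^*}{\phi}-\EE{\Lgen\gamma}{\phi}\bigr)\le\inf_{\gamma_p\in\Gamma_\theta(\mathcal W',\mathcal J,\mathcal K)}\bigl(\EE{\mu^*}{\phi}-\EE{\Lgen\gamma_p}{\phi}\bigr).
\]
Under Assumption~\ref{asm:finite_Z}, $\Phi(\mathcal Z)=[0,1]^{\mathcal Z}$. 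Taking the supremum over $\phi$ on both sides gives $T(\theta)\le T_{\mathrm{LP}}(\theta)$.

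\emph{Lower bound.} Take an optimizer $(\phi^*,\zeta^*,u^*,v^*)$ of the dual program \eqref{eq:TLD_restricted} with row residual $r^*=0$. The dual optimum is attained because a feasible finite LP with a bounded value attains it; if some step needs it, we instead work with an optimizer as supplied by Definition~\ref{def:row_certified}. Now fix any $\gamma\in\Gamma_\theta$. Since $r^*=0$, the slack satisfies $s(w;\phi^*,\zeta^*,u^*,v^*)\ge 0$ for every $w\in\mathcal W$. We integrate this pointwise inequality against $\gamma$. Because the sums over $\mathcal J'$ and $\mathcal K'$ are finite and each $g$ is $\gamma$-integrable (footnote to Assumption~\ref{asm:moment_conditions}), we obtain
\[
\int q_{\phi^*,\theta}\,\d\gamma\le\zeta^*+\sum_{j\in\mathcal J'}u_j^*\,\EE{\gamma}{g_{1,j}}+\sum_{k\in\mathcal K'}v_k^*\,\EE{\gamma}{g_{2,k}}\le\zeta^*.
\]
The last step uses $\EE{\gamma}{g_{1,j}}=0$ and $v_k^*\ge0$ with $\EE{\gamma}{g_{2,k}}\le0$.

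Next we identify the left side. By \eqref{eq:kernel_operator} and finiteness of $\mathcal Z$,
\[
\int q_{\phi^*,\theta}\,\d\gamma=\sum_{z\in\mathcal Z}\phi^*(z)\,(\Lgen\gamma)(\{z\})=\EE{\Lgen\gamma}{\phi^*}.
\]
Therefore $\EE{\mu^*}{\phi^*}-\EE{\Lgen\gamma}{\phi^*}\ge\EE{\mu^*}{\phi^*}-\zeta^*=T_{\mathrm{LP}}(\theta)$ for every $\gamma\in\Gamma_\theta$. Taking the infimum over $\gamma$ and then using $\phi^*$ as a candidate in the outer supremum gives $T(\theta)\ge T_{\mathrm{LP}}(\theta)$. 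This is weak duality for the full semi-infinite problem, certified by a single feasible dual point.

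\emph{Main obstacle.} The only delicate step is the equality between \eqref{eq:T_restricted} and its dual \eqref{eq:TLD_restricted}, together with attainment of an optimizer. The lower bound depends on the dual form, while column certification is stated through the primal value. For fixed $\phi$, the inner problem is a finite-dimensional LP over $p$. Its feasible set is nonempty because $\Gamma_\theta(\mathcal W',\mathcal J,\mathcal K)\subseteq\Gamma_\theta(\mathcal W',\mathcal J',\mathcal K')$ and the former is nonempty. Its value is bounded because $\phi\in[0,1]$. So LP strong duality holds for each $\phi$, and the joint supremum over $(\phi,\zeta,u,v)$ is itself a finite LP with finite value, so an optimizer exists. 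The paper already asserts this dualization, so we only need to invoke it.
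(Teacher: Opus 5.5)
Your proof is correct and follows essentially the same route as the paper. The paper obtains this theorem as the case $r_{\mathrm{low}}=0$ of the enclosure result, and its lower bound is your argument: integrate the globally valid row inequality against any $\gamma\in\Gamma_\theta$, then use the moment restrictions and $v^*\ge0$.

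The one difference is the upper bound. You argue directly from $\Gamma_\theta(\mathcal W',\mathcal J,\mathcal K)\subseteq\Gamma_\theta$ on the sup--inf definition, as the paper does in its main text. The appendix instead goes through the dual and an optimal weight vector $p^\dagger$, mainly to prove that an optimizer exists for the enclosure theorem. Row certification already presupposes such an optimizer here, so your shortcut is sufficient.
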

\begin{proof}
See Appendix~\ref{app:proofs}, where Theorem~\ref{thm:exact_computation} is obtained from Theorem~\ref{thm:enclosure} below by setting the row bound to zero.
\end{proof}

\subsection{Column-and-row generation}
\label{subsec:computation_exchange}

Under the conditions of Theorem~\ref{thm:exact_computation}, the auxiliary finite linear program gives a verdict on $\theta$: $\theta\in\Thetaw$ if $T_{\mathrm{LP}}(\theta)=0$, and $\theta\notin\Thetaw$ if $T_{\mathrm{LP}}(\theta)>0$.
Of the two certifications the theorem requires, row certification is the demanding one, and it is more than a verdict needs.
Inclusion needs only column certification, and for exclusion row certification can be replaced by a bound on the row residual.

The column side is straightforward: a finite, column-certified $(\mathcal J',\mathcal K')$ can be constructed directly, with no optimization, in all of our examples for any finite $\mathcal W'$ that supports an admissible input measure.
Supplemental Appendices~\ref{app:computation_example1} and~\ref{app:se_exact_pricing} do so for Examples~\ref{ex:binary_parametric} and~\ref{ex:binary_sequential}, and Lemma~\ref{lem:finite_column_reduction} in Supplemental Appendix~\ref{app:algorithm} gives a general construction whenever the moment inequalities are finite in number, allowing arbitrary moment equalities.
Column certification implies $T(\theta)\le T_{\mathrm{LP}}(\theta)$, so $T_{\mathrm{LP}}(\theta)=0$ settles inclusion.
A positive value does not settle exclusion, because the bound $0\le T(\theta)\le T_{\mathrm{LP}}(\theta)$ allows both zero and positive values.
Without a bound on the error from omitted input values, a column-certified discretization gives only this upper bound.

The row side is harder, because computing $r^*$ requires minimizing the slack over all of $\mathcal W$.
A verdict does not need $r^*$ itself, only a number $r_{\mathrm{low}}$ that is guaranteed to lie below it, which we call a \emph{row bound}.
In Example~\ref{ex:binary_parametric}, a row bound requires, for each covariate path, an upper bound on the model-implied mean of $\phi^*$ over the scalar fixed effect, which we obtain by bounding this mean on intervals of the real line and splitting the intervals until the bound is tight enough.
Under fixed effect--error dependence, no such calculation is needed (Additional Appendix~\ref{app:parametric_implementation}).%
\footnote{For Example~\ref{ex:binary_sequential}, Supplemental Appendix~\ref{app:se_exact_pricing} obtains the row bound by a bilinear optimization over admissible conditional distributions.}
A row bound gives the lower bound $T_{\mathrm{LP}}(\theta)+r_{\mathrm{low}}$ on $T(\theta)$, so a positive value settles exclusion.

Column certification and a row bound together enclose $T(\theta)$:
\begin{thm}[Enclosure]
\label{thm:enclosure}
Under Assumptions~\ref{asm:measurable}--\ref{asm:linear_operator} and~\ref{asm:finite_Z}, let $(\mathcal J',\mathcal K')$ be column-certified for $\mathcal W'$.
Then~\eqref{eq:TLD_restricted} has an optimizer, and for the row residual $r^*$ of any optimizer, every $r_{\mathrm{low}}\le r^*$ satisfies
\begin{equation}
\label{eq:certified_enclosure}
\max\{0,T_{\mathrm{LP}}(\theta)+r_{\mathrm{low}}\}
\;\le\;
T(\theta)
\;\le\;
T_{\mathrm{LP}}(\theta).
\end{equation}
\end{thm}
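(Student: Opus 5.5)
The proof has three parts: existence of an optimizer, the upper bound, and the lower bound.

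\textbf{Existence of an optimizer.} Column certification says that $\Gamma_\theta(\mathcal W',\mathcal J,\mathcal K)$ is nonempty. Hence $\Gamma_\theta(\mathcal W',\mathcal J',\mathcal K')$ is nonempty, so the inner primal LP in~\eqref{eq:T_restricted} is feasible for every $\phi$. Its objective lies in $[-1,1]$, so it is bounded. Finite LP strong duality then gives~\eqref{eq:TLD_restricted} for each fixed $\phi$, and the dual attains its value. The overall program~\eqref{eq:TLD_restricted} is a single finite LP in the variables $(\phi,\zeta,u,v)$. It is feasible (take $\phi=0$, $\zeta=1$, $u=v=0$) and its value is bounded above by $1$, because weak duality holds against any feasible primal $p$. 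A bounded feasible LP attains its optimum, so an optimizer $(\phi^*,\zeta^*,u^*,v^*)$ exists.

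\textbf{Upper bound.} Any $\gamma_p\in\Gamma_\theta(\mathcal W',\mathcal J,\mathcal K)$ lies in $\Gamma_\theta$. Therefore, for each $\phi$, the infimum over $\Gamma_\theta$ is at most the infimum over $\Gamma_\theta(\mathcal W',\mathcal J,\mathcal K)$. Taking the supremum over $\phi$ gives $T(\theta)\le \TLD{\theta}{\mathcal W'}{\mathcal J,\mathcal K}$. Column certification~\eqref{eq:column_certificate} says this last quantity equals $T_{\mathrm{LP}}(\theta)$.

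\textbf{Lower bound.} Nonnegativity $T(\theta)\ge0$ follows from $\phi\equiv0$. For the other term, I use the optimizer as a test function in~\eqref{eq:discrepancy_function_input}. The definition of $r^*$ gives, for every $w\in\mathcal W$,
\[
q_{\phi^*,\theta}(w)\le \zeta^*-r_{\mathrm{low}}+\sum_{j\in\mathcal J'}u^*_jg_{1,j}(w;\theta)+\sum_{k\in\mathcal K'}v^*_kg_{2,k}(w;\theta).
\]
Integrate this against an arbitrary $\gamma\in\Gamma_\theta$. The left side becomes $\EE{\Lgen\gamma}{\phi^*}$, by Fubini applied to the kernel definition~\eqref{eq:kernel_operator}. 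On the right side, the equality terms vanish because $j\in\mathcal J'\subseteq\mathcal J$. The inequality terms are $\le0$ because $v^*\ge0$ and $k\in\mathcal K'\subseteq\mathcal K$. This yields $\EE{\mu^*}{\phi^*}-\EE{\Lgen\gamma}{\phi^*}\ge \EE{\mu^*}{\phi^*}-\zeta^*+r_{\mathrm{low}}=T_{\mathrm{LP}}(\theta)+r_{\mathrm{low}}$. Taking the infimum over $\gamma$ and then using $\phi^*$ as a candidate in the outer supremum gives the bound. If $\Gamma_\theta=\varnothing$, then $T(\theta)=+\infty$ and the inequality is trivial, although column certification already rules this case out.

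\textbf{Main obstacle.} The delicate step is the integration. The functions $g$ may be unbounded. However, the footnote to Assumption~\ref{asm:moment_conditions} guarantees their $\gamma$-integrability for $\gamma\in\Gamma_\theta$. Only finitely many terms are involved, and $q_{\phi^*,\theta}$ is bounded and measurable because $K_\theta$ is a kernel. Together these make the integrated inequality legitimate.

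Theorem~\ref{thm:exact_computation} follows by taking $r_{\mathrm{low}}=r^*=0$.
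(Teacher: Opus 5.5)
Your proof is correct and matches the paper's in substance. The lower bound is exactly the paper's argument: integrate the row inequality against an arbitrary $\gamma\in\Gamma_\theta$, remove the retained moment terms using the restrictions and $v^*\ge0$, and then evaluate the outer supremum at $\phi^*$.

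The differences are minor. You get the upper bound directly from the inclusion $\Gamma_\theta(\mathcal W',\mathcal J,\mathcal K)\subseteq\Gamma_\theta$, and the existence of an optimizer from feasibility plus the weak-duality bound. The paper obtains both at once through LP strong duality, exhibiting optimal weights $p^\dagger$ with $\gamma_{p^\dagger}\in\Gamma_\theta$. Your route is slightly more elementary and needs no minimax exchange over the full family $(\mathcal J,\mathcal K)$.
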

\begin{proof}
See Appendix~\ref{app:proofs}.
\end{proof}

\begin{remark}[Row residual over input measures]
\label{rem:row_measures}
The proof of Theorem~\ref{thm:enclosure} uses the row inequality only through its integral against measures in $\Gamma_\theta$.
The theorem therefore holds with the row residual~\eqref{eq:row_oracle_residual} replaced by $\inf_{\gamma\in\Gamma_\theta}\EE{\gamma}{s(W;\phi^*,\zeta^*,u^*,v^*)}$, which is weakly larger.
This refinement is used under sequential exogeneity, in Example~\ref{ex:binary_sequential} and Supplemental Appendix~\ref{app:se_exact_pricing}.
\end{remark}

The enclosure at a given $(\mathcal W',\mathcal J',\mathcal K')$ may be too wide to decide, and we then enlarge the sets by column-and-row generation, which combines two standard devices in linear programming \parencite{Luebbecke2011,Muter2013}.
An input value with negative slack becomes a new row, and we then construct a column-certified $(\mathcal J',\mathcal K')$ for the enlarged support.
We re-solve the auxiliary program and repeat, as stated precisely in Algorithm~\ref{alg:master} of Supplemental Appendix~\ref{app:algorithm}.
A fixed discretization can serve as the initial support, which the algorithm then certifies and, if needed, refines.
Because the bounds of Theorem~\ref{thm:enclosure} are valid at every step, the algorithm may stop as soon as they determine whether $\theta\in\Thetaw$, or once they are narrow enough.

We impose no topology on $\mathcal W$, and hence no compactness or continuity, so the usual arguments for the convergence of a refined grid do not apply, and we make no general convergence claim.
Because the bounds are valid at every step, a verdict does not require convergence.
When the algorithm stops before the bounds decide, the parameter value is reported as undecided.
The implementation treats a value as zero only to linear programming accuracy, with no separate membership threshold (Additional Appendix~\ref{app:parametric_implementation}).
Model-specific compactness and continuity, or a finite reduction, may yield a general result through standard cutting-plane and semi-infinite programming results \parencite{Kelley1960,HettichKortanek1993}.

Figure~\ref{fig:enclosure_iterations} illustrates the algorithm in the baseline design of Section~\ref{subsec:numerical_example1}: the fixed effect probit model of Example~\ref{ex:binary_parametric} with $T=2$, a binary covariate correlated with the fixed effect, error terms independent of both, and $\beta_0=1$.
For two coefficient values only $1.6\times10^{-4}$ apart, on opposite sides of the boundary of the identified set, the algorithm reaches a verdict in eight iterations.
The figure shows the two enclosures, starting from a single fixed effect value at each covariate path, $\mathcal W'=\{(x,0):x\in\mathcal X\}$, with $\mathcal J'=\mathcal J=\mathcal K'=\mathcal K=\varnothing$.
Each iteration requires one solution of the auxiliary program and four one-dimensional interval branch and bound searches, one per covariate path, to bound the row residual (Additional Appendix~\ref{app:parametric_implementation}), and at the deciding iteration the program has $22$ rows, spanning nine distinct fixed effect values.
Across one hundred equally spaced coefficient values on $[0.01,2]$, the median number of iterations to a verdict is three, against eight for the two values above.
Deciding one value takes about ten milliseconds on a single core of an Intel Core i7-11370H.

\begin{figure}
    \centering
    \includegraphics[width=0.6\linewidth]{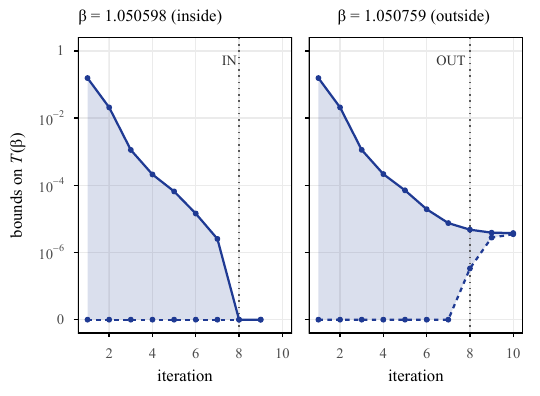}
    \caption{Bounds on $T(\beta)$ by iteration, at $\beta=1.050598$ (inside the identified set) and $\beta=1.050759$ (outside).
    Solid: $T_{\mathrm{LP}}(\beta)$. Dashed: $\max\{0,T_{\mathrm{LP}}(\beta)+r_{\mathrm{low}}\}$. Dotted: first decision.
    Logarithmic scale, with zero on the floor of each panel.}
    \label{fig:enclosure_iterations}
\end{figure}

\section{Inference}
\label{sec:inference}

We develop inference procedures for the finite output setting of Assumption~\ref{asm:finite_Z}.
We construct tests of candidate parameter values and, by test inversion, confidence sets that cover each point of the identified set $\Thetaw$ with asymptotic probability at least $1-\alpha$, uniformly over the data-generating process and the point under test.
The procedure uses the sample analog $T_n(\theta)$ of the ADF and a bootstrap approximation to the distribution of a penalized statistic that dominates $\sqrt nT_n(\theta)$ under the null.
Both may be obtained with the machinery of Section~\ref{sec:computation}: once the retained problem is column-certified and a certified lower bound on $T_n(\theta)$ has been computed, each bootstrap replicate requires only one finite linear program to obtain a conservative upper bound on the bootstrap statistic (Remark~\ref{rem:crit_val_LP}).
Supplemental Appendix~\ref{app:inference_computation} provides implementation details. 

Under Assumption~\ref{asm:finite_Z}, the space of test functions $\Phi(\mathcal{Z}) = [0,1]^{\mathcal Z}$ is finite-dimensional, as in moment inequality problems, but our approach differs from standard procedures in two ways.
First, the same model-specific row bound calculation used for population identification evaluates the empirical distribution of output variables against the full set of input-space moment restrictions. This is possible because the constraint functions are always fully known to the researcher.
Second, our bootstrap critical values use a geometric contact-set penalty that focuses power on the separating hyperplanes most informative about whether $\theta$ belongs to the identified set.  

\subsection{Setup and notation}

We denote the sample expectation by $\EE{n}{\cdot} \equiv n^{-1}\sumn$.
We observe $n$ i.i.d.\ realizations $\{Z_i\}_{i=1}^n$ from $\mu^*$.
Define the sample analog of the ADF as
\begin{align}
	\label{E:Tn_definition}
	T_n(\theta) \equiv \sup_{\phi \in \Phi(\mathcal{Z})}  \inf_{\mu \in \Mtheta} \left(\EE{n}{\phi} - \EE{\mu}{\phi}\right),
\end{align}
which replaces the population expectation in the ADF by its empirical counterpart.
Throughout this section, we subscript the population ADF of Section~\ref{sec:identification} by $\mu^*$ to make its dependence explicit:
\begin{align*}
	T_{\mu^*}(\theta) \equiv \sup_{\phi \in \Phi(\mathcal{Z})} \inf_{\mu \in \Mtheta} (\EE{\mu^*}{\phi} - \EE{\mu}{\phi}).
\end{align*}
	
We define the penalty function
\begin{align}
	\label{E:penalty}
	\eta_{\theta, \mu^*}(\phi) &\equiv \inf_{\mu \in \Mtheta} \left(\EE{\mu^*}{\phi} - \EE{\mu}{\phi}\right),
\end{align}
with empirical analog $\eta_{\theta, n}(\phi) \equiv \inf_{\mu \in \Mtheta} (\EE{n}{\phi} - \EE{\mu}{\phi})$.

This penalty function measures how well a test function $\phi$ discriminates between $\mu^*$ and the model set $\Mtheta$.
We likewise write $\Thetaw(\mu^*)$ for the identified set when the data are generated from $\mu^*$.
Note that $\eta_{\theta, \mu^*}(\phi) \le 0$ for all $\phi$ when $\theta \in \Thetaw(\mu^*)$.
By definition, $T_{\mu^*}(\theta) = \sup_{\phi \in \Phi(\mathcal{Z})} \eta_{\theta, \mu^*}(\phi)$ and $T_{n}(\theta) = \sup_{\phi \in \Phi(\mathcal{Z})} \eta_{\theta, n}(\phi)$.

Let $\mathbf{P} \subset \Pa(\mathcal{Z})$ denote a class of distributions over which we seek uniform results.
For $\mu^* \in \mathbf{P}$ and $\theta \in \Theta$, define the contact set
\begin{align}
\label{E:contact_set}
K_{\mu^*}(\theta) \equiv \left\{\phi \in \Phi(\mathcal{Z}): \eta_{\theta, \mu^*}(\phi) = T_{\mu^*}(\theta)\right\}.
\end{align}
The contact set $K_{\mu^*}(\theta)$ consists of the test functions attaining the population adversarial supremum.
When $\mu^*\in\Mtheta$, its elements define supporting directions at $\mu^*$; if $\mu^*$ lies in the relative interior of a set $\Mtheta$ that is full-dimensional relative to the probability simplex, these reduce to the constant functions.
When $\mu^*\notin\Mtheta$, the contact set consists of test functions attaining the maximal separating gap.

We say that a set of random variables $\{A_{n,\mu^*}: \mu^* \in \mathbf{P}\}$ is $o_p(1)$ uniformly in $\mu^*$ if, for all $c > 0$,
$
	\limsup_{n \to \infty} \sup_{\mu^* \in \mathbf{P}} \mu^*(\{|A_{n,\mu^*}| > c\}) = 0.
$
If instead
$
	\lim_{C \to \infty} \limsup_{n \to \infty} \sup_{\mu^* \in \mathbf{P}} \mu^*(\{|A_{n,\mu^*}| > C\}) = 0,
$
then the set is $O_p(1)$ uniformly in $\mu^*$.
Incorporating an additional supremum over $\theta$ extends these definitions to uniformity over $\theta$.

\subsection{Asymptotic distribution}

\begin{asm}[Random sampling]
	\label{A:finite_inference}
	The observations $\{Z_i\}_{i=1}^n$ are i.i.d.\ from $\mu^* \in \mathbf{P}$, where $\mathbf{P}$ is a collection of probability measures on the finite space $\mathcal{Z}$.
\end{asm}

In particular, $\mathbf{P}$ may be taken as all of $\Pa(\mathcal{Z})$: because $\mathcal{Z}$ is finite and test functions in $\Phi(\mathcal{Z})$ are uniformly bounded, no further regularity on $\mathbf{P}$ is required for the uniform results that follow.%
\footnote{The argument does not divide by cell probabilities or require them to be bounded away from zero. Cells with zero probability produce degenerate coordinates of the empirical process, which is allowed.}
Define the centered empirical process $\mathbb{G}_{n,\mu^*}(\phi) = \sqrt{n}(\EE{n}{\phi} - \EE{\mu^*}{\phi})$.
Because $\mathcal{Z}$ is finite, the multivariate central limit theorem implies that $\mathbb{G}_{n,\mu^*}$ converges in distribution to a Gaussian process $\mathbb{G}_{\mu^*}$ indexed by $\phi \in \Phi(\mathcal{Z})$.

\begin{remark}[Finite-dimensional stochastic representation]\label{rem:remark_finitedimrep}
If $\mathcal Z=\{z_1,\ldots,z_{|\mathcal Z|}\}$, write $p$ and $\widehat p_n$ for the population and empirical probability vectors and set
$$
V_n\equiv\sqrt n(\widehat p_n-p)\in\mathbb R^{|\mathcal Z|}.
$$
Identifying $\phi$ with $(\phi(z_1),\ldots,\phi(z_{|\mathcal Z|}))'$, we have $\mathbb G_{n,\mu^*}(\phi)=\phi'V_n$.
Thus all stochastic variation in the empirical process is $|\mathcal Z|$-dimensional.
The representation also makes explicit that the parameter space for the observable distribution is the entire probability simplex.
We retain the empirical-process formulation below because the finite output space gives the required distribution-free entropy bound uniformly over this simplex.
\end{remark}

Let $\{Z_i^*\}_{i=1}^n$ denote a bootstrap sample drawn with replacement from the observed data, and define the bootstrap empirical process $\mathbb{G}_{n,\mu^*}^*(\phi) = \sqrt{n}(\mathbb{E}_n^*[\phi] - \EE{n}{\phi})$, where $\mathbb{E}_n^*[\cdot] = n^{-1}\sumn (\cdot)(Z_i^*)$ denotes the empirical bootstrap average.
We reserve $\mathbb E^*$ and $\mathbb P^*$ for conditional bootstrap expectation and probability:
$$
\mathbb P^*(A)\equiv
\mathbb P(A\mid Z_1,\ldots,Z_n),
\qquad
\mathbb E^*[X]\equiv
\mathbb E[X\mid Z_1,\ldots,Z_n].
$$

Let $d_{\mathrm{BL}}$ denote the bounded Lipschitz metric between probability distributions \parencite[\S 1.12]{VW1996}.
We write $\mathbb{G}_n \Rightarrow \mathbb{G}$ for random variables if $d_{\mathrm{BL}}(L_n,L)$ converges to $0$, where $L_n$ is the distribution of $\mathbb{G}_n$ and $L$ is the distribution of $\mathbb{G}$.
For a bootstrap quantity $\mathbb{G}_n^*$ whose distribution $L_n^*$ is itself random for each $n$, we write $\mathbb{G}_n^*  \overset{\mu^*}{\Rightarrow} \mathbb{G}$ if $d_{\mathrm{BL}}(L_n^*,L)$ converges in probability to $0$.

\begin{prop}
	\label{P:inf_finite}
	Let Assumptions~\ref{asm:measurable}, \ref{asm:moment_conditions}, \ref{asm:linear_operator}, \ref{asm:finite_Z}, and~\ref{A:finite_inference} hold with $\Gamma_\theta$ nonempty.
	For any sequence of positive penalty parameters $(\lambda_n)$,
	\begin{align}
		\sqrt{n} T_n(\theta) &\le \sup_{\phi \in \Phi(\mathcal{Z})} \left(\mathbb{G}_{n,\mu^*}(\phi) + \lambda_n \eta_{\theta,n}(\phi)\right) - \lambda_n T_n(\theta)\label{E:inf_bound}
	\end{align}
	for all $n$, $\mu^* \in \mathbf{P}$, and $\theta \in \Thetaw(\mu^*)$.
	If, in addition, $\lambda_n \to \infty$ and
	$\lambda_n=o(\sqrt n)$, then, for each fixed
	$\mu^* \in \mathbf{P}$ and $\theta \in \Theta$ such that $\Gamma_\theta\neq\varnothing$, the contact set $K_{\mu^*}(\theta)$ is nonempty and, as $n\to\infty$,
	\begin{align}
		\sqrt{n}(T_n(\theta) - T_{\mu^*}(\theta)) &\Rightarrow \sup_{\phi \in K_{\mu^*}(\theta)} \mathbb{G}_{\mu^*}(\phi), \label{E:inf_convergence}
	\end{align}
	and the bootstrap analog satisfies
	\begin{align}
		\sup_{\phi \in \Phi(\mathcal{Z})} \left(\mathbb{G}_{n,\mu^*}^*(\phi) + \lambda_n \eta_{\theta,n}(\phi)\right) - \lambda_n T_n(\theta) \overset{\mu^*}{\Rightarrow} \sup_{\phi \in K_{\mu^*}(\theta)} \mathbb{G}_{\mu^*}(\phi). \label{E:inf_bootstrap}
	\end{align}
\end{prop}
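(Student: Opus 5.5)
The plan is to reduce everything to the finite-dimensional representation of Remark~\ref{rem:remark_finitedimrep}, using one algebraic identity. Since $\Gamma_\theta\neq\varnothing$, the set $\Mtheta$ is nonempty. The map $\phi\mapsto\sup_{\mu\in\Mtheta}\EE{\mu}{\phi}$ is then a finite support function of a bounded set in $\R^{|\mathcal Z|}$, hence convex and $1$-Lipschitz in the sup norm on $\Phi(\mathcal Z)=[0,1]^{\mathcal Z}$. Consequently $\eta_{\theta,\mu^*}$ and $\eta_{\theta,n}$ are concave and continuous on a compact set, and
\begin{equation*}
\eta_{\theta,n}(\phi)=\eta_{\theta,\mu^*}(\phi)+n^{-1/2}\,\mathbb G_{n,\mu^*}(\phi)
\qquad\text{for all }\phi\in\Phi(\mathcal Z).
\end{equation*}
Compactness and continuity give attainment of both suprema. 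In particular, $K_{\mu^*}(\theta)$ is nonempty and closed, and $T_n(\theta)$ has a maximizer $\hat\phi_n$.

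For the finite-sample bound \eqref{E:inf_bound}, take $\theta\in\Thetaw(\mu^*)$. Theorem~\ref{thm:main_sharpness} (or directly $\mu^*\in\Mtheta$) gives $\eta_{\theta,\mu^*}\le0$. Evaluate the right-hand side of \eqref{E:inf_bound} at $\phi=\hat\phi_n$. The penalty terms cancel because $\eta_{\theta,n}(\hat\phi_n)=T_n(\theta)$, so the right-hand side is at least $\mathbb G_{n,\mu^*}(\hat\phi_n)$. By the identity,
\begin{equation*}
\sqrt n\,T_n(\theta)=\mathbb G_{n,\mu^*}(\hat\phi_n)+\sqrt n\,\eta_{\theta,\mu^*}(\hat\phi_n)\le\mathbb G_{n,\mu^*}(\hat\phi_n),
\end{equation*}
which proves the bound for every $n$ and every $\lambda_n>0$.

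For \eqref{E:inf_convergence}, write $\sqrt n(T_n(\theta)-T_{\mu^*}(\theta))=\sqrt n\{\sup(\eta_{\theta,\mu^*}+n^{-1/2}\mathbb G_{n,\mu^*})-\sup\eta_{\theta,\mu^*}\}$. The sup functional on $C(\Phi(\mathcal Z))$ is Hadamard directionally differentiable at a continuous $\eta$ on a compact set, with derivative $h\mapsto\sup_{K}h$ (Danskin's theorem, as used in \textcite{fang2018}). The multivariate CLT gives $\mathbb G_{n,\mu^*}\Rightarrow\mathbb G_{\mu^*}$ in $C(\Phi(\mathcal Z))$, since $\mathbb G_{n,\mu^*}(\phi)=\phi'V_n$ is linear in a $|\mathcal Z|$-dimensional vector. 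The functional delta method then yields the limit.

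The bootstrap claim \eqref{E:inf_bootstrap} is the main step. Write $S_n^*$ for the penalized statistic and set $D_n(\phi)\equiv\eta_{\theta,n}(\phi)-T_n(\theta)\le0$.

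\emph{Lower bound.} For $\phi\in K_{\mu^*}(\theta)$, the identity gives $D_n(\phi)=n^{-1/2}\mathbb G_{n,\mu^*}(\phi)-(T_n-T_{\mu^*})$. This is $O_p(n^{-1/2})$ uniformly in $\phi$ by the previous step. Since $\lambda_n=o(\sqrt n)$, we get $S_n^*\ge\sup_{K}\mathbb G^*_{n,\mu^*}-o_p(1)$.

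\emph{Upper bound.} Fix $\delta>0$ and let $K^\delta$ be the open $\delta$-neighborhood of $K_{\mu^*}(\theta)$ in sup norm. By compactness and continuity, $c(\delta)\equiv T_{\mu^*}(\theta)-\sup_{\Phi\setminus K^\delta}\eta_{\theta,\mu^*}>0$. Hence $\sup_{\Phi\setminus K^\delta}D_n\le-c(\delta)+O_p(n^{-1/2})$. Multiplied by $\lambda_n\to\infty$, this drives the outside-neighborhood part to $-\infty$, while $\sup_\Phi|\mathbb G^*_{n,\mu^*}|\le\|V_n^*\|_1=O_p(1)$. So with probability tending to one, $S_n^*\le\sup_{K^\delta}\mathbb G^*_{n,\mu^*}\le\sup_K\mathbb G^*_{n,\mu^*}+\delta\|V_n^*\|_1$, using $|\mathbb G^*(\phi)-\mathbb G^*(\phi')|\le\|\phi-\phi'\|_\infty\|V_n^*\|_1$.

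Combining the two bounds, $|S_n^*-\sup_K\mathbb G^*_{n,\mu^*}|\le\delta\,O_p(1)+o_p(1)$ for every $\delta>0$. Letting $\delta\downarrow0$ along a slowly chosen sequence $\delta_n$ gives $o_p(1)$. The multinomial bootstrap CLT gives $V_n^*\Rightarrow N(0,\Sigma_{\mu^*})$ conditionally in probability. Since $v\mapsto\sup_{\phi\in K}\phi'v$ is Lipschitz, bounded-Lipschitz convergence transfers to $\sup_K\mathbb G^*_{n,\mu^*}$, and the $o_p(1)$ perturbation is absorbed in $d_{\mathrm{BL}}$.

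I expect the delicate point to be the upper bound: keeping the conditional-in-probability statements consistent while sending $\delta\to0$. I would handle it by bounding $d_{\mathrm{BL}}$ directly by $\mathbb E^*\min\{1,|S_n^*-\sup_K\mathbb G^*_{n,\mu^*}|\}$ and showing that this expectation is $o_p(1)$.
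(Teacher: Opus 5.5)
Your proof is correct. The finite-sample bound \eqref{E:inf_bound} is argued exactly as in the paper: evaluate at a maximizer $\hat\phi_n$ of $\eta_{\theta,n}$ and use $\mu^*\in\overline{\mathcal M}_\theta$.

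For the two limit statements you essentially swap the paper's tools.

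\textbf{Pointwise limit.} For \eqref{E:inf_convergence} you use the Hadamard directional derivative $h\mapsto\sup_{K_{\mu^*}(\theta)}h$ of the sup functional on $C(\Phi)$ together with the functional delta method. The paper instead argues directly:
\begin{itemize}
\item a lower bound by restricting the supremum to $K_{\mu^*}(\theta)$;
\item a proof that the supremum outside a $\delta$-neighborhood $K^\delta_{\mu^*}(\theta)$ loses to it, because of the drift $-\sqrt n\,c$ with $c>0$;
\item asymptotic equicontinuity to shrink $K^\delta_{\mu^*}(\theta)$ to $K_{\mu^*}(\theta)$.
\end{itemize}

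\textbf{Bootstrap.} For \eqref{E:inf_bootstrap} the roles reverse. The paper notes that $\xi$ is convex and Lipschitz, hence directionally and then Hadamard directionally differentiable (Rockafellar, Theorem~23.1; Bonnans--Shapiro, Proposition~2.49). It then invokes the numerical-delta-method bootstrap of Hong and Li (Theorem~3.1) with step size $\lambda_n^{-1}$. You instead run the paper's Step-two localization argument with $\lambda_n$ in place of $\sqrt n$. The condition $\lambda_n=o(\sqrt n)$ makes $\lambda_n D_n$ negligible on the contact set, and $\lambda_n\to\infty$ sends the part outside $K^\delta_{\mu^*}(\theta)$ to $-\infty$. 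You finish with the multinomial bootstrap CLT for $V_n^*$ and a Lipschitz map. Your handling of the conditional statements is sound: a joint $o_p(1)$ bound on $|S_n^*-\sup_K\mathbb G^*_{n,\mu^*}|$ gives $\mathbb E^*\min\{1,|S_n^*-\sup_K\mathbb G^*_{n,\mu^*}|\}=o_p(1)$ by Markov's inequality, and this controls $d_{\mathrm{BL}}$.

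\textbf{Comparison.} Your route is self-contained and uses only the finite-dimensional CLT. It makes the separate role of each rate condition on $\lambda_n$ explicit, and it identifies the bootstrap limit directly as $\sup_{K_{\mu^*}(\theta)}\mathbb G_{\mu^*}$ rather than as an abstract $\xi'(\mathbb G_{\mu^*})$. The paper's route is shorter given the cited results. It also treats the bootstrap step as a generic property of Lipschitz convex functionals, independent of the finite-dimensional structure of $\mathbb G^*_{n,\mu^*}$, which fits the uniform arguments later used for Corollary~\ref{C:inf_finite}.
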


\begin{proof}
    See Appendix~\ref{app:proofs}.
\end{proof}

Proposition~\ref{P:inf_finite} bounds $\sqrt{n} T_n(\theta)$ by a penalized supremum that is consistently approximated by the bootstrap.
The empirical penalty term $\lambda_n\eta_{\theta,n}(\phi)$ is the sample counterpart of the population contact-set penalty and down-weights test functions $\phi$ outside $K_{\mu^*}(\theta)$, which do not enter the limit in~\eqref{E:inf_convergence}.
The $\sqrt n$ population drift localizes the sample criterion to the contact set in~\eqref{E:inf_convergence}.
Divergence of $\lambda_n$ produces the corresponding localization of the bootstrap criterion in~\eqref{E:inf_bootstrap}, while $\lambda_n=o(\sqrt n)$ makes estimation error in the empirical penalty asymptotically negligible.
The domination inequality~\eqref{E:inf_bound} is finite-sample and does not require either $\lambda_n\to\infty$ or $\lambda_n=o(\sqrt n)$.
These rate conditions are used only for the pointwise and bootstrap approximations in~\eqref{E:inf_convergence} and \eqref{E:inf_bootstrap}.

The requirement that $\Gamma_\theta$ is nonempty ensures that the penalty function $\eta_{\theta, n}$ and $T_n$ are well defined. 
Nonemptiness of $\Gamma_\theta$ is easy to verify directly in many of our examples, and the feasibility check FEAS in Supplemental Appendix~\ref{app:algorithm} certifies it.  
Parameter values for which FEAS certifies $\Gamma_\theta=\varnothing$ are not in the identified set and can be excluded before inference. 

For every fixed null pair $(\mu^*,\theta)$ with $\theta\in\Thetaw(\mu^*)$, \eqref{E:inf_bound} holds for every $n$ and every positive $\lambda_n$.
Under $\lambda_n\to\infty$ and $\lambda_n=o(\sqrt n)$, the bootstrap statistic in \eqref{E:inf_bootstrap} consistently estimates the pointwise limiting distribution of this upper-bounding statistic. 
This combination of an exact finite-sample null domination and a pointwise-consistent bootstrap approximation parallels the approach in \textcite{HongLi2018} \parencite[see also][]{fang2018}.

\begin{remark}[Bootstrap computation]
\label{rem:crit_val_LP}
Supplemental Appendix~\ref{app:inference_computation} shows that the bootstrap value $\sup_{\phi \in \Phi(\mathcal{Z})} \left(\mathbb{G}_{n,\mu^*}^*(\phi) + \lambda_n \eta_{\theta,n}(\phi)\right)$ used for estimating the asymptotic distribution of $T_n$ has an LP representation which exactly mirrors that of $T_n$, with a signed measure replacing the empirical measure.
This fact makes valid, if conservative, inference especially computationally tractable.
For example, under column certification of $(\mathcal{J}', \mathcal{K}')$ for $\mathcal{W}'$ obtained through equality of the feasible sets, as in Lemma~\ref{lem:finite_column_reduction}, Supplemental Appendix~\ref{app:inference_computation} bounds the left-hand side of~\eqref{E:inf_bootstrap} above by the direct bootstrap analog of $T_{\mathrm{LP}}(\theta)$ minus $\lambda_n$ times a certified lower bound on $T_n(\theta)$.
This upper bound is explicitly the value of a finite linear program and requires no row or column generation.
\end{remark}

\subsection{Confidence sets}

The corollary below gives tests whose inversion yields confidence sets with uniform asymptotic coverage of each point in the identified set.
It also considers power over the alternatives $\Theta_n^\Delta(\mu^*) \equiv \{\theta \in \Theta: T_{\mu^*}(\theta) \ge \Delta / \sqrt{n}\}$ for $\Delta > 0$.
\begin{cor}
	\label{C:inf_finite}
	Let the assumptions of Proposition~\ref{P:inf_finite} hold.
	Let $\varepsilon>0$ and let $\hat{c}_{1-\alpha}(\theta)$ denote the $1-\alpha$ quantile of the bootstrap distribution of $\sup_{\phi \in \Phi(\mathcal{Z})} (\mathbb{G}_{n,\mu^*}^*(\phi) + \lambda_n \eta_{\theta,n}(\phi)) - \lambda_n T_n(\theta)$.
	Then
	\begin{align}
		\liminf_{n \to \infty} \inf_{\substack{\mu^* \in \mathbf{P} \\ \theta \in \Thetaw(\mu^*)}} \PP{\mu^*}{\sqrt{n} T_n(\theta) \le \hat{c}_{1-\alpha}(\theta) + \varepsilon} \ge 1 - \alpha. \label{E:coverage}
	\end{align}
	On the other hand,
	\begin{align}
		\limsup_{\Delta \to \infty} \limsup_{n \to \infty} \sup_{\substack{\mu^* \in \mathbf{P} \\ \theta \in \Theta_{n}^\Delta(\mu^*)}} \PP{\mu^*}{\sqrt{n} T_n(\theta) \le \hat{c}_{1-\alpha}(\theta) + \varepsilon} = 0. \label{E:power}
	\end{align}
\end{cor}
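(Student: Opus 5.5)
For coverage \eqref{E:coverage}, the plan is to combine the finite-sample domination \eqref{E:inf_bound} with a uniform bootstrap approximation. Write $S_n(\theta)\equiv\sup_{\phi\in\Phi(\mathcal Z)}(\mathbb G_{n,\mu^*}(\phi)+\lambda_n\eta_{\theta,n}(\phi))-\lambda_nT_n(\theta)$ and let $S_n^*(\theta)$ be its bootstrap analog, obtained by replacing $\mathbb G_{n,\mu^*}$ with $\mathbb G^*_{n,\mu^*}$. By \eqref{E:inf_bound}, $\sqrt nT_n(\theta)\le S_n(\theta)$ for every null pair. It therefore suffices to show
\[
\liminf_n\inf\PP{\mu^*}{S_n(\theta)\le\hat c_{1-\alpha}(\theta)+\varepsilon}\ge1-\alpha .
\]

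Now fix the data and view $S_n$ and $S_n^*$ as the same map applied to $\mathbb G_{n,\mu^*}$ and $\mathbb G^*_{n,\mu^*}$. The map is $F_n(g)\equiv\sup_\phi(g(\phi)+\lambda_n\eta_{\theta,n}(\phi))-\lambda_nT_n(\theta)$. Since $\mathbb G(\phi)=\phi'V$ by Remark~\ref{rem:remark_finitedimrep} and $\phi\in[0,1]^{\mathcal Z}$, we have $|F_n(g)-F_n(g')|\le\|V-V'\|_1$. So $F_n$ is $1$-Lipschitz in the $|\mathcal Z|$-dimensional vector $V$, uniformly in $n$, $\theta$, $\lambda_n$ and the data.

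The main obstacle is that $F_n$ is data dependent and that $\mathbf P$ contains degenerate cells. I would handle this by coupling rather than by a limiting-distribution argument. On the finite simplex, the multinomial vector $V_n=\sqrt n(\widehat p_n-p)$ and its bootstrap counterpart $V_n^*=\sqrt n(\widehat p_n^*-\widehat p_n)$ are both approximately $N(0,\mathrm{diag}(q)-qq')$, with $q=p$ and $q=\widehat p_n$ respectively. The Gaussian covariance is uniformly continuous in $q$, and $\|\widehat p_n-p\|=O_p(n^{-1/2})$ uniformly over the simplex. Using a uniform (bounded-Lipschitz) multivariate CLT over the simplex, which holds because the summands are bounded, I would obtain
\[
\sup_{\mu^*}d_{\mathrm{BL}}\bigl(\mathcal L(V_n),\mathcal L^*(V_n^*)\bigr)\to0
\]
in probability. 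Smoothing the indicator $\one\{x\le c\}$ by a Lipschitz function with slope $1/\varepsilon$ then gives the following. For the conditional law of $F_n(V)$, the point $\hat c_{1-\alpha}+\varepsilon$ sits above the $(1-\alpha)$-quantile of a law that is close in $d_{\mathrm{BL}}$. Hence $\PP{}{S_n\le\hat c+\varepsilon}\ge1-\alpha-o(1)$ uniformly. This is exactly where the fixed $\varepsilon>0$ is needed: without it, anticoncentration would be required, which fails at degenerate $\mu^*$.

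One subtlety is that $F_n$ depends on the same data as $V_n$. I would get around it by not needing independence. Consider the deterministic family $F_{a}$ indexed by the penalty vector $a=\lambda_n\eta_{\theta,n}-\lambda_nT_n$ (a function of $\phi$). Suppose the conditional bootstrap law of $V_n^*$ is close to the Gaussian with covariance at $\widehat p_n$, and that the sampling law of $V_n$ is close to the Gaussian at $p$. I would then compare the quantile of $F_a(V_n^*)$ with that of $F_a(N_p)$ at $a$ fixed, using the Lipschitz property uniformly over $a$. Having a sup over all $1$-Lipschitz functionals of $V$ handles the data dependence up to the usual $\varepsilon$ slack, which I would justify by a Dudley–Strassen coupling of $V_n$ with a Gaussian.

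For power \eqref{E:power}, the key step is an upper bound on $\hat c$. Since $\eta_{\theta,n}\le T_n$, we have $S_n^*\le\sup_\phi\mathbb G^*_{n,\mu^*}(\phi)\le\|V_n^*\|_1$, so $\hat c_{1-\alpha}=O_p(1)$ uniformly. Next, $T_n$ is $1$-Lipschitz in the measure under the $\ell_1$ norm, so $\sqrt nT_n(\theta)\ge\sqrt nT_{\mu^*}(\theta)-\|V_n\|_1\ge\Delta-O_p(1)$ uniformly in $\mu^*$ and $\theta$. Taking $n\to\infty$ and then $\Delta\to\infty$ drives the acceptance probability to zero.
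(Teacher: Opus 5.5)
\textbf{Verdict: there is a gap in the coverage part; the power part is fine.}

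Your power argument is correct and a bit more direct than the paper's. The pathwise bound $S_n^*(\theta)\le\|V_n^*\|_1$ controls $\hat c_{1-\alpha}(\theta)$ without going through quantiles of the Gaussian limit functional.

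The gap sits exactly at the ``subtlety'' you flag. Write $\hat a\equiv\lambda_n(\eta_{\theta,n}-T_n(\theta))$, and let $c(a)$ be the $(1-\alpha-d)$-quantile of $F_a(N_p)$.

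\emph{Bootstrap side.} Conditionally on the data, $\hat a$ is fixed. Your uniformity over $a$ therefore does give $\hat c_{1-\alpha}(\theta)+\varepsilon/2\ge c(\hat a)$ with probability tending to one.

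\emph{Sampling side.} Here you need a lower bound on $\PP{\mu^*}{F_{\hat a}(V_n)\le c(\hat a)+\varepsilon/2}$, and $\hat a$ is a function of $\widehat p_n=p+V_n/\sqrt n$. Closeness of $\mathcal L(V_n)$ to $N(0,\Sigma_p)$ in $d_{\mathrm{BL}}$, even uniformly over all $1$-Lipschitz $F_a$, bounds $\PP{\mu^*}{F_a(V_n)\le c(a)+\varepsilon/2}$ only for nonrandom $a$. It says nothing about the diagonal event in which $a$ is chosen by $V_n$ itself. A Dudley--Strassen coupling of $V_n$ with a Gaussian $N$ does not fix this. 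After coupling, $\hat a$ is still approximately a function of $N$, so $F_{\hat a}(N)$ need not be distributed like $F_a(N_p)$ for any fixed $a$.

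\emph{The missing step.} This is the key step in the paper's proof: replace the random penalty by the deterministic $a_0\equiv\lambda_n(\eta_{\theta,\mu^*}-T_{\mu^*}(\theta))$. Two facts make this work:
\begin{itemize}
\item $\eta_{\theta,n}(\phi)-\eta_{\theta,\mu^*}(\phi)=\EE{n}{\phi}-\EE{\mu^*}{\phi}$ exactly, with no dependence on $\theta$;
\item $|T_n(\theta)-T_{\mu^*}(\theta)|\le\sup_\phi|\EE{n}{\phi}-\EE{\mu^*}{\phi}|$.
\end{itemize}
Together these give $\|\hat a-a_0\|_\infty\le 2(\lambda_n/\sqrt n)\|V_n\|_1$. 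This is $o_p(1)$ uniformly in $\mu^*$ and $\theta$ only because $\lambda_n=o(\sqrt n)$. Your coverage argument never uses that rate condition, which is a symptom of the gap.

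Since $a\mapsto F_a(v)$ is $1$-Lipschitz in the sup norm, swapping $\hat a$ for $a_0$ moves $S_n(\theta)$ and every bootstrap draw $S_n^*(\theta)$ by at most this amount. Hence it moves $\hat c_{1-\alpha}(\theta)$ by at most the same amount, and the $\varepsilon$ slack absorbs it. With the nonrandom map $F_{a_0}$, your bounded-Lipschitz comparison and smoothing argument then go through as you wrote them.

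This is what the paper does. It works with the functional $G\mapsto\sup_\phi\bigl(G(\phi)+\lambda_n\eta_{\theta,\mu^*}(\phi)\bigr)-\lambda_nT_{\mu^*}(\theta)$, built from the population penalty. It then pays $2\lambda_n\sup_\phi|\EE{n}{\phi}-\EE{\mu^*}{\phi}|=o_p(1)$ for switching to the empirical penalty.
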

\begin{proof}
See Supplemental Appendix~\ref{sec:app-additional-proofs}.
\end{proof}

The first result of Corollary~\ref{C:inf_finite} is that the critical values $\hat{c}_{1-\alpha}(\theta)+\varepsilon$ control asymptotic size uniformly over points in the identified set.
Define the confidence set
\begin{align}
    \mathrm{CS}_{1-\alpha} \equiv \left\{\theta \in \Theta: \sqrt{n} T_n(\theta) \le \hat{c}_{1-\alpha}(\theta) + \varepsilon\right\}.
\end{align}
Corollary~\ref{C:inf_finite} implies that $\mathrm{CS}_{1-\alpha}$ follows the paradigm of \textcite{im2004}: for every $\theta$ in the identified set, $\mathrm{CS}_{1-\alpha}$ covers $\theta$ with asymptotic probability at least $1-\alpha$ uniformly over both the parameter $\theta \in \Thetaw(\mu^*)$ and the underlying distribution $\mu^* \in \mathbf{P}$. 
This is per-point rather than simultaneous coverage: the result does not require the confidence set to contain the entire identified set with probability at least $1-\alpha$.
Uniform coverage does not require uniform estimation of $K_{\mu^*}(\theta)$.
It follows from the exact null domination in \eqref{E:inf_bound}, the uniform empirical-process approximation implied by the distribution-free entropy bound on the finite output space, and the fact that the penalty-estimation error does not depend on $\theta$.

For alternatives with $T_{\mu^*}(\theta)\geq \Delta/\sqrt n$, the second result states that the worst-case asymptotic acceptance probability converges to zero as $\Delta\to\infty$.
The fixed positive tolerance $\varepsilon$ accommodates possible atoms in the limiting distribution $\sup_{\phi \in K_{\mu^*}(\theta)} \mathbb{G}_{\mu^*}(\phi)$.
For example, when $K_{\mu^*}(\theta)$ degenerates to the constant functions, the limit collapses to a point mass at zero.
Because uniform approximation of distributions does not in general imply uniform approximation of their quantiles at discontinuity points, the positive slack allows the bootstrap approximation to yield uniform coverage without imposing an anti-concentration condition.
Thus, for any fixed $\varepsilon>0$, Corollary~\ref{C:inf_finite} gives asymptotic coverage of at least $1-\alpha$.
Related positive-slack adjustments to bootstrap quantiles are used by~\textcite{andrews2013} and~\textcite{marcoux2024}.\footnote{The tolerance is fixed on the $\sqrt n T_n(\theta)$ scale and is therefore part of the rejection rule, rather than a numerical approximation tolerance.}

\begin{remark}[Specification test]
\label{rem:specification_test}
Corollary~\ref{C:inf_finite} yields a specification test as a
by-product. 
If the model is correctly specified, then $\Thetaw(\mu^{*})$ is nonempty, and any fixed $\theta^{*}\in\Thetaw(\mu^{*})$ is covered with asymptotic probability at least $1-\alpha$.
Because $\{\mathrm{CS}_{1-\alpha}=\varnothing\}\subseteq\{\theta^{*}\notin\mathrm{CS}_{1-\alpha}\}$, it follows that $\limsup_{n\to\infty}\PP{\mu^*}{\mathrm{CS}_{1-\alpha}=\varnothing}\le\alpha$.
Rejecting correct specification when the confidence set is empty is therefore an asymptotically level-$\alpha$ test.
Tests of this form are studied for moment inequality models by
\textcite{bugni2015}, who show that such by-product tests are valid but generally conservative relative to dedicated specification tests. 
\end{remark}
\section{Binary choice models with fixed effects}
\label{sec:numerical}

We report identified sets and inference in the models of Examples~\ref{ex:binary_parametric} and~\ref{ex:binary_sequential}.
Results for Example~\ref{ex:binary_interval} are in Additional Appendix~\ref{app:interval_results}.

\subsection{Parametric binary choice}
\label{subsec:numerical_example1}

We begin from the design in Figure~2 of \textcite{ChernValHahnNewey2013}:
\begin{equation}
\label{eq:cfhn_dgp}
Y_{t} = \one\{\beta_0\, X_{t} + A - V_{t} \ge 0\},
\qquad
V \mid (A,X) \sim H^{\otimes T},
\qquad
X_{t} = \one\{A - \eta_{t} \ge 0\},
\end{equation}
with $\eta_{t}$ and $A$ standard normal, the $\eta_t$ independent across $t$ and independent of $A$, so the binary covariate is correlated with the fixed effect.
We set $\beta_0=1$ and consider six known links $H$, each standardized to have mean zero and variance one: the normal (probit) link, and the logistic (logit), Laplace, Gumbel, uniform, and asymmetric truncated normal links.
The restriction $V\mid(A,X)\sim H^{\otimes T}$ makes the error terms independent across periods and independent of $(A,X)$.
The serial dependence specification of Example~\ref{ex:binary_parametric} keeps each error term's distribution given $(A,X)$ but leaves the dependence across periods unrestricted.
The fixed effect--error dependence specification instead keeps $V\mid X\sim H^{\otimes T}$ but lets the error terms depend on the fixed effect given $X$.
Each relaxation is strictly weaker than the baseline, and the two are not nested in each other.
For each link we generate data from the baseline~\eqref{eq:cfhn_dgp}, so the data satisfy all three specifications, and we hold the resulting distribution $\mu^*$ of $Z=(Y,X)$ fixed.
We then compute the identified set for $\beta$ at $T=2$ under each specification, with the baseline as the reference point and the comparison of interest between the two relaxations.%
\footnote{Supplemental Appendix~\ref{app:computation_example1} derives the LPs and, where needed, the row bounds, and Additional Appendix~\ref{app:parametric_details} reports implementation details.}

\begin{figure}
    \centering
    \includegraphics[width=0.49\linewidth]{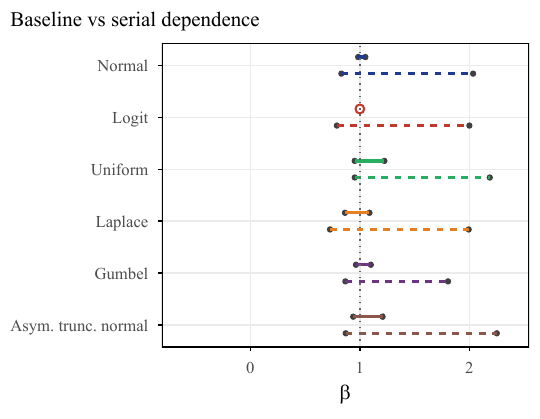}\hfill
    \includegraphics[width=0.49\linewidth]{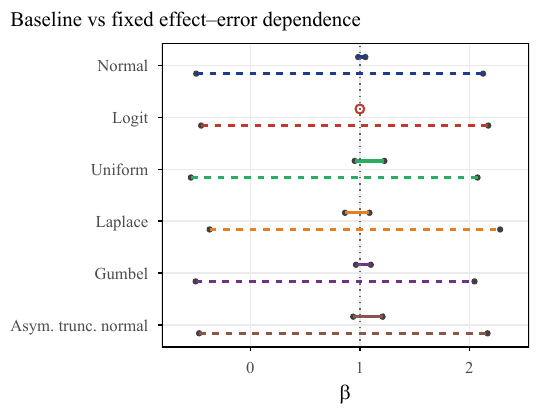}
    \caption{Identified sets for $\beta$ at $T=2$.
    Solid: baseline.
    Dashed: serial dependence (left) and fixed effect--error dependence (right).
    Dotted: $\beta_0=1$.
    Open circle: point identification.}
    \label{fig:numerical_example1_beta}
\end{figure}
Under the baseline, Figure~\ref{fig:numerical_example1_beta} shows that $\beta$ is point identified at the logit link, the classical conditional logit result \parencite{cham1980,Chamberlain2010}, while the other five links leave sets of width $0.066$ (probit) to $0.27$ (uniform).%
\footnote{Along a path from standardized probit to standardized logit, the coefficient sets contract toward $\beta_0$ (Additional Appendix~\ref{app:parametric_mixture}).}
Across all six links, serial dependence widens the identified set but leaves it bounded away from zero, while fixed effect--error dependence produces a set that contains zero.
In these designs, independence between the fixed effect and the error terms therefore carries more identifying content than serial independence.

In Figure~\ref{fig:numerical_example1_beta}, a coefficient value belongs to the identified set when its enclosure is the single point zero, and lies outside when the enclosure lies entirely above zero, so neither verdict rests on a membership threshold for the ADF (Theorems~\ref{thm:enclosure} and~\ref{thm:main_sharpness}).
The values left undecided occupy a band narrower than $2\times10^{-4}$ around each endpoint, well inside the width of the plotted line (Additional Appendix~\ref{app:parametric_implementation}).

\begin{figure}
    \centering
    \includegraphics[width=\linewidth]{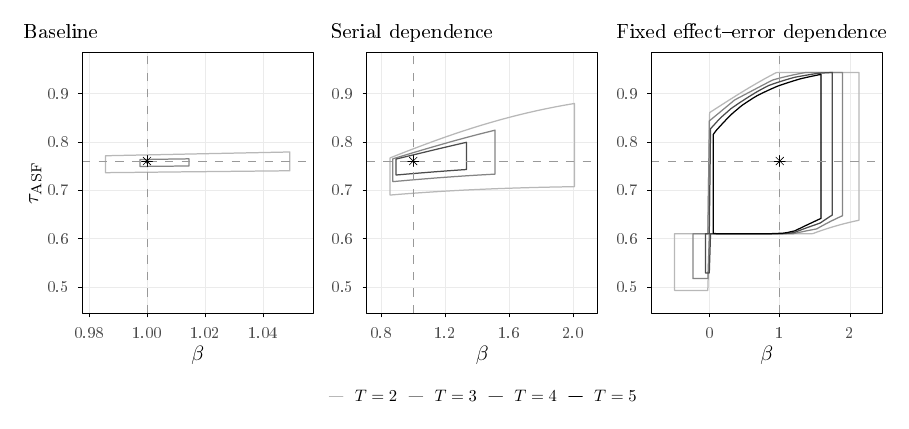}
    \caption{Joint identified sets for $(\beta,\tau_{\mathrm{ASF}})$ in the probit design, at $T=2,3,4$, and $T=5$ in the right panel.
    Asterisk: true value.
    Each panel has its own scale for $\beta$.}
    \label{fig:numerical_example1_joint}
\end{figure}
For the probit link, we also compute joint identified sets for $(\beta,\tau_{\mathrm{ASF}})$ under all three specifications, where $\tau_{\mathrm{ASF}}\equiv\Pr(\beta+A-V_1\ge0)$ is the ASF at the counterfactual covariate value $\bar x=1$.
Under the baseline, the identified set for $\beta$ is narrower than $2\times10^{-3}$ by $T=4$.
Under serial dependence, the set at $T=4$ is still wider, for both $\beta$ and the ASF, than the baseline set at $T=2$ (Figure~\ref{fig:numerical_example1_joint}).
Under fixed effect--error dependence, contraction is even slower, and a fifth period is needed before the identified set for $\beta$ excludes zero, so $T=5$ periods identify the sign of the coefficient without any restriction on the dependence between the fixed effect and the error terms.
In this design, the rapid contraction of identified sets with the number of time periods that one expects in a panel model is specific to the baseline specification.

\begin{figure}
    \centering
    \includegraphics[width=\linewidth]{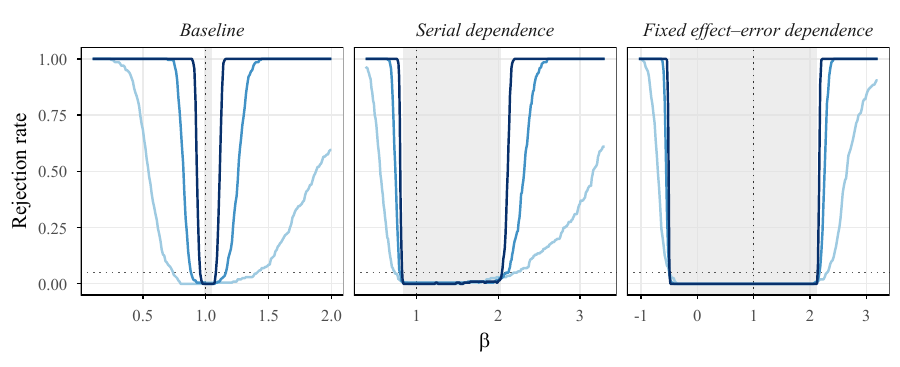}
    \caption{Rejection rates for $\beta$ in the probit design, at nominal level $0.05$, for $n=10^3,10^4,10^5$ (light to dark).
    Shaded: identified set.
    Dotted: $\beta_0=1$ and the nominal level.
    Each panel has its own scale for $\beta$.
    Based on $200$ Monte Carlo samples with $499$ bootstrap draws each.}
    \label{fig:numerical_example1_power}
\end{figure}
We next illustrate inference on the coefficient using the penalized bootstrap test of Section~\ref{sec:inference}.
Because it rejects only when the bounds settle the comparison of the statistic with the critical value (Supplemental Appendix~\ref{app:inference_computation}), the test rejects no more often than the exact test on the same bootstrap draws.
We fix the penalty at $\lambda_n=2\sqrt{\log n}$, which satisfies the conditions of Proposition~\ref{P:inf_finite}.
In the probit design at $T=2$, only $14\%$ of units switch both covariates and outcomes, which suggests that the nominal sample sizes overstate the information available.
In Figure~\ref{fig:numerical_example1_power}, the Monte Carlo rejection rate never exceeds the nominal level at any evaluated coefficient inside the identified set, in any specification, at any sample size.
Power outside the set rises sharply with the sample size.
At $n=10^5$, the rejection rate reaches $1$ away from the identified set in all three specifications, and remains low only near its boundary.
Additional Appendix~\ref{app:parametric_inference_implementation} records the implementation and reports the rejection curves for $c\in\{1/2,1,2,5\}$ in $\lambda_n=c\sqrt{\log n}$.
Constants at or below $2$ hold the nominal level in all three specifications at every sample size, while $c=5$ over-rejects inside the identified set, most at $n=10^3$.
Power increases with $c$, sharply at $n=10^3$ and little at $n=10^5$.

\subsection{Sequential exogeneity}
\label{subsec:numerical_semiparametric}

We now drop the known error distribution of Section~\ref{subsec:numerical_example1} and impose only the sequential exogeneity (SE) of Example~\ref{ex:binary_sequential}, which permits feedback from past outcomes into future covariates.
Existing identification results for the coefficient under SE require a parametric error distribution or a special regressor.%
\footnote{Supplemental Appendix~\ref{app:semiparametric_binary_choice} derives the program for SE.
Additional Appendix~\ref{app:se_details} carries the results behind the computed sets.}
As a benchmark we carry along conditional stationarity (CS), which equates the error distributions across periods given the whole covariate path,
\begin{equation}
\label{eq:cond_stat}
V_t\mid (X,A) \overset{d}{=} V_1\mid (X,A), \qquad t=2,\ldots,T,
\end{equation}
whose identified set for the coefficient is characterized by the rank inequalities of \textcite{Manski1987}, which are sharp \parencite{pakesPorterMomentInequalities,gaoIdentificationNonlinearDynamic2024}.
Because CS conditions on future covariates, it rules out feedback.
A CS structure becomes an SE structure once the fixed effect is redefined as the pair $(A,X)$, so the CS set is contained in the SE set.

Both are evaluated on one family of designs,
\begin{equation}
\label{eq:cs_dgp}
Y_{t} = \one\{(t-1)+\beta_0\, X_{t} + A - V_{t} \ge 0\},
\qquad
V_t \mid (A,X_1,\ldots,X_t) \sim \mathcal N(0,1) \text{ i.i.d.},
\end{equation}
with $T\in\{2,3\}$, $\beta_0=0.4$, covariates on $\{0,1,2\}$ with $X_1$ uniform, and $A=\gamma_0+\gamma_1 X_1+\sigma_a\,\xi$ where $\xi\sim\mathcal N(0,1)$ is independent of $X_1$ and the error terms, and, for $t\ge1$,
\begin{equation}
\label{eq:cs_covariate}
X_{t+1}=
\begin{cases}
\min\{\max\{X_{t}+2Y_{t}-1,\,0\},\,2\} & \text{with probability }\rho,\\
X_{t} & \text{with probability }(1-\rho)\phi,\\
\text{a uniform draw on }\{0,1,2\} & \text{with probability }(1-\rho)(1-\phi).
\end{cases}
\end{equation}
The first line is feedback, the second persistence, and the third a fresh uniform draw, so $\rho=\phi=0$ makes the covariate independent of the past.
In both specifications the coefficient on the time trend is known and equal to one.
The baseline sets $(\gamma_0,\gamma_1,\sigma_a)=(-1,0.5,1)$ and $\rho=\phi=0$.
Because each error is drawn independently of the past, SE holds at every design in the family, while CS holds exactly when $\rho=0$.
Figure~\ref{fig:numerical_beta_comparison} varies feedback $\rho$, persistence $\phi$, and the dependence $\gamma_1$ of the fixed effect on the initial covariate, one at a time.

\begin{figure}
    \centering
    \includegraphics[width=\linewidth]{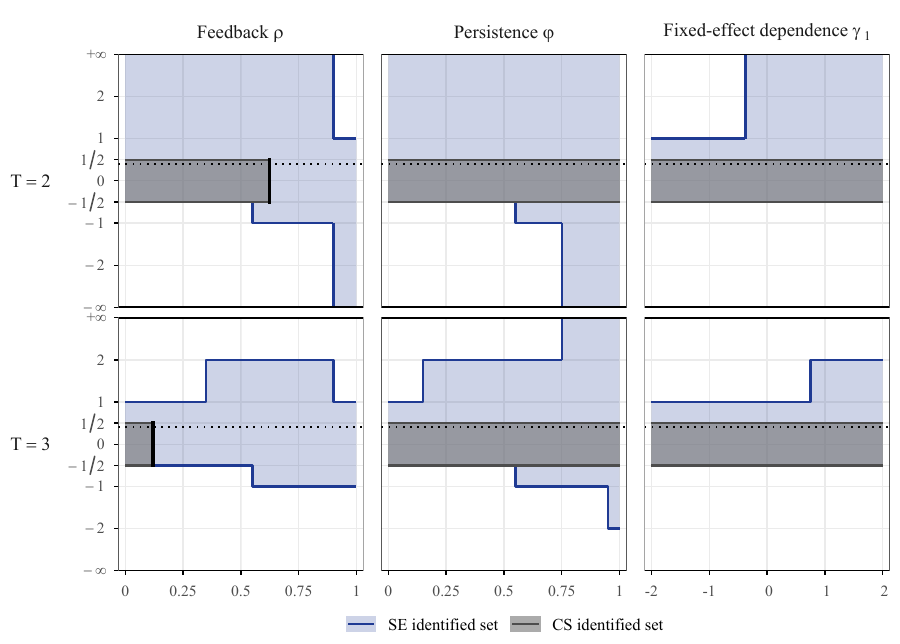}
    \caption{Identified sets for $\beta$ under SE (blue) and CS (grey), at $T=2$ (top) and $T=3$ (bottom), varying one design parameter per column.
    A band reaching the frame is unbounded on that side.
    The CS band ends where the CS set becomes empty.
    Dotted horizontal: $\beta_0=0.4$.}
    \label{fig:numerical_beta_comparison}
\end{figure}

Feedback breaks CS but not SE.
Under both CS and SE, the coefficient enters only through how it orders the period indices $t-1+\beta x_t$ along each covariate path, so the endpoints of both sets are values at which two indices coincide (Additional Appendix~\ref{app:se_details}).
At every design we compute, the CS set is either empty or the same interval $(-1/2,1/2)$, at both horizons.
The set is empty once $\rho$ exceeds about $0.62$ at two periods and about $0.12$ at three.
A third period leaves the nonempty CS set unchanged but lowers the feedback strength at which it becomes empty.

Because SE holds at every design, its identified set is never empty.
At the baseline the set is $(-1/2,+\infty)$ at two periods and $(-1/2,1)$ at three, so the sign of the coefficient is not identified and the third period supplies an upper bound.

Unlike the CS set, the SE set varies with each design parameter.
At two periods, feedback widens it to $(-1,+\infty)$ and, at the strongest feedback, replaces the lower bound with an upper bound at $1$, so the set is not monotone in $\rho$.%
\footnote{The feedback move also shifts the marginal distribution of the later covariates away from uniform, so the $\rho$ axis varies the composition of covariate paths together with feedback.}
Greater persistence only widens the set: from $\phi=0.75$ the two-period set is all of $\R$.
Only $\gamma_1$ shrinks the set, adding an upper bound at $1$ once $\gamma_1\le-1/2$.
Adding a third period can only shrink the identified set.
Under strong feedback it recovers a finite lower bound that the two-period set has lost, and under persistence it recovers a lower bound that falls from $-1/2$ to $-2$ as $\phi$ rises.

\appendix
\section{Proofs}
\label{app:proofs}

\begin{proof}[Proof of Theorem~\ref{thm:main_sharpness}]
Fix $\theta\in\Theta$.
If $\Gamma_\theta=\varnothing$, then $\Mtheta=\varnothing$ and $T(\theta)=+\infty$, so $\theta$ lies in neither $\Thetaw$ nor $\Thetam$.
Hence we assume that $\Gamma_\theta\ne\varnothing$ and show that $\mu^*\in\Mtheta$ if and only if $T(\theta)=0$.

We use three facts about $\Mtheta$: it is convex; it is dominated by $\lambda_\theta$; and taking the TV closure does not change the inner infima defining $T(\theta)$.
First, the set $\Gamma_\theta$ is convex, because the restrictions of Assumption~\ref{asm:moment_conditions} are linear in $\gamma$ and integrability of the moment functions is preserved under convex combinations.
The operator $\Lgen$ of Assumption~\ref{asm:linear_operator} preserves convex combinations, so $\mathcal M_\theta=\Lgen\Gamma_\theta$ is also convex.
Convexity passes to the TV closure, because a convex combination of TV limits is the TV limit of the convex combinations.
Second, domination survives TV limits: if $\mu_n\in\mathcal M_\theta$ satisfy $\norm{\mu_n-m}_{\mathrm{TV}}\to0$, then $\mu_n\ll\lambda_\theta$ by Assumption~\ref{asm:domination}, and $\lambda_\theta(B)=0$ implies $m(B)=\lim_n\mu_n(B)=0$; thus every element of $\Mtheta$ is dominated by $\lambda_\theta$.
Third, for each $\phi\in\Phi(\mathcal Z)$, TV-continuity of $\mu\mapsto\EE{\mu}{\phi}$ implies that the infimum over $\mathcal M_\theta$ equals the infimum over its closure $\Mtheta$, which is the second equality in~\eqref{eq:discrepancy_function_model}.

Suppose $\mu^*\in\Mtheta$.
For every $\phi\in\Phi(\mathcal Z)$, the measure $\mu^*$ is feasible in the inner problem in~\eqref{eq:discrepancy_function_model}, so the inner infimum is at most $\EE{\mu^*}{\phi}-\EE{\mu^*}{\phi}=0$.
Taking the supremum over $\phi$, with $\phi\equiv0$ yielding zero, gives $T(\theta)=0$.

Conversely, suppose $\mu^*\notin\Mtheta$.
We distinguish the two cases (i) $\mu^*\not\ll\lambda_\theta$ and (ii) $\mu^*\ll\lambda_\theta$.
\emph{Case (i).}
Pick a measurable set $B$ with $\lambda_\theta(B)=0<\mu^*(B)$; every $m\in\Mtheta$ assigns $m(B)=0$ by the second fact, so $\phi(z)=\one\{z\in B\}$ gives $T(\theta)\ge\mu^*(B)>0$.

\emph{Case (ii).}
Here the $\sigma$-finite measure $\lambda_\theta$ supplied by Assumption~\ref{asm:domination} dominates $\mu^*$, and by the second fact above it also dominates every $m\in\Mtheta$.
For each probability measure $\mu$ dominated by $\lambda_\theta$, write $f_\mu\equiv\frac{\d\mu}{\d\lambda_\theta}\in L^1(\lambda_\theta)$.
The density map $\mu\mapsto f_\mu$ is affine, and Scheff\'e's theorem \parencite[Lemma~2.1]{tsybakov2008introduction} gives $\norm{\mu-m}_{\mathrm{TV}}=\tfrac12\norm{f_\mu-f_m}_{L^1(\lambda_\theta)}$ for any probability measures $\mu$ and $m$ dominated by $\lambda_\theta$.

Although Theorem~\ref{thm:main_sharpness} may be proved directly with the separating hyperplane theorem, we give a minimax argument which makes the connection between $T(\theta)$ and the TV norm transparent.
Let $C_\theta\equiv\{f_m:m\in\Mtheta\}$, which is the convex set of probability densities corresponding to $\Mtheta$, and let $f^*\equiv f_{\mu^*}$.
For each $\phi\in\Phi(\mathcal{Z})$, let $\tilde{\phi}$ denote its equivalence class in $L^\infty(\lambda_\theta)$.
The set $\tilde{\Phi}\equiv\{\tilde{\phi}:\phi\in\Phi(\mathcal{Z})\}$ is the set of elements of $L^\infty(\lambda_\theta)$ with values in $[0,1]$ $\lambda_\theta$-almost everywhere.

Let $\overline{B}_{L^\infty}$ denote the closed unit ball of $L^\infty(\lambda_\theta)$, that is, the set of elements bounded between $-1$ and $1$ $\lambda_\theta$-almost everywhere.
Then $\tilde{\Phi}=\{(1+\psi)/2:\psi\in\overline{B}_{L^\infty}\}$ is the image of the weak-* compact set $\overline{B}_{L^\infty}$ \parencite[Theorem~III.3.1]{C1994} under a weak-* continuous affine map, and is therefore weak-* compact.
Sion's minimax theorem \parencite{Sion1958} and $L^1$--$L^\infty$ duality \parencite[Theorem~III.5.6]{C1994} imply
\begin{align*}
    T(\theta)
    &=\sup_{\tilde{\phi}\in\tilde{\Phi}}\inf_{f\in C_\theta}\int_{\mathcal{Z}}\tilde{\phi}(f^*-f)\,\d\lambda_\theta
    =\inf_{f\in C_\theta}\sup_{\tilde{\phi}\in\tilde{\Phi}}\int_{\mathcal{Z}}\tilde{\phi}(f^*-f)\,\d\lambda_\theta\\
    &=\inf_{f\in C_\theta}\tfrac12\norm{f^*-f}_{L^1(\lambda_\theta)}
    =\inf_{\mu\in\Mtheta}\norm{\mu^*-\mu}_{\mathrm{TV}}.
\end{align*}
Therefore, $T(\theta)=0$ if and only if $\mu^*$ is the TV limit of a sequence in $\Mtheta$, which holds if and only if $\mu^*\in\Mtheta$ because $\Mtheta$ is TV closed.
\end{proof}

\begin{proof}[Proof of Theorems~\ref{thm:exact_computation} and~\ref{thm:enclosure}]
We prove the enclosure of Theorem~\ref{thm:enclosure} first; Theorem~\ref{thm:exact_computation} is the case $r_{\mathrm{low}}=0$, which row certification supplies.
The kernel property and finiteness of $\mathcal Z$ make $q_{\phi,\theta}$ bounded and measurable.
Its definition and~\eqref{eq:kernel_operator} give, for every $\phi\in\Phi(\mathcal Z)$ and for every $\gamma\in\Pa(\mathcal W)$,
\begin{equation}
\label{eq:qphi-integration-proof}
\EE{\gamma}{q_{\phi,\theta}(W)}
=
\sum_{z\in\mathcal Z}\phi(z)\int_{\mathcal W}K_\theta(\{z\}\mid w)\,\d\gamma(w)
=
\EE{\Lgen\gamma}{\phi}.
\end{equation}

The dual of~\eqref{eq:TLD_restricted} associates a nonnegative weight $p_w$ with each retained row.
The free variables $\zeta$ and $u$ and the restriction $v\ge0$ give, respectively,
$$
\sum_{w\in\mathcal W'}p_w=1,\quad
\sum_{w\in\mathcal W'}p_wg_{1,j}(w;\theta)=0
\quad(j\in\mathcal J'),\quad
\sum_{w\in\mathcal W'}p_wg_{2,k}(w;\theta)\le0
\quad(k\in\mathcal K').
$$
Solving for the dual variables associated with the bounds on $\phi$ coordinatewise, the dual minimizes, over $p$ satisfying these restrictions,
$$
\sup_{\phi\in\Phi(\mathcal Z)}
\left\{
\EE{\mu^*}{\phi}
-
\sum_{w\in\mathcal W'}p_wq_{\phi,\theta}(w)
\right\}.
$$
This is the problem defining $\TLD{\theta}{\mathcal W'}{\mathcal J',\mathcal K'}$, and the same problem with $\mathcal J$ and $\mathcal K$ in place of $\mathcal J'$ and $\mathcal K'$ defines $\TLD{\theta}{\mathcal W'}{\mathcal J,\mathcal K}$.
Under column certification the latter has a nonempty feasible set, a closed subset of the compact probability simplex, so it attains its value at some $p^\dagger$ with $\gamma_{p^\dagger}\in\Gamma_\theta$.
By~\eqref{eq:column_certificate} that value is $T_{\mathrm{LP}}(\theta)$, so $p^\dagger$ is also optimal for the dual displayed above.
The finite LP is feasible at $(\phi,\zeta,u,v)=(0,0,0,0)$, while its certified dual solution is feasible with finite value.
Finite-dimensional LP strong duality therefore implies that the finite LP has an optimizer and gives
$$
T_{\mathrm{LP}}(\theta)
=
\sup_{\phi\in\Phi(\mathcal Z)}
\left\{
\EE{\mu^*}{\phi}
-
\sum_{w\in\mathcal W'}p^\dagger_wq_{\phi,\theta}(w)
\right\}.
$$
Because $\gamma_{p^\dagger}\in\Gamma_\theta$, the inner infimum defining $T(\theta)$ is at most its value at $\gamma_{p^\dagger}$; using~\eqref{eq:qphi-integration-proof} at $\gamma_{p^\dagger}$ and $\EE{\gamma_{p^\dagger}}{q_{\phi,\theta}(W)}=\sum_{w\in\mathcal W'}p^\dagger_wq_{\phi,\theta}(w)$ yields
$$
T(\theta)
\le
\sup_{\phi\in\Phi(\mathcal Z)}
\left\{
\EE{\mu^*}{\phi}
-
\EE{\gamma_{p^\dagger}}{q_{\phi,\theta}(W)}
\right\}
=T_{\mathrm{LP}}(\theta).
$$
Moreover, $\Gamma_\theta$ is nonempty because it contains $\gamma_{p^\dagger}$, and the test function $\phi\equiv0$ gives $T(\theta)\ge0$.

Now fix any optimizer $(\phi^*,\zeta^*,u^*,v^*)$ of the finite LP, with row residual $r^*$, and any finite $r_{\mathrm{low}}\le r^*$.
The definition in~\eqref{eq:row_oracle_residual} implies
$$
q_{\phi^*,\theta}(w)
\le
\zeta^*-r_{\mathrm{low}}
+
\sum_{j\in\mathcal J'}u_j^*g_{1,j}(w;\theta)
+
\sum_{k\in\mathcal K'}v_k^*g_{2,k}(w;\theta)
\quad\text{for all }w\in\mathcal W.
$$
For any $\gamma\in\Gamma_\theta$, each moment function is $\gamma$-integrable by Assumption~\ref{asm:moment_conditions}, so integrating this inequality and using the moment restrictions together with $v^*\ge0$ gives
$$
\EE{\gamma}{q_{\phi^*,\theta}(W)}
\le
\zeta^*-r_{\mathrm{low}}
+
\sum_{j\in\mathcal J'}u_j^*\EE{\gamma}{g_{1,j}(W;\theta)}
+
\sum_{k\in\mathcal K'}v_k^*\EE{\gamma}{g_{2,k}(W;\theta)}
\le
\zeta^*-r_{\mathrm{low}}.
$$
Using~\eqref{eq:qphi-integration-proof} and $T_{\mathrm{LP}}(\theta)=\EE{\mu^*}{\phi^*}-\zeta^*$, this reads $\inner{\phi^*,\mu^*-\Lgen\gamma}\ge T_{\mathrm{LP}}(\theta)+r_{\mathrm{low}}$ for every $\gamma\in\Gamma_\theta$.
Taking the infimum over $\gamma$ and bounding the outer supremum below by its value at $\phi^*$ gives
$$
T_{\mathrm{LP}}(\theta)+r_{\mathrm{low}}
\le
\inf_{\gamma\in\Gamma_\theta}
\inner{\phi^*,\mu^*-\Lgen\gamma}
\le
T(\theta).
$$
Combining this inequality with $T(\theta)\ge0$ and $T(\theta)\le T_{\mathrm{LP}}(\theta)$ proves~\eqref{eq:certified_enclosure} for finite $r_{\mathrm{low}}$; for $r_{\mathrm{low}}=-\infty$, the display is the pair of bounds $0\le T(\theta)\le T_{\mathrm{LP}}(\theta)$ just cited.
This proves Theorem~\ref{thm:enclosure}.

For Theorem~\ref{thm:exact_computation}, add row certification.
The argument above produced an optimizer of~\eqref{eq:TLD_restricted}, and row certification makes the row residual~\eqref{eq:row_oracle_residual} zero at one of them, so~\eqref{eq:certified_enclosure} applies at that optimizer with $r_{\mathrm{low}}=0$ and yields
$$
\max\{0,T_{\mathrm{LP}}(\theta)\}\le T(\theta)\le T_{\mathrm{LP}}(\theta).
$$
Taking $\phi\equiv0$, $\zeta=0$, $u=0$ and $v=0$ in~\eqref{eq:TLD_restricted} shows $T_{\mathrm{LP}}(\theta)\ge0$, so the outer bounds coincide and~\eqref{eq:exact_computation} follows.
\end{proof}

\begin{proof}[Proof of Proposition~\ref{P:inf_finite}]
We establish the result in three parts, corresponding to the bound \eqref{E:inf_bound}, the limiting distribution \eqref{E:inf_convergence}, and the bootstrap consistency \eqref{E:inf_bootstrap}.

\paragraph{Preliminaries.}
For notational convenience we denote $\Phi(\mathcal{Z})$ as $\Phi$, and let $\ell^\infty(\Phi)$ denote the space of uniformly bounded functions from $\Phi$ to $\R$ equipped with the uniform norm $\norm{\cdot}_\infty$.
By Assumption~\ref{asm:finite_Z}, $\mathcal{Z}$ is finite, so the set $\Phi$ can be identified with the set $[0,1]^{\mathcal Z}$ by the map $\phi \mapsto (\phi(z))_{z \in \mathcal Z}$.
The set $\Phi$ has envelope $F \equiv 1$. 
For $0<\ve\le1$, the set $\Phi$ can be covered by at most $C\ve^{-|\mathcal Z|}$ uniform-norm balls of radius $\ve$, for a constant $C$ depending only on $|\mathcal Z|$.
As $\norm{\phi - \phi'}_{Q,2} \le \norm{\phi - \phi'}_\infty$ for every probability measure $Q$, $\Phi$ meets the uniform entropy requirement of Theorem~2.8.3 of \textcite{VW1996}, and is Donsker and pre-Gaussian uniformly in $\mu^* \in \mathbf{P}$.
Because $\Phi=[0,1]^{\mathcal Z}$ is separable under $\|\cdot\|_\infty$, the empirical-process suprema below are measurable and no outer-probability qualifications are needed.

In particular, the processes $\mathbb{G}_{n,\mu^*}$ converge to tight limits $\mathbb{G}_{\mu^*}$ uniformly in the bounded Lipschitz metric, and are uniformly asymptotically tight and equicontinuous with respect to $\norm{\cdot}_\infty$. For every $\theta$ and $\phi$,
$
\eta_{\theta,n}(\phi)-\eta_{\theta,\mu^*}(\phi)
=
\EE{n}{\phi}-\EE{\mu^*}{\phi},
$
with no dependence on $\theta$ in the difference. Hence,
\begin{align}
    \sup_{\phi \in \Phi} \sup_{\substack{\theta \in \Theta}}
    | \eta_{\theta, n}(\phi) - \eta_{\theta, \mu^*}(\phi)|
    &=
    \sup_{\phi \in \Phi}
    n^{-1/2}|\mathbb{G}_{n,\mu^*}(\phi)|
    =
    O_p(n^{-1/2})
    \text{ uniformly in }\mu^* \in \mathbf{P}.
    \label{E:etaconv}
\end{align}
Since $\lambda_n=o(\sqrt n)$, it follows that $\lambda_n \eta_{\theta, n} = \lambda_n \eta_{\theta, \mu^*} + o_p(1)$ uniformly in $\mu^*$ and $\theta \in \Theta$.
Similarly, $\lambda_n |T_n(\theta) - T_{\mu^*}(\theta) | = o_p(1)$ uniformly in $\mu^* \in \mathbf{P}, \theta \in \Theta$.

\paragraph{Step one: proof of \eqref{E:inf_bound}.}
Let $\mu^* \in \mathbf{P}$ and $\theta \in \Theta_{\mathrm{I}}(\mu^*)$.
The map $\eta_{\theta, n}: \Phi \ra \R$ is continuous with respect to the uniform norm and achieves its maximum $T_n(\theta)$ at a point which we may denote $\hat{\phi}_n$.
As $\mu^* \in \overline{\mathcal{M}}_\theta$,
$$
    \sqrt{n}T_n(\theta)
    =
    \sqrt{n} \inf_{\mu \in \overline{\mathcal{M}}_\theta}
    (\EE{n}{\hat{\phi}_n} - \EE{\mu}{\hat{\phi}_n})
    \le
    \sqrt{n}
    (\EE{n}{\hat{\phi}_n} - \EE{\mu^*}{\hat{\phi}_n})
    =
    \mathbb{G}_{n,\mu^*}(\hat{\phi}_n).
$$
Evaluating the supremum in \eqref{E:inf_bound} at $\hat{\phi}_n$ yields precisely $\mathbb{G}_{n,\mu^*}(\hat{\phi}_n)$ and proves \eqref{E:inf_bound}.

\paragraph{Step two: proof of \eqref{E:inf_convergence}.}
Let $\mu^* \in \mathbf{P}$ and $\theta \in \Theta$ be arbitrary and fixed.
The map $\phi \mapsto \eta_{\theta, \mu^*}(\phi)$ is continuous with respect to uniform norm on $\Phi$, and $\Phi$ is uniform-norm compact.
Therefore, $\eta_{\theta, \mu^*}$ attains its maximum value of $T_{\mu^*}(\theta)$ over $\Phi$, $K_{\mu^*}(\theta)$ is nonempty, and
\begin{align}
    \sqrt{n}(T_n(\theta) - T_{\mu^*} (\theta ))
    &=
    \sup_{\phi \in\Phi}
    \left(
    \mathbb{G}_{n,\mu^*}(\phi)
    +
    \sqrt{n} \eta_{\theta, \mu^*}(\phi)
    -
    \sqrt{n} T_{\mu^*}(\theta)
    \right)
    \ge
    \sup_{\phi \in K_{\mu^*}(\theta)}
    \mathbb{G}_{n,\mu^*}(\phi).
    \label{E:lowerb1}
\end{align}

Let $\delta>0$ be arbitrary and let $K_{\mu^*}^\delta(\theta)$ denote the $\delta$-open neighborhood of $K_{\mu^*}(\theta)$ in $\Phi$.
If $K_{\mu^*}^\delta(\theta)=\Phi$, then \eqref{E:deltaineq} below is immediate.
Otherwise, $\Phi\setminus K_{\mu^*}^\delta(\theta)$ is nonempty and compact, so continuity of $\eta_{\theta,\mu^*}$ and the definition of $K_{\mu^*}(\theta)$ imply
$$
    c
    \equiv
    \sup_{\phi\notin K_{\mu^*}^\delta(\theta)}
    \eta_{\theta,\mu^*}(\phi)
    <
    T_{\mu^*}(\theta).
$$
Because $\mathbb{G}_{n,\mu^*}$ is uniformly $O_p(1)$,
\begin{align*}
&\limsup_{n \ra \infty}
\PP{\mu^*}{
\sup_{\phi \not\in K_{\mu^*}^\delta (\theta)}
\left(
\mathbb{G}_{n,\mu^*}(\phi)
+
\sqrt{n} \eta_{\theta, \mu^*} (\phi)
-
\sqrt{n} T_{\mu^*}(\theta)
\right)
\ge
\sup_{\phi \in K_{\mu^*}(\theta)}
\mathbb{G}_{n,\mu^*}(\phi)
}
\\
&\quad\le
\limsup_{n \ra \infty}
\PP{\mu^*}{
\sup_{\phi \in \Phi} \mathbb{G}_{n,\mu^*}(\phi)
-
\inf_{\phi \in \Phi} \mathbb{G}_{n,\mu^*}(\phi)
\ge
\sqrt{n}
\left(
T_{\mu^*}(\theta)-c
\right)
}
=
0.
\end{align*}

In light of the equality and the lower bound in~\eqref{E:lowerb1}, the preceding display implies that
\begin{align}
&\limsup_{n \ra \infty}
\PP{\mu^*}{
\sqrt{n}(T_n(\theta)-T_{\mu^*}(\theta))
>
\sup_{\phi \in K_{\mu^*}^\delta(\theta)}
\mathbb{G}_{n,\mu^*}(\phi)
}
\nonumber\\
&\le
\limsup_{n \ra \infty}
\PP{\mu^*}{
\sqrt{n}(T_n(\theta)-T_{\mu^*}(\theta))
>
\sup_{\phi \in K_{\mu^*}^\delta(\theta)}
(
\mathbb{G}_{n,\mu^*}(\phi)
+
\sqrt{n}\eta_{\theta,\mu^*}(\phi)
-
\sqrt{n}T_{\mu^*}(\theta)
)
}
\nonumber\\
&=
0.
\label{E:deltaineq}
\end{align}

By asymptotic equicontinuity of the empirical process $\mathbb{G}_{n,\mu^*}$, for every $\ve > 0$ there exists some $\delta> 0$ such that
\begin{align}
\label{E:unifaseq}
\limsup_{n \ra \infty}
\sup_{\mu^* \in \mathbf{P}}
\PP{\mu^*}{
\sup_{\norm{\phi - \phi'}_\infty < \delta}
|
\mathbb{G}_{n,\mu^*}(\phi)
-
\mathbb{G}_{n,\mu^*}(\phi')
|
>
\ve
}
<
\ve.
\end{align}

Let $\ve >0$ be arbitrary and choose $\delta$ to satisfy \eqref{E:unifaseq}.
Then,
\begin{align*}
\limsup_{n \ra \infty}
\sup_{\mu^* \in \mathbf{P}}
\PP{\mu^*}{
\left|
\sup_{\phi \in K_{\mu^*}^\delta(\theta)}
\mathbb{G}_{n,\mu^*}(\phi)
-
\sup_{\phi \in K_{\mu^*}(\theta)}
\mathbb{G}_{n,\mu^*}(\phi)
\right|
>
\ve
}
<
\ve.
\end{align*}

This bound, the fact that $\ve$ is arbitrary, \eqref{E:deltaineq}, and the lower bound in \eqref{E:lowerb1} imply
\begin{align*}
\sqrt n\{T_n(\theta)-T_{\mu^*}(\theta)\}
-
\sup_{\phi\in K_{\mu^*}(\theta)}
\mathbb G_{n,\mu^*}(\phi)
=
o_p(1).
\end{align*}
Since
$
G\mapsto
\sup_{\phi\in K_{\mu^*}(\theta)}G(\phi)
$
is $1$-Lipschitz on $\ell^\infty(\Phi)$ and
$\mathbb G_{n,\mu^*}\Rightarrow\mathbb G_{\mu^*}$,
the continuous mapping theorem and Slutsky's theorem give \eqref{E:inf_convergence}.

\paragraph{Step three: proof of \eqref{E:inf_bootstrap}.}
Let $\xi: \ell^\infty(\Phi) \ra \R$ be the functional
$
\xi:
G
\mapsto
\sup_{\phi \in \Phi}
\inf_{\mu \in \overline{\mathcal{M}}_\theta}
\left(
G(\phi)-\EE{\mu}{\phi}
\right).
$
Note that $\xi$ is Lipschitz continuous.
Let $\mathcal{G}: \mathcal{P}(\mathcal{Z}) \ra \ell^\infty(\Phi)$ be the functional which maps
$
\mathcal{G}(\mu)=\EE{\mu}{\cdot},
$
and note that, with $\mu_n\equiv n^{-1}\sumn\delta_{Z_i}$ the empirical measure,
$
T_n(\theta)=\xi(\mathcal{G}(\mu_n)).
$

For any $H\in\ell^\infty(\Phi)$, define
$
f_H(t)
\equiv
\xi\{\mathcal G(\mu^*)+tH\}
$
for $t\in\mathbb R$.
Because $\xi$ is finite, convex, and Lipschitz, $f_H$ is a finite convex function on $\mathbb R$.
Theorem~23.1 of \textcite{rockafellar1970convex}, applied to $f_H$, implies that
$
\lim_{t\downarrow0}
t^{-1}
[
\xi\{\mathcal G(\mu^*)+tH\}
-
\xi\{\mathcal G(\mu^*)\}
]
$
exists and is finite for every $H\in\ell^\infty(\Phi)$.
Thus $\xi$ is directionally differentiable at $\mathcal G(\mu^*)$.
Since $\xi$ is Lipschitz, Proposition~2.49 of \textcite{bonnans2000perturbation} implies that $\xi$ is Hadamard directionally differentiable there.

Lemma~A.2 of \textcite{linton2010} implies that
$
\mathbb{G}_{n, \mu^*}^*
\overset{\mu^*}{\Rightarrow}
\mathbb{G}_{\mu^*}
$
uniformly in $\mu^*$, and we have already shown that
$\sqrt{n}
\left(
\mathcal{G}(\mu_n)
-
\mathcal{G}(\mu^*)
\right)
=
\mathbb{G}_{n,\mu^*}
\Rightarrow
\mathbb{G}_{\mu^*},
$
where $\mathbb{G}_{\mu^*}$ is tight.
Therefore, by Theorem~3.1 of \textcite{HongLi2018},
\begin{align}
\label{E:hongliapp}
\lambda_n
\left(
\xi(
\mathcal{G}(\mu_n)
+
\lambda_n^{-1}
\mathbb{G}_{n,\mu^*}^*
)
-
\xi(\mathcal{G}(\mu_n))
\right)
\overset{\mu^*}{\Rightarrow}
\xi'(\mathbb{G}_{\mu^*}),
\end{align}
where $\xi'$ denotes the directional derivative.
The same result shows that the limit in \eqref{E:inf_convergence} is $\xi'(\mathbb{G}_{\mu^*})$.
The left-hand side of \eqref{E:hongliapp} is precisely the left-hand side of \eqref{E:inf_bootstrap}, which establishes \eqref{E:inf_bootstrap}.

\end{proof}

\printbibliography[heading=bibintoc]

\clearpage
\begin{refsection}
\setcounter{page}{1}
\renewcommand{\thepage}{S\arabic{page}}
\setcounter{section}{0}
\renewcommand{\thesection}{S\Alph{section}}
\renewcommand{\theHsection}{S\Alph{section}}
\begin{center}
  {\Large\bfseries Supplemental Appendix to\\[0.5ex]
  ``An Adversarial Approach to Identification, Computation, and Inference in Models with a Linear-in-Measures Representation''}\\[1ex]
  Irene Botosaru, Isaac Loh, and Chris Muris
\end{center}

\section{Column-and-row generation algorithm}
\label{app:algorithm}

We maintain the assumptions of Section~\ref{sec:computation} and fix $\theta\in\Theta$.
This appendix describes the finite linear program, how column certification is obtained, the row bound, and the feasibility check, and then states the column-and-row generation algorithm.

\paragraph{Auxiliary finite linear program (FINLP).}
At finite sets $(\mathcal W',\mathcal J',\mathcal K')$ with $\mathcal W'\ne\varnothing$, FINLP solves the auxiliary finite linear program~\eqref{eq:TLD_restricted} and its LP dual.
We solve these programs with standard linear programming solvers (Gurobi and HiGHS).
When the retained moment restrictions admit probability weights on $\mathcal W'$, the two programs have the same value and yield an optimizer $(\phi^*,\zeta^*,u^*,v^*)$ of~\eqref{eq:TLD_restricted}.

\paragraph{Column certification.}
The algorithm obtains column certification by construction, through the following lemma.

\begin{lemma}[Finite column reduction]
\label{lem:finite_column_reduction}
Fix a nonempty finite support $\mathcal W'\subseteq\mathcal W$, and suppose $\mathcal K$ is finite.
There exists $\mathcal J'\subseteq\mathcal J$, with $|\mathcal J'|\le|\mathcal W'|$, such that
$$
\Gamma_\theta(\mathcal W',\mathcal J',\mathcal K)
=
\Gamma_\theta(\mathcal W',\mathcal J,\mathcal K).
$$
If this set is nonempty, $(\mathcal J',\mathcal K)$ is column-certified for $\mathcal W'$.
\end{lemma}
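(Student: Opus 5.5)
Because $\mathcal W'$ is finite, each moment function $g_{1,j}(\cdot;\theta)$ is determined on $\mathcal W'$ by its vector of values $a_j\equiv(g_{1,j}(w;\theta))_{w\in\mathcal W'}\in\R^{\mathcal W'}$. For $\gamma_p$ supported on $\mathcal W'$, the equality restriction $j$ reads $a_j'p=0$, and integrability is automatic. The plan is therefore a linear-algebra argument in $\R^{\mathcal W'}$. Let $S\equiv\operatorname{span}\{a_j:j\in\mathcal J\}\subseteq\R^{\mathcal W'}$. This subspace has dimension at most $|\mathcal W'|$, so I pick $\mathcal J'\subseteq\mathcal J$ such that $\{a_j:j\in\mathcal J'\}$ is a basis of $S$. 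The existence of such a subfamily is standard: any spanning family of a finite-dimensional space contains a basis, for instance by greedily adding vectors not in the current span. This gives $|\mathcal J'|\le|\mathcal W'|$.

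The key step is the equality of feasible sets. The inclusion $\Gamma_\theta(\mathcal W',\mathcal J,\mathcal K)\subseteq\Gamma_\theta(\mathcal W',\mathcal J',\mathcal K)$ is trivial because fewer constraints are imposed. For the reverse inclusion, take $p$ feasible for $(\mathcal J',\mathcal K)$. Each $a_j$ with $j\in\mathcal J$ is a finite linear combination $a_j=\sum_{i\in\mathcal J'}c_{ji}a_i$, so $a_j'p=\sum_i c_{ji}a_i'p=0$. The simplex constraints and the inequalities indexed by the full finite set $\mathcal K$ are identical in both sets. Hence the two sets coincide. Finiteness of $\mathcal K$ is used only so that $\mathcal K$ itself is a legitimate finite retained set in~\eqref{eq:restricted_gamma}; inequalities cannot be reduced by spanning arguments, which is why they are kept in full.

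Column certification then follows directly from Definition~\ref{def:column_certified}. The restricted discrepancy~\eqref{eq:T_restricted} depends on $(\mathcal J',\mathcal K')$ only through the feasible set $\Gamma_\theta(\mathcal W',\mathcal J',\mathcal K')$ of the inner infimum. Equal feasible sets therefore give $\TLD{\theta}{\mathcal W'}{\mathcal J',\mathcal K}=\TLD{\theta}{\mathcal W'}{\mathcal J,\mathcal K}$, which is~\eqref{eq:column_certificate}. Nonemptiness of $\Gamma_\theta(\mathcal W',\mathcal J,\mathcal K)$ is exactly the hypothesis of the lemma.

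I do not expect a real obstacle. The only point needing care is that $\mathcal J$ may be uncountable, so I will phrase the basis selection as extraction from a spanning set of the finite-dimensional space $S$ rather than as an induction over $\mathcal J$. I will also note that when $S=\{0\}$ the choice $\mathcal J'=\varnothing$ works.
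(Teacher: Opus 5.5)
Your proposal is correct and follows the paper's proof essentially step for step. Like the paper, you extract a basis of the span of the value vectors $(g_{1,j}(w;\theta))_{w\in\mathcal W'}$, deduce from it that the feasible sets on $\mathcal W'$ coincide, and then read off column certification from Definition~\ref{def:column_certified} using the nonemptiness hypothesis.
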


\begin{proof}
For each $j\in\mathcal J$, let $g_j\equiv(g_{1,j}(w;\theta))_{w\in\mathcal W'}\in\R^{(\mathcal W')}$.
Choose $\mathcal J'\subseteq\mathcal J$ such that $\{g_j:j\in\mathcal J'\}$ is a basis of the span of $\{g_j:j\in\mathcal J\}$, so that $|\mathcal J'|\le|\mathcal W'|$.
Every $g_j$ with $j\in\mathcal J$ is a linear combination of $\{g_j:j\in\mathcal J'\}$, so a probability vector on $\mathcal W'$ satisfies the equalities indexed by $\mathcal J'$ if and only if it satisfies those indexed by $\mathcal J$.
Hence $\Gamma_\theta(\mathcal W',\mathcal J',\mathcal K)=\Gamma_\theta(\mathcal W',\mathcal J,\mathcal K)$.
Because $\mathcal K$ is finite, $(\mathcal J',\mathcal K)$ is a pair of finite index sets.
If the common feasible set is nonempty, the restricted discrepancies at $(\mathcal J',\mathcal K)$ and $(\mathcal J,\mathcal K)$ coincide, and Definition~\ref{def:column_certified} gives column certification.
\end{proof}
Infinitely many moment inequalities need not admit a finite subfamily with the same feasible set.
The algorithm retains every moment restriction of a finite family and, for an infinite family of equalities, a basis as in Lemma~\ref{lem:finite_column_reduction} for the current support, recomputed whenever the support grows.
By Lemma~\ref{lem:finite_column_reduction}, column certification then holds whenever the feasible set on $\mathcal W'$ is nonempty, which FEAS establishes.
With infinitely many moment inequalities, column certification has to be established separately, for example through the characterization that follows Definition~\ref{def:column_certified}.
In Example~\ref{ex:binary_sequential} the cell equalities at each retained atom already impose the full family (Supplemental Appendix~\ref{app:se_exact_pricing}), and Example~\ref{ex:binary_parametric} has no moment restrictions.

\paragraph{Row bound (RB).}
RB returns a row bound: a number $r_{\mathrm{low}}$ at most the row residual $r^*$ of~\eqref{eq:row_oracle_residual} at an optimizer $(\phi^*,\zeta^*,u^*,v^*)$ of~\eqref{eq:TLD_restricted}, or at most the weakly larger residual of Remark~\ref{rem:row_measures}.
The value $r_{\mathrm{low}}=-\infty$ is always valid.
Since $\zeta^*$ is the only term of~\eqref{eq:slack} that does not depend on $w$, any certified lower bound over $\mathcal W$ on the remaining terms, added to $\zeta^*$, gives such an $r_{\mathrm{low}}$.
Additional Appendix~\ref{app:parametric_implementation} computes a bound this way, and Supplemental Appendix~\ref{app:se_exact_pricing} uses the admissible-measure version of the row residual in Remark~\ref{rem:row_measures}.

\paragraph{Feasibility check (FEAS).}
FEAS checks whether the retained moment restrictions can be satisfied by probability weights on the current finite support.
If the corresponding feasibility LP is infeasible, there are multipliers $u\in\R^{(\mathcal J')}$ and $v\in\R^{(\mathcal K')}_+$ such that
$$
h(w)\equiv\sum_{j\in\mathcal J'}u_j g_{1,j}(w;\theta)+\sum_{k\in\mathcal K'}v_k g_{2,k}(w;\theta)
$$
is strictly positive on $\mathcal W'$.
FEAS then uses the model-specific global optimization underlying RB either to find a new input value $w$ with $h(w)\le0$ or to certify that $\inf_{w\in\mathcal W}h(w)\ge\delta>0$.
In the first case, $w$ is added to the support.
In the second, no admissible input measure exists: every $\gamma\in\Gamma_\theta$ satisfies $\EE{\gamma}{h(W)}\le0$ by the retained moment restrictions, whereas $h\ge\delta$ on $\mathcal W$ gives $\EE{\gamma}{h(W)}\ge\delta$.

Algorithm~\ref{alg:master} retains the largest established lower bound and the smallest established upper bound, which remain valid when the retained sets change.
The target width controls precision, not membership: an enclosure that meets it without deciding membership returns \textsc{undecided}.

\begin{algorithm}[!ht]
\footnotesize
\caption{Column-and-row generation}
\label{alg:master}
\begin{algorithmic}[1]
\setlength{\itemsep}{0pt}
\State \textbf{Input:} parameter $\theta$; finite initial sets $(\mathcal W',\mathcal J',\mathcal K')$ with $\mathcal W'\ne\varnothing$; target width $\varepsilon_{\mathrm{alg}}\ge0$.
\State \textbf{Output:} \textsc{in}, \textsc{out}, or \textsc{undecided}, with enclosure $[\underline T,\overline T]$.
\State $\ell\gets 1$; $[\underline T,\overline T]\gets[0,+\infty]$.
\While{the computational budget is not exhausted}
    \State Retain each finite moment family in full and choose a basis for any infinite equality family.
    \State Run FEAS on $(\mathcal W',\mathcal J',\mathcal K')$.
    \If{FEAS returns $w\notin\mathcal W'$ with $h(w)\le0$}
        \State $\mathcal W'\gets\mathcal W'\cup\{w\}$; \textbf{continue}.
    \ElsIf{FEAS returns a global infeasibility certificate}
        \State \Return \textsc{out} with $T(\theta)=+\infty$.
    \ElsIf{FEAS is unresolved}
        \State \textbf{break}.
    \EndIf
    \State Run FINLP, giving $T_{\mathrm{LP},\ell}(\theta)$ and an optimizer.
    \State $\overline T\gets\min\{\overline T,T_{\mathrm{LP},\ell}(\theta)\}$. \Comment{Theorem~\ref{thm:enclosure}, upper endpoint}
    \If{$\overline T=0$}
        \State \Return \textsc{in} with $T(\theta)=0$.
    \EndIf
    \State RB computes a row bound $r_{\mathrm{low},\ell}$ on $r_{\ell}^*$, or on the residual of Remark~\ref{rem:row_measures}, recording violated rows as $\mathcal W^*$.
    \State $\underline T\gets\max\{\underline T,\ T_{\mathrm{LP},\ell}(\theta)+r_{\mathrm{low},\ell}\}$. \Comment{Theorem~\ref{thm:enclosure}, lower endpoint}
    \If{$\underline T>0$}
        \State \Return \textsc{out} with $[\underline T,\overline T]$.
    \ElsIf{$\overline T-\underline T\le\varepsilon_{\mathrm{alg}}$ or RB found no violated row}
        \State \textbf{break}.
    \Else
        \State $\mathcal W'\gets\mathcal W'\cup\mathcal W^*$.
    \EndIf
    \State $\ell\gets\ell+1$.
\EndWhile
\State \Return \textsc{undecided} with $[\underline T,\overline T]$.
\end{algorithmic}
\end{algorithm}

\FloatBarrier

\section{Linear programs for inference}
\label{app:inference_computation}

This appendix shows how to compute the sample discrepancy $T_n(\theta)$, the bootstrap statistic, and the critical value of Section~\ref{sec:inference} with the linear program~\eqref{eq:TLD_restricted} of Section~\ref{sec:computation}.
We maintain the assumptions of Section~\ref{sec:inference}, fix $\theta$ with $\Gamma_\theta\ne\varnothing$, and take $(\mathcal J',\mathcal K')$ to retain every moment restriction, or more generally to define the same feasible set on $\mathcal W'$ as the full family (as in Examples~\ref{ex:binary_parametric} and~\ref{ex:binary_sequential}, or Lemma~\ref{lem:finite_column_reduction}).
When that set is nonempty, column certification holds whatever the objective of~\eqref{eq:TLD_restricted}, because the feasible set does not depend on it.

\paragraph{Sample discrepancy.}
Replacing $\EE{\mu^*}{\phi}$ by $\EE{n}{\phi}$ in~\eqref{eq:TLD_restricted} gives the linear program
\begin{equation}
\label{eq:Tn_LP}
T_{\mathrm{LP},n}(\theta)
\equiv
\sup_{\phi,\zeta,u,v}
\left\{\EE{n}{\phi}-\zeta\right\}
\quad\text{subject to the constraints of~\eqref{eq:TLD_restricted}}.
\end{equation}
For a row bound $r_{\mathrm{low},n}$ at an optimizer of~\eqref{eq:Tn_LP}, Theorem~\ref{thm:enclosure} gives
$$
\underline T_n
\equiv
\max\{0,T_{\mathrm{LP},n}(\theta)+r_{\mathrm{low},n}\}
\le
T_n(\theta)
\le
T_{\mathrm{LP},n}(\theta).
$$

\paragraph{Bootstrap statistic.}
The bootstrap statistic of Section~\ref{sec:inference} is
$$\sup_{\phi\in\Phi(\mathcal Z)}\left(\mathbb{G}_{n,\mu^*}^*(\phi)+\lambda_n\eta_{\theta,n}(\phi)\right)-\lambda_nT_n(\theta).$$
The second term is the same for every bootstrap draw.
Replacing $\EE{\mu^*}{\phi}$ by $\lambda_n^{-1}\mathbb{G}_{n,\mu^*}^*(\phi)+\EE{n}{\phi}$ in~\eqref{eq:TLD_restricted} and multiplying the objective by $\lambda_n$ gives the linear program
\begin{equation}
\label{eq:bootstrap_LP}
Q_{\mathrm{LP},n}^*(\theta)
\equiv
\sup_{\phi,\zeta,u,v}
\left\{\mathbb{G}_{n,\mu^*}^*(\phi)+\lambda_n\left(\EE{n}{\phi}-\zeta\right)\right\}
\quad\text{subject to the constraints of~\eqref{eq:TLD_restricted}}.
\end{equation}
Its value is at least the first term, by the upper bound of Theorem~\ref{thm:enclosure}, whose proof uses $\EE{\mu^*}{\phi}$ only as a linear function of $\phi$.
Together with $T_n(\theta)\ge\underline T_n$, this gives
\begin{equation}
\label{eq:bootstrapub}
\sup_{\phi\in\Phi(\mathcal Z)}\left(\mathbb{G}_{n,\mu^*}^*(\phi)+\lambda_n\eta_{\theta,n}(\phi)\right)-\lambda_nT_n(\theta)
\le
Q_{\mathrm{LP},n}^*(\theta)-\lambda_n\underline T_n.
\end{equation}

\paragraph{Critical value.}
For each of $B$ bootstrap draws, solve~\eqref{eq:bootstrap_LP} on the same $(\mathcal W',\mathcal J',\mathcal K')$.
By~\eqref{eq:bootstrapub}, the empirical $(1-\alpha)$ quantile of the right-hand side over the $B$ draws is at least the empirical $(1-\alpha)$ quantile of the bootstrap statistic.
Rejecting when $\sqrt n\,\underline T_n$ exceeds the former by more than $\varepsilon$ therefore implies rejection by the test of Corollary~\ref{C:inf_finite} computed on the same draws, so the resulting confidence set is conservative.
The computation requires row bounds for the sample program only, and one linear program per bootstrap draw.

\section{Parametric binary choice}
\label{app:computation_example1}

We derive the linear program for $T(\beta)$ under each of the three specifications of Example~\ref{ex:binary_parametric}, together with the row bound where one is needed.
The framework of Section~\ref{sec:computation} is unchanged throughout: the finite program, column certification, and the enclosure of Theorem~\ref{thm:enclosure} use only the LIMR primitives, and what varies by model is the precise LIMR construction and certified row bound.
We also show how the counterfactual target $\tau_{\mathrm{ASF}}\equiv\Pr(\bar x'\beta+A-V_1\ge0)$ enters when the parameter is enlarged to $\theta=(\beta,\tau_{\mathrm{ASF}})$.

\subsection{The baseline specification}
\label{app:joint_parametric}

The baseline fixes $V\mid(X,A)\sim H^{\otimes T}$ and leaves the input measure $\gamma\in\Pa(\mathcal X\times\R)$ free.
Integrating over $V$ gives
$$
q_{\phi,\beta}(x,a)=\EE{K_\beta(\cdot\mid x,a)}{\phi}=\sum_{y\in\{0,1\}^T}\phi(y,x)\prod_{t=1}^T H(x_t'\beta+a)^{y_t}\bigl(1-H(x_t'\beta+a)\bigr)^{1-y_t}
$$
on $\mathcal W=\mathcal X\times\R$, where $K_\beta$ is the output kernel of Example~\ref{ex:binary_parametric}.
For $\theta=\beta$ there are no moment restrictions, so $\mathcal J=\mathcal K=\varnothing$, every $(\mathcal J',\mathcal K')$ is column-certified for any nonempty $\mathcal W'$, and only the row side needs work.
The row residual for $T(\beta)$ reduces to $|\mathcal X|$ one-dimensional maximizations,
$$
a^\star(x)\in\argmax_{a\in\overline\R}q_{\phi,\beta}(x,a),\qquad x\in\mathcal X,
$$
over the extended line $\overline\R=[-\infty,+\infty]$, evaluated at the test function $\phi$ returned by the auxiliary program.
The payoff is continuous on $\overline\R$, so each maximum is attained.
As $a\to\pm\infty$, the output kernel $K_\beta(\cdot\mid x,a)$ converges in total variation on the finite output space.
The extended inputs therefore generate only output measures in the total variation closure, so the extension leaves $T(\beta)$ unchanged.

The payoff is multilinear in the probabilities $\bigl(H(x_t'\beta+a)\bigr)_{t=1}^{T}$, each of which is monotone in the common scalar $a$.
Their common dependence on $a$ leaves a one-dimensional global search.
On a constant covariate path it reduces to a polynomial of degree at most $T$ in a single probability, which is maximized over $[0,1]$ at its stationary points or at the endpoints, the latter corresponding to $a=\pm\infty$.
A certified upper bound $q_{\mathrm{up}}$ on the maximum of the payoff over $\mathcal W$ gives the certified row bound $r_{\mathrm{low}}=\zeta^*-q_{\mathrm{up}}\le r^*$ of Theorem~\ref{thm:enclosure}.
Additional Appendix~\ref{app:parametric_implementation} gives the bound we use and the resulting certificates.

The ASF equals $\EE{\gamma}{H(\bar x'\beta+A)}$.
It therefore enters through the moment restriction $g_{\mathrm{ASF}}(x,a;\theta)\equiv H(\bar x'\beta+a)-\tau_{\mathrm{ASF}}$.
This adds only one unrestricted multiplier, and the row residual remains a one-dimensional search in $a$.

\subsection{Serial dependence}
\label{app:marginal_parametric}

We now fix only the one-period marginals $V_t\mid(X,A)\sim H$ and leave the copula of $(V_1,\dots,V_T)$ free.
The copula reaches the observable only through the conditional distribution $\pi$ of the outcome path $Y\in\{0,1\}^T$.
The marginals restrict $\pi$ to the Fr\'echet polytope
$$
\mathcal F(x,a)\equiv\Bigl\{\pi\in\R_+^{2^T}:\textstyle\sum_y \pi_y=1,\ \sum_{y:\,y_t=1}\pi_y=H(x_t'\beta+a)\ \text{for all }t\Bigr\}.
$$
Every $\pi\in\mathcal F(x,a)$ can be reached: draw $Y$ from $\pi$, then draw each $V_t$ from $H$ truncated below or above its threshold according to $Y_t$.
Each $V_t$ is then marginally distributed as $H$, and the resulting outcome path has distribution $\pi$.
The payoff depends on the copula only through $\pi$ and every $\pi\in\mathcal F(x,a)$ arises from some copula, so replacing the copula by $\pi$ leaves the set of output measures unchanged.

An input value is then $w=(x,a,\pi)$ with $\pi\in\mathcal F(x,a)$.
The specification restricts the support of the input measure rather than imposing moment restrictions, so only the row side has to be analyzed.
At $w$ the model draws the outcome path from $\pi$ and returns the covariate path unchanged, so the payoff is the average of $\phi(\cdot,x)$ under $\pi$, $q_{\phi,\beta}(x,a,\pi)=\sum_y\phi(y,x)\,\pi_y$.

The row residual requires a maximization over the triple, which can be done in two stages.
At each $(x,a)$, the inner maximization over $\pi$ is a small linear program.
At $T=2$, it has a closed form: the optimal $\Pr(Y_1=1,Y_2=1)$ is the Fr\'echet--Hoeffding bound according to the sign of $\phi((1,1),x)-\phi((1,0),x)-\phi((0,1),x)+\phi((0,0),x)$.
Substituting its value leaves a one-dimensional global search in $a$ at each covariate path.
Additional Appendix~\ref{app:parametric_implementation} gives the details.

For each $(x,a)$, dualizing the inner linear program replaces the continuum of rows indexed by $\pi\in\mathcal F(x,a)$ by the finite block
\begin{align*}
\sup_{\pi\in\mathcal F(x,a)}\sum_y\phi(y,x)\pi_y
=\min_{\xi_0,\xi}\Bigl\{\ &\xi_0+\sum_{t=1}^T\xi_tH(x_t'\beta+a):\\
&\xi_0+\sum_{t=1}^T\xi_ty_t\ge\phi(y,x)\quad\text{for all }y\Bigr\}.
\end{align*}
The auxiliary program carries one vector $(\xi_0,\xi)$ for each of the finitely many $(x,a)$ in the retained support $\mathcal W'$, subject to the constraints of the display and to $\xi_0+\sum_{t=1}^T\xi_tH(x_t'\beta+a)\le\zeta$, and remains linear.
The counterfactual enters exactly as under the baseline (Subsection~\ref{app:joint_parametric}) because $V_1\mid(A,X)\sim H$ still holds, without changing the inner maximization over $\pi$.

\subsection{Fixed effect--error dependence}
\label{app:Xconditional_parametric}

Fix $V\mid X\sim H^{\otimes T}$ and leave the kernel of $A$ given $(X,V)$ free, so $A$ and $V$ may be dependent given $X$.
Fix $(x,v)$ and let $a$ increase from $-\infty$ to $+\infty$: each outcome $Y_t=\one\{a\ge v_t-x_t'\beta\}$ switches from $0$ to $1$ once, when $a$ crosses its threshold.
Write $\sigma$ for the permutation that sorts the thresholds.
When the thresholds are distinct, increasing $a$ recovers exactly $T+1$ outcome paths $y^{(0)}_\sigma,\dots,y^{(T)}_\sigma$, where $y^{(k)}_\sigma$ sets coordinates $\sigma(1),\dots,\sigma(k)$ to one and the rest to zero.
The supremum over the free $A$ is $\phi^\star_\sigma(x)\equiv\max_{0\le k\le T}\phi(y^{(k)}_\sigma,x)$, which depends on $v$ only through the order $\sigma$.

Because $H$ is continuous, ties occur with probability zero, so $\sigma$ ranges over the $T!$ permutations $\mathcal S_T$ of $\{1,\dots,T\}$, with probabilities $p_\sigma(x;\beta)$ fixed by the specification.
Integrating $\phi^\star_\sigma(x)$ and maximizing over the covariate marginal gives
\begin{align*}
\sup_{\gamma\in\Gamma_\beta}\EE{\Lbeta\gamma}{\phi}
&=\max_{x\in\mathcal X}\sum_{\sigma\in \mathcal S_T}p_\sigma(x;\beta)\,\phi^\star_\sigma(x),\\
p_\sigma(x;\beta)
&\equiv\Pr_{V\sim H^{\otimes T}}\bigl(V_{\sigma(1)}-x_{\sigma(1)}'\beta<\cdots<V_{\sigma(T)}-x_{\sigma(T)}'\beta\bigr),
\end{align*}
with the probabilities $p_\sigma$ precomputed once per $\beta$.
The payoff depends on the input only through the covariate path, the order $\sigma$, and which of the $T+1$ regions of the sweep $a$ falls in.
The input space reduces to the finite set $\mathcal X\times \mathcal S_T\times\{0,1,\ldots,T\}$, on which the specification pins $\Pr(\sigma\mid X=x)=p_\sigma(x;\beta)$ and leaves the covariate marginal and the region unrestricted.

Introduce a scalar $\zeta$ for the maximum over $x$ and slacks $\{s_\sigma(x)\}$ for the maxima over $k$.
Constraining each variable to lie above every term in the corresponding maximum gives the finite LP
\begin{align*}
T(\beta)=\max_{\phi,\,s,\,\zeta}\quad & \sum_{(y,x)}\mu^*(y,x)\,\phi(y,x)-\zeta \\
\text{subject to}\quad & \phi(y,x)\in[0,1] && \text{for all }(y,x),\\
& s_\sigma(x)\ge\phi(y^{(k)}_\sigma,x) && \text{for all }\sigma,x,k,\\
& \zeta\ge\sum_{\sigma\in \mathcal S_T}p_\sigma(x;\beta)\,s_\sigma(x) && \text{for all }x.
\end{align*}
This finite program computes $T(\beta)$, one solve per $\beta$, with neither a row bound nor a column check.
The counterfactual enters through the moment restriction $g_{\mathrm{ASF}}(x,a,v;\theta)\equiv\one\{\bar x'\beta+a-v_1\ge0\}-\tau_{\mathrm{ASF}}$.
Its indicator switches at the threshold $v_1-\bar x'\beta$, which joins the $T$ outcome thresholds in the sweep over $a$, so the reduced input space remains finite and the joint computation again needs neither a row bound nor a column check.
Additional Appendix~\ref{app:parametric_reductions} describes the computation of the reported identified sets.

\section{Sequential exogeneity}
\label{app:semiparametric_binary_choice}

This appendix constructs the LIMR behind the sequential exogeneity results of Section~\ref{subsec:numerical_semiparametric} (Example~\ref{ex:binary_sequential}), reduces the error terms to finitely many cells, and derives the auxiliary program with its column certification and its row bound.
The framework of Section~\ref{sec:computation} applies unchanged, and Additional Appendix~\ref{app:se_details} carries the remaining results and the computational details.

\subsection{Constructing the LIMR}
\label{app:se_lifting}

As in Example~\ref{ex:binary_sequential}, the covariate path $X=(X_1,\ldots,X_T)$ takes values in a finite set $\mathcal X$, $X^t\equiv(X_1,\ldots,X_t)$ is its history through period $t$, $\tilde V_t\equiv A-V_t$ is the composite error, $\nu\equiv(\Pr(X=x\mid A))_{x\in\mathcal X}$ takes values in the simplex $\Delta\equiv\{\nu\in\R^{\mathcal X}_+:\sum_{x\in\mathcal X}\nu(x)=1\}$, and $\nu_t(x^t)$ is the probability of the history $x^t$ under $\nu$.
Because $A$ is in the conditioning set on both sides, sequential exogeneity~\eqref{eq:seq_exog} holds if and only if it holds with the composite error in place of $V_t$, that is, $\tilde V_t\mid(A,X^t)\overset{d}{=}\tilde V_1\mid(A,X_1)$ for $t=2,\ldots,T$.
Example~\ref{ex:binary_sequential} has input $W=(X,\tilde V,\nu)$ on $\mathcal W\equiv\mathcal X\times\R^T\times\Delta$ and admissible set
$$
\Gamma_\beta
\equiv
\Bigl\{\gamma\in\Pa(\mathcal W):\ \gamma\ \text{satisfies~\eqref{eq:nu_consistency} and~\eqref{eq:seq_lifted}}\Bigr\},
$$
where, for all $x\in\mathcal X$, all $t\in\{2,\ldots,T\}$, all histories $x^t$, and all bounded measurable $r$ and $f$,
\begin{align}
&\EE{\gamma}{r(\nu)\bigl(\one\{X=x\}-\nu(x)\bigr)}=0,
\tag{\ref*{eq:nu_consistency}}
\\
&\EE{\gamma}{r(\nu)\bigl[f(\tilde V_t)\one\{X^t=x^t\}\nu_1(x_1)-f(\tilde V_1)\one\{X_1=x_1\}\nu_t(x^t)\bigr]}=0.
\tag{\ref*{eq:seq_lifted}}
\end{align}

Restrictions~\eqref{eq:nu_consistency} and~\eqref{eq:seq_lifted} hold for all bounded measurable $r$ and $f$ if and only if, almost surely,
\begin{equation}
\label{eq:se_conditional_form}
\begin{aligned}
\Pr(X=x\mid\nu)&=\nu(x)\quad\text{for all } x\in\mathcal X, \\
\tilde V_t\mid(\nu,X^t=x^t) &\overset{d}{=} \tilde V_1\mid(\nu,X_1=x_1)\ \ \text{for all histories } x^t \text{ with } \nu_t(x^t)>0,\ t=2,\ldots,T.
\end{aligned}
\end{equation}
Because $r$ ranges over all bounded measurable functions of $\nu$, each restriction is a conditional moment equality given $\nu$.
Restriction~\eqref{eq:nu_consistency} gives $\Pr(X^t=x^t\mid\nu)=\nu_t(x^t)$.
When $\nu_t(x^t)>0$, the period-1 history also has positive probability, and dividing the two terms in~\eqref{eq:seq_lifted} by $\nu_1(x_1)\nu_t(x^t)$ gives $\EE{}{f(\tilde V_t)\mid\nu,X^t=x^t}=\EE{}{f(\tilde V_1)\mid\nu,X_1=x_1}$ for every bounded $f$, that is, the two conditional distributions are equal (indicators of half-lines with rational endpoints already determine a distribution on $\R$, so one null set serves every $f$).
When $\nu_t(x^t)=0$, restriction~\eqref{eq:nu_consistency} already forces $\Pr(X^t=x^t\mid\nu)=0$, so both terms in~\eqref{eq:seq_lifted} vanish.
The converse reverses these steps, multiplying the two conditional expectations by the history probabilities.

\begin{lemma}
\label{lem:nu_lifting}
Let $\mathcal X$ be finite.
\begin{enumerate}
\item[(a)] If the distribution of $(X,V,A)$ satisfies~\eqref{eq:seq_exog}, then the distribution of $(X,\tilde V,\nu)$ with $\nu\equiv\Pr(X=\cdot\mid A)$ belongs to $\Gamma_\beta$.
\item[(b)] If the distribution of $(X,\tilde V,\nu)$ belongs to $\Gamma_\beta$, then there is a measurable function $A$ of $\nu$ such that, with $V_t\equiv A-\tilde V_t$, the distribution of $(X,V,A)$ satisfies~\eqref{eq:seq_exog} and $\Pr(X=\cdot\mid A)=\nu$ almost surely.
\end{enumerate}
\end{lemma}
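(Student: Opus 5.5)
Both parts rest on the equivalence, established just before the lemma, between restrictions~\eqref{eq:nu_consistency}--\eqref{eq:seq_lifted} and the conditional form~\eqref{eq:se_conditional_form}. The plan is to reduce each direction of the lemma to that form.

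\emph{Part (a).} Define $\nu\equiv\Pr(X=\cdot\mid A)$, a measurable function of $A$ taking values in $\Delta$ (finiteness of $\mathcal X$ makes this a finite vector of conditional probabilities, chosen as a regular version). Then $\sigma(\nu)\subseteq\sigma(A)$, and the tower property gives $\Pr(X=x\mid\nu)=\mathbb E[\nu(x)\mid\nu]=\nu(x)$. This is the first line of~\eqref{eq:se_conditional_form}. For the second line, fix $t$, a history $x^t$, and a bounded $f$. Sequential exogeneity with composite errors gives
$$
\mathbb E[f(\tilde V_t)\one\{X^t=x^t\}\mid A]=\nu_t(x^t)\,\kappa_{x_1}(f;A)
\quad\text{and}\quad
\mathbb E[f(\tilde V_1)\one\{X_1=x_1\}\mid A]=\nu_1(x_1)\,\kappa_{x_1}(f;A),
$$
where $\kappa_{x_1}(f;A)\equiv\mathbb E[f(\tilde V_1)\mid A,X_1=x_1]$. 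Multiplying the first by $\nu_1(x_1)$ and the second by $\nu_t(x^t)$ shows that the bracket in~\eqref{eq:seq_lifted} has zero conditional mean given $A$. Multiplying by $r(\nu)$, which is $\sigma(A)$-measurable, and taking expectations yields~\eqref{eq:seq_lifted}. Restriction~\eqref{eq:nu_consistency} follows in the same way. The conditioning step is fine even on events $\{\nu_t(x^t)=0\}$, because there both sides vanish.

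\emph{Part (b).} Take $A\equiv\iota(\nu)$ for a measurable injection $\iota\colon\Delta\to\R$. One exists because $\Delta$ is a Borel subset of a Euclidean space, hence a standard Borel space, so it is Borel isomorphic to a Borel subset of $\R$. Then $\sigma(A)=\sigma(\nu)$. Conditioning on $A$ is therefore the same as conditioning on $\nu$, and $\Pr(X=\cdot\mid A)=\nu$ almost surely by the first line of~\eqref{eq:se_conditional_form}. Set $V_t\equiv A-\tilde V_t$, so that $\tilde V_t=A-V_t$ as required. Because $V_t$ is obtained from $\tilde V_t$ by a measurable bijection indexed by the conditioning variable $A$, the second line of~\eqref{eq:se_conditional_form}, conditional on $\nu$ and hence on $A$, transfers to $V_t\mid(A,X^t=x^t)\overset{d}{=}V_1\mid(A,X_1=x_1)$. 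This is~\eqref{eq:seq_exog}.

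\emph{Main obstacle.} The delicate point is matching the almost-sure qualifier in~\eqref{eq:se_conditional_form}, which applies only to histories with $\nu_t(x^t)>0$, against~\eqref{eq:seq_exog}, which is a statement about conditional laws given $(A,X^t)$ and is required only on events of positive conditional probability. On $\{\nu_t(x^t)=0\}$ the event $\{X^t=x^t\}$ is null given $A$, so no constraint is needed there. Making this rigorous requires regular conditional distributions of $\tilde V$ given $(\nu,X)$, which exist because $\R^T$ is Polish. It also requires checking that a single null set works for all $f$; for this I would use the countable determining class of half-lines with rational endpoints already noted in the text.
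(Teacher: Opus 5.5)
Your proposal is correct and follows essentially the same route as the paper: in part (a) you use the same tower-property computation, multiplying the two conditional-expectation identities by $\nu_1(x_1)$ and $\nu_t(x^t)$ so that the bracket in \eqref{eq:seq_lifted} has conditional mean zero given $A$. In part (b) you use the same Borel isomorphism $A=\iota(\nu)$, so that conditioning on $A$ is conditioning on $\nu$, and the conditional form \eqref{eq:se_conditional_form} transfers to $V_t=A-\tilde V_t$.
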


Because the output $Z$ is a function of $(X,\tilde V)$ alone, parts~(a) and~(b) match the two representations distribution by distribution.
Hence $\Lbeta\Gamma_\beta$ is the set of probability measures of $Z$ that the model of Example~\ref{ex:binary_sequential} generates at $\beta$, which is the adequacy requirement in the definition of a LIMR in Section~\ref{sec:model}.

\begin{proof}
\emph{Part~(a).}
Let the distribution of $(X,V,A)$ satisfy~\eqref{eq:seq_exog} and set $\nu\equiv\Pr(X=\cdot\mid A)$, a function of $A$ with values in $\Delta$.
Because $r(\nu)$ is a function of $A$ and $\Pr(X=x\mid A)=\nu(x)$, the law of iterated expectations gives~\eqref{eq:nu_consistency} for every bounded measurable $r$.
For~\eqref{eq:seq_lifted}, fix $t\in\{2,\ldots,T\}$, a history $x^t$, and a bounded measurable $f$, and recall that $\Pr(X^t=x^t\mid A)=\nu_t(x^t)$.
By the law of iterated expectations,
\begin{align*}
\EE{}{f(\tilde V_t)\one\{X^t=x^t\}\mid A}
&= \EE{}{\EE{}{f(\tilde V_t)\mid A,X^t}\one\{X^t=x^t\}\mid A} \\
&= \EE{}{f(\tilde V_t)\mid A,X^t=x^t}\,\nu_t(x^t) \\
&= \EE{}{f(\tilde V_1)\mid A,X_1=x_1}\,\nu_t(x^t).
\end{align*}
The last equality is sequential exogeneity~\eqref{eq:seq_exog}, stated for the composite error.
When $\nu_t(x^t)=0$, both sides are zero, whatever value the conditional expectation given a history of probability zero takes.
At $t=1$, the same computation, without sequential exogeneity, gives
$$
\EE{}{f(\tilde V_1)\one\{X_1=x_1\}\mid A}
=\EE{}{f(\tilde V_1)\mid A,X_1=x_1}\,\nu_1(x_1).
$$
Multiply the first display by $\nu_1(x_1)$ and the second by $\nu_t(x^t)$.
The right-hand sides coincide, so the term in brackets in~\eqref{eq:seq_lifted} has conditional mean zero given $A$, and because $r(\nu)$ is a function of $A$, the law of iterated expectations gives~\eqref{eq:seq_lifted}.

\emph{Part~(b).}
Let $(X,\tilde V,\nu)$ have a distribution in $\Gamma_\beta$.
By~\eqref{eq:se_conditional_form}, $\nu$ plays the role of the fixed effect, but the model's fixed effect is a real number.
Because $\Delta$ is a Borel subset of $\R^{\mathcal X}$, it is Borel isomorphic to a Borel subset of $\R$, which provides a one-to-one measurable $\iota\colon\Delta\to\R$ with measurable inverse.
Set $A\equiv\iota(\nu)$ and $V_t\equiv A-\tilde V_t$.
Since $A$ and $\nu$ determine each other, conditioning on one is conditioning on the other, and~\eqref{eq:se_conditional_form} becomes $\Pr(X=\cdot\mid A)=\nu$ and $\tilde V_t\mid(A,X^t)\overset{d}{=}\tilde V_1\mid(A,X_1)$ on every history with positive probability given $A$.
Since $A$ is in the conditioning set, the same holds for $V_t=A-\tilde V_t$, which is~\eqref{eq:seq_exog}.
\end{proof}

\subsection{Finite reduction and closedness}
\label{app:se_cell_reduction}

The observable space $\mathcal Z\equiv\{0,1\}^T\times\mathcal X$ is finite, so Assumption~\ref{asm:finite_Z} holds, and Assumption~\ref{asm:domination} holds with counting measure.
The outcome equation is $Y_t=\one\{\tilde V_t\ge c(t,X_t;\beta)\}$ with thresholds $c(t,x_t;\beta)\equiv-x_t'\beta$, where $x_t$ may include period-specific components with known coefficients, such as the time trend of Section~\ref{subsec:numerical_semiparametric}, and the ASF at a counterfactual covariate value $\bar x$ is $\tau_{\mathrm{ASF}}=\Pr(\tilde V_1\ge\bar c)$ with $\bar c\equiv-\bar x'\beta$.
Collect the thresholds in $\mathcal C(\beta)\equiv\bigl\{c(t,x_t;\beta):t\in\{1,\ldots,T\},\ x\in\mathcal X\bigr\}\cup\{\bar c\}$, sort its distinct elements as $c_{(1)}<\cdots<c_{(M)}$, and partition the real line into the cells $C_1\equiv(-\infty,c_{(1)})$, $C_j\equiv[c_{(j-1)},c_{(j)})$ for $j=2,\ldots,M$, and $C_{M+1}\equiv[c_{(M)},+\infty)$.
Fix a point $m_j$ in each cell $C_j$.
Because the cells are cut at the thresholds, $\one\{\tilde V_t\ge c\}=\one\{m_j\ge c\}$ whenever $\tilde V_t\in C_j$ and $c\in\mathcal C(\beta)$.
A cell profile $k=(k_1,\ldots,k_T)\in\{1,\ldots,M+1\}^T$ lists the cell of each period's error term, and we write $k$ also for the random cell profile of $\tilde V$.
Hence, when $\tilde V_t\in C_{k_t}$ for every $t$, the outcome path, the output, and the counterfactual outcome are $Y=y(X,k;\beta)$, $Z=z(X,k;\beta)$, and $\one\{\tilde V_1\ge\bar c\}=\bar y(k;\beta)$.
Here $y(x,k;\beta)\equiv\bigl(\one\{m_{k_t}\ge c(t,x_t;\beta)\}\bigr)_{t=1}^T$, $z(x,k;\beta)\equiv(y(x,k;\beta),x)$, and $\bar y(k;\beta)\equiv\one\{m_{k_1}\ge\bar c\}$.

For $\nu\in\Delta$, call a distribution of $(X,\tilde V)$ on $\mathcal X\times\R^T$ admissible at $\nu$ if its covariate marginal is $\nu$ and
\begin{equation}
\label{eq:se_fiber_se}
\tilde V_t\mid X^t=x^t
\overset{d}{=}
\tilde V_1\mid X_1=x_1
\quad\text{for all } t\ge2 \text{ and all } x^t \text{ with } \Pr(X^t=x^t)>0.
\end{equation}
By~\eqref{eq:se_conditional_form}, an input measure belongs to $\Gamma_\beta$ if and only if, for almost every $\nu$, the conditional distribution of $(X,\tilde V)$ given $\nu$ is admissible at $\nu$.
The cell distribution of $(X,\tilde V)$ is $\pi(x,k)\equiv\Pr(X=x,\ \tilde V_t\in C_{k_t}\text{ for all }t)$, a probability mass function on $\mathcal X\times\{1,\ldots,M+1\}^T$.
Let $\Pi(\nu)$ be the set of probability mass functions $\pi$ on $\mathcal X\times\{1,\ldots,M+1\}^T$ that satisfy~\eqref{eq:se_sc_cells} for all $x\in\mathcal X$ and~\eqref{eq:se_se_cells} for all $t\in\{2,\ldots,T\}$, all histories $x^t=(x_1,\ldots,x_t)$, and all cell indices $j\in\{1,\ldots,M+1\}$,
\begin{align}
&\sum_{k}\pi(x,k)=\nu(x),
\label{eq:se_sc_cells}
\\
&\nu_1(x_1)\sum_{x':\,(x'_1,\ldots,x'_t)=x^t}\ \sum_{k:\,k_t=j}\pi(x',k)
=
\nu_t(x^t)\sum_{x':\,x'_1=x_1}\ \sum_{k:\,k_1=j}\pi(x',k).
\label{eq:se_se_cells}
\end{align}
Restriction~\eqref{eq:se_sc_cells} is the covariate marginal, and~\eqref{eq:se_se_cells} is~\eqref{eq:se_fiber_se} on cells, multiplied through by the history probabilities so that it is linear in $\pi$: on histories with positive probability, the cell of the period-$t$ error term given $x^t$ is distributed as the cell of the period-1 error term given $x_1$.

\begin{lemma}[Cell reduction]
\label{lem:se_cell_reduction}
Fix $\nu\in\Delta$.
Then $\pi\in\Pi(\nu)$ if and only if $\pi$ is the cell distribution of some distribution admissible at $\nu$.
\end{lemma}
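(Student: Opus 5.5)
The plan is to prove the two directions separately. The ``if'' direction is a direct computation. The ``only if'' direction is a construction: lift a cell distribution back to a distribution of $(X,\tilde V)$ on $\mathcal X\times\R^T$.

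\emph{Necessity.} Let the distribution of $(X,\tilde V)$ be admissible at $\nu$, and let $\pi$ be its cell distribution. Summing $\pi(x,k)$ over $k$ gives $\Pr(X=x)=\nu(x)$, which is~\eqref{eq:se_sc_cells}. For~\eqref{eq:se_se_cells}, fix $t\ge2$, a history $x^t$, and a cell index $j$. The left-hand side equals $\nu_1(x_1)\Pr(X^t=x^t,\tilde V_t\in C_j)$, and the right-hand side equals $\nu_t(x^t)\Pr(X_1=x_1,\tilde V_1\in C_j)$.
\begin{itemize}
\item If $\nu_t(x^t)>0$, then $\nu_1(x_1)>0$ as well. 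Dividing by both probabilities reduces the equation to $\Pr(\tilde V_t\in C_j\mid X^t=x^t)=\Pr(\tilde V_1\in C_j\mid X_1=x_1)$. This holds by~\eqref{eq:se_fiber_se} applied to the indicator of $C_j$.
\item If $\nu_t(x^t)=0$, both sides vanish, because the left side contains $\Pr(X^t=x^t)=0$.
\end{itemize}

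\emph{Sufficiency.} Given $\pi\in\Pi(\nu)$, draw $(X,k)\sim\pi$. The natural first attempt is to draw each $\tilde V_t$ independently given $(X,k)$ from a fixed law $F_{k_t}$ on the cell $C_{k_t}$, depending only on the cell index, for example a point mass at $m_{k_t}$. Then:
\begin{itemize}
\item The covariate marginal is $\nu$ by~\eqref{eq:se_sc_cells}.
\item The cell distribution is $\pi$ by construction.
\item Conditional on $X^t=x^t$, the law of $\tilde V_t$ is the mixture $\sum_j \Pr(k_t=j\mid X^t=x^t)F_j$.
\end{itemize}
By~\eqref{eq:se_se_cells}, divided by $\nu_1(x_1)\nu_t(x^t)>0$ on positive-probability histories, the mixing weights equal $\Pr(k_1=j\mid X_1=x_1)$. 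Hence the law of $\tilde V_t$ given $X^t=x^t$ equals the law of $\tilde V_1$ given $X_1=x_1$, which is~\eqref{eq:se_fiber_se}.

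The main obstacle is making sure the lifted distribution is genuinely admissible, not only cell-consistent. Conditional distributional equality must hold for all bounded $f$, not only for cell indicators. The key observation is that using the same within-cell law $F_j$ for every period and every covariate path makes $\tilde V_t$ given $X^t$ a mixture of common components, so equality of the cell weights suffices. One detail remains: the conditioning event for $\tilde V_t$ is the history $X^t$, while $\pi$ is indexed by full paths $x$. Conditioning on $X^t$ marginalizes the later covariates, and the drawing of $\tilde V_t$ depends only on $k_t$. The mixture computation therefore goes through after summing over $x'$ with $(x'_1,\dots,x'_t)=x^t$, exactly as in the left side of~\eqref{eq:se_se_cells}.
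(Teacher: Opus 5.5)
Your proposal is correct and matches the paper's proof. The necessity direction checks \eqref{eq:se_se_cells} by dividing by $\nu_1(x_1)\nu_t(x^t)$ on positive-probability histories and noting that both sides vanish otherwise. The sufficiency direction draws $(X,k)\sim\pi$ and places each $\tilde V_t$ by a cell-only rule, so that equal cell weights give equal conditional laws; the paper uses exactly your point mass at $m_{k_t}$, and your allowance for a general within-cell law $F_j$ is a harmless generalization.
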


\begin{proof}
\emph{$\Leftarrow$.}
Let $(X,\tilde V)$ be admissible at $\nu$ and let $\pi$ be its cell distribution.
Restriction~\eqref{eq:se_sc_cells} is the covariate marginal.
For~\eqref{eq:se_se_cells}, fix $t\ge2$, $x^t$, and $j$.
If $\nu_t(x^t)=0$, both sides are zero.
Otherwise, dividing by $\nu_1(x_1)\nu_t(x^t)$ turns~\eqref{eq:se_se_cells} into $\Pr(\tilde V_t\in C_j\mid X^t=x^t)=\Pr(\tilde V_1\in C_j\mid X_1=x_1)$, which is~\eqref{eq:se_fiber_se} evaluated on the cell $C_j$.

\emph{$\Rightarrow$.}
Let $\pi\in\Pi(\nu)$.
For $x_1$ with $\nu_1(x_1)>0$, let $\varrho_{x_1}(j)\equiv\Pr_\pi(k_1=j\mid X_1=x_1)$, the cell distribution of the period-1 error term given the initial covariate value.
Dividing~\eqref{eq:se_se_cells} by $\nu_1(x_1)\nu_t(x^t)$ and using~\eqref{eq:se_sc_cells} shows that $\Pr_\pi(k_t=j\mid X^t=x^t)=\varrho_{x_1}(j)$ on every history with $\nu_t(x^t)>0$.
Now draw $(X,k)$ from $\pi$ and set $\tilde V_t\equiv m_{k_t}$, the point chosen in cell $C_{k_t}$.
Then $\tilde V_t\in C_{k_t}$, so the cell distribution of $(X,\tilde V)$ is $\pi$, and the covariate marginal is $\nu$ by~\eqref{eq:se_sc_cells}.
Given any history $x^t$ with $\nu_t(x^t)>0$, the error term $\tilde V_t$ equals $m_j$ with probability $\varrho_{x_1}(j)$, so its conditional distribution is $\sum_j\varrho_{x_1}(j)\,\delta_{m_j}$ for every $t$ (at $t=1$ by the definition of $\varrho_{x_1}$), which is~\eqref{eq:se_fiber_se}, so $(X,\tilde V)$ is admissible at $\nu$.
\end{proof}

Each $\pi\in\Pi(\nu)$ determines an output measure $\mu_\pi(y,x)\equiv\sum_{k:\,y(x,k;\beta)=y}\pi(x,k)$, and by Lemma~\ref{lem:se_cell_reduction} the set $\{\mu_\pi:\pi\in\Pi(\nu)\}$ is exactly the set of output measures the model generates at $\beta$ conditional on $\nu$.

\begin{prop}[Finite support in $\nu$ and closedness]
\label{prop:se_finite_support}
Let $\mathcal M_\beta\equiv\Lbeta\Gamma_\beta$ be the set of output measures at $\beta$, and, at $\theta=(\beta,\tau_{\mathrm{ASF}})$, let $\Gamma_\theta$ refine $\Gamma_\beta$ by the restriction $\EE{\gamma}{\one\{\tilde V_1\ge\bar c\}-\tau_{\mathrm{ASF}}}=0$ of Example~\ref{ex:binary_sequential}.
\begin{enumerate}
\item[(a)] The set $\mathcal M_\beta$ is the convex hull of $\{\mu_\pi:\nu\in\Delta,\ \pi\in\Pi(\nu)\}$, and every member is generated by an input measure under which $\nu$ takes at most $|\mathcal Z|$ values.
\item[(b)] The set $\{\mu_\pi:\nu\in\Delta,\ \pi\in\Pi(\nu)\}$ is compact, and so is $\mathcal M_\beta$.
\item[(c)] The same conclusions hold, with $|\mathcal Z|+1$ in place of $|\mathcal Z|$, for the pairs $\bigl(\mu_\pi,\sum_{x,k}\pi(x,k)\,\bar y(k;\beta)\bigr)\in\R^{|\mathcal Z|+1}$, whose second coordinate is the ASF at fixed $\nu$ because $\bar c\in\mathcal C(\beta)$, so $\mathcal M_\theta\equiv\Lgen\Gamma_\theta$ is compact and convex, with every member generated by an input measure under which $\nu$ takes at most $|\mathcal Z|+1$ values.
\end{enumerate}
\end{prop}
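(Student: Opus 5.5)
The plan is to work fiber by fiber in $\nu$: first describe the set of fiberwise output measures, then integrate over the law of $\nu$ and reduce to finite support by Carathéodory.

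\emph{Part (a).} Let $S\equiv\{\mu_\pi:\nu\in\Delta,\ \pi\in\Pi(\nu)\}\subseteq\R^{|\mathcal Z|}$, a subset of the probability simplex. Take an input measure $\gamma\in\Gamma_\beta$. By \eqref{eq:se_conditional_form}, for almost every $\nu$ the conditional law of $(X,\tilde V)$ given $\nu$ is admissible at $\nu$. Its cell distribution $\pi_\nu$ then lies in $\Pi(\nu)$ by Lemma~\ref{lem:se_cell_reduction}, and its output measure is $\mu_{\pi_\nu}$, since on cells the output is the deterministic map $z(x,k;\beta)$. Hence $\Lbeta\gamma=\int\mu_{\pi_\nu}\,\d\gamma_\nu(\nu)$, where $\gamma_\nu$ is the law of $\nu$ under $\gamma$. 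This integral of an $S$-valued measurable map lies in the closed convex hull of $S$.

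To obtain the plain convex hull, I would use the standard fact that the barycenter of a probability measure on a set $S\subseteq\R^d$ lies in $\operatorname{conv}S$. This follows by induction on dimension: if the mean is on the relative boundary of $\operatorname{conv}S$, restrict to a supporting face. Measurability of $\nu\mapsto\pi_\nu$ comes from taking regular conditional distributions, and each $\pi_\nu(x,k)$ is a conditional probability of a fixed event, hence measurable. Since $S$ sits in an affine space of dimension $|\mathcal Z|-1$, Carathéodory writes each element of $\operatorname{conv}S$ with at most $|\mathcal Z|$ atoms.

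Conversely, given $\sum_i\alpha_i\mu_{\pi_i}$ with $\pi_i\in\Pi(\nu_i)$ and at most $|\mathcal Z|$ distinct atoms, I build the input measure placing mass $\alpha_i$ on $\nu_i$. The construction in the $\Rightarrow$ direction of Lemma~\ref{lem:se_cell_reduction} gives an admissible $(X,\tilde V)$ at each $\nu_i$. The mixture satisfies \eqref{eq:se_conditional_form}, hence \eqref{eq:nu_consistency}--\eqref{eq:seq_lifted}, so it lies in $\Gamma_\beta$. If two indices share the same $\nu_i$, I first merge them: $\Pi(\nu)$ is convex, being cut out by linear constraints at fixed $\nu$, and $\mu_\pi$ is linear in $\pi$, so the merged component stays in $\Pi(\nu)$.

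\emph{Part (b).} This is the step I expect to need the most care. The set $G\equiv\{(\nu,\pi):\nu\in\Delta,\ \pi\in\Pi(\nu)\}$ is defined by the constraints \eqref{eq:se_sc_cells}--\eqref{eq:se_se_cells}. These are polynomial (bilinear) equalities in $(\nu,\pi)$, together with nonnegativity and summing to one, so $G$ is a closed subset of the compact set $\Delta\times\Delta(\mathcal X\times\{1,\dots,M+1\}^T)$ and is therefore compact. The cells and the map $k\mapsto y(x,k;\beta)$ depend only on $\beta$, so $\pi\mapsto\mu_\pi$ is a fixed linear map. Thus $S$, the continuous image of $G$, is compact. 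The convex hull of a compact set in finite dimensions is compact by Carathéodory: it is the continuous image of $S^{d+1}$ times a simplex. The subtle point is that the closedness must come from the multiplied-through form of the constraints. The unmultiplied conditional form would fail to be closed as $\nu_t(x^t)\to0$, and the multiplied form avoids this.

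\emph{Part (c).} I would repeat the argument with the augmented linear map $\pi\mapsto\bigl(\mu_\pi,\sum\pi\,\bar y\bigr)$, whose image lies in an affine space of dimension $|\mathcal Z|$. This raises the Carathéodory count to $|\mathcal Z|+1$. Fixing the second coordinate at $\tau_{\mathrm{ASF}}$ is a linear section of a compact convex set, and its projection to the first coordinates is $\mathcal M_\theta$, which is therefore compact and convex.
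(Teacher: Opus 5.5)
Your proposal is correct and follows essentially the same route as the paper. Both disintegrate in $\nu$, apply Lemma~\ref{lem:se_cell_reduction} fiberwise, obtain compactness of $G$ from continuity of the multiplied-through constraints, and finish with Carath\'eodory in dimension $|\mathcal Z|-1$, or $|\mathcal Z|$ for the ASF pairs. The only difference is in (a): the paper places $\int\mu_{\pi_\nu}\,\lambda(\d\nu)$ in the convex hull by first proving (b), since the convex hull of a compact set is closed. Your dimension-induction barycenter lemma is therefore valid but not needed, and neither is merging repeated atoms, which linearity of the restrictions in $\gamma$ already handles.
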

\begin{proof}
\emph{Conditioning on $\nu$.}
Write $\gamma\in\Gamma_\beta$ as $\gamma_\nu(\d x,\d\tilde v)\,\lambda(\d\nu)$, with $\lambda$ the marginal distribution of $\nu$ and $\gamma_\nu$ the conditional distribution of $(X,\tilde V)$ given $\nu$. The disintegration exists because $\mathcal X\times\R^T\times\Delta$ is Polish.
By~\eqref{eq:se_conditional_form}, for $\lambda$-almost every $\nu$ the conditional distribution $\gamma_\nu$ is admissible at $\nu$, one null set serving all $t$, $x^t$, and $f$ as above, because there are finitely many histories.
By Lemma~\ref{lem:se_cell_reduction}, the cell distribution $\pi_\nu$ of $\gamma_\nu$ lies in $\Pi(\nu)$, the output measure of $\gamma_\nu$ is $\mu_{\pi_\nu}$, and $\Lbeta\gamma=\int_\Delta\mu_{\pi_\nu}\,\lambda(\d\nu)$.
Conversely, let $\pi^i\in\Pi(\nu^i)$ and let $\gamma^i$ be the distribution of $(X,\tilde V)$ that Lemma~\ref{lem:se_cell_reduction} provides, for $i=1,\ldots,m$.
Under $\gamma^i\otimes\delta_{\nu^i}$ the conditional form~\eqref{eq:se_conditional_form} holds, so $\gamma^i\otimes\delta_{\nu^i}\in\Gamma_\beta$, and because the restrictions defining $\Gamma_\beta$ are linear in $\gamma$, every mixture $\gamma\equiv\sum_iw_i\,(\gamma^i\otimes\delta_{\nu^i})$ belongs to $\Gamma_\beta$ and has $\Lbeta\gamma=\sum_iw_i\mu_{\pi^i}$.

\emph{Parts~(a) and~(b).}
The set $G\equiv\{(\nu,\pi):\nu\in\Delta,\ \pi\in\Pi(\nu)\}$ is closed in the compact set $\Delta\times\Pa(\mathcal X\times\{1,\ldots,M+1\}^T)$, because~\eqref{eq:se_sc_cells} and~\eqref{eq:se_se_cells} are continuous in $(\nu,\pi)$, so $G$ is compact, and so is $\{\mu_\pi:(\nu,\pi)\in G\}$, because $\mu_\pi$ is continuous in $\pi$.
The convex hull of a compact subset of $\R^{|\mathcal Z|}$ is compact, so the convex hull of this set is closed and contains every integral $\int_\Delta\mu_{\pi_\nu}\,\lambda(\d\nu)$ (the mean of a random vector with values in a closed convex set lies in that set).
Each of its points is a finite mixture $\sum_iw_i\mu_{\pi^i}$, which the first paragraph realizes as $\Lbeta\gamma$ with $\gamma\in\Gamma_\beta$, and Carath\'eodory's theorem in the probability simplex, which has dimension $|\mathcal Z|-1$, allows $m\le|\mathcal Z|$ points, so $\nu$ takes at most $|\mathcal Z|$ values under the mixture.

\emph{Part~(c).}
Because $\bar c\in\mathcal C(\beta)$, no cell straddles $\bar c$, so conditional on $\nu$, with cell distribution $\pi$, the ASF equals $\sum_{x,k}\pi(x,k)\,\bar y(k;\beta)$.
Applying the preceding argument to $(\nu,\pi)\mapsto\bigl(\mu_\pi,\sum_{x,k}\pi(x,k)\,\bar y(k;\beta)\bigr)$ gives the same conclusions, with Carath\'eodory's theorem in the affine hull of these pairs, which has dimension at most $|\mathcal Z|$.
The set of pairs $\bigl(\Lbeta\gamma,\EE{\gamma}{\one\{\tilde V_1\ge\bar c\}}\bigr)$ over $\gamma\in\Gamma_\beta$ is therefore compact and convex, with each pair realized under at most $|\mathcal Z|+1$ values of $\nu$, and $\mathcal M_\theta$ is the set of first coordinates of the pairs whose second coordinate is $\tau_{\mathrm{ASF}}$, hence compact and convex.
\end{proof}

\subsection{The linear program and its certification}
\label{app:se_exact_pricing}

After the cell reduction, Example~\ref{ex:binary_sequential} has input $w=(x,k,\nu)$ on $\mathcal X\times\{1,\ldots,M+1\}^T\times\Delta$, and we write $\Gamma_\beta$ for the admissible set in these coordinates.
By Lemma~\ref{lem:se_cell_reduction} and Proposition~\ref{prop:se_finite_support}(a), this LIMR has the same output measures $\mathcal M_\beta$, so $T(\beta)$ is unchanged.
The output is the function $z(x,k;\beta)$ of the input, so the input-space payoff is $q_{\phi,\beta}(x,k,\nu)=\phi\bigl(z(x,k;\beta)\bigr)$ and does not depend on $\nu$.
There are no inequality restrictions.
The equality restrictions are the conditional moment equalities given $\nu$ in their cell form~\eqref{eq:se_sc_cells} and~\eqref{eq:se_se_cells}: conditionally on $\nu$, the distribution of $(X,k)$ lies in $\Pi(\nu)$.

Fix a finite atom list $N=\{\nu^1,\ldots,\nu^m\}\subset\Delta$.
The covariate and cell coordinates are already finite, so restricting $\nu$ to $N$ makes the input space finite.
Take $\mathcal W'\equiv\mathcal X\times\{1,\ldots,M+1\}^T\times N$.
For each atom $\nu^i$ in the list, impose~\eqref{eq:nu_consistency} and~\eqref{eq:seq_lifted} at $r=\one\{\nu=\nu^i\}$, in cell form.
The weights at $\nu^i$ satisfy~\eqref{eq:se_se_cells} as written and~\eqref{eq:se_sc_cells} with $\nu^i(x)$ multiplied by the mass the input measure gives that atom.
These equalities are linear in $f$, and every function of the cell index is a combination of the cell indicators, so imposing them at the $M+1$ indicators imposes the whole family, and $\mathcal J'$ is finite.
The auxiliary finite linear program is~\eqref{eq:TLD_restricted} at these selections:
\begin{equation}
\label{eq:se_inner_lp}
T_{\mathrm{LP}}(\beta;N)
\equiv
\TLD{\beta}{\mathcal W'}{\mathcal J',\varnothing}.
\end{equation}

Column certification needs no computation here.
An input measure on $\mathcal W'$ gives $\nu$ only the values in the list, so a function $r$ of $\nu$ matters only through its values there.
The error terms may be taken independent of $X$ and identically distributed across periods, so $\Pi(\nu)$ is never empty.
Putting all the mass on a single atom then gives weights that satisfy every equality, and the program is feasible.
Both requirements of Definition~\ref{def:column_certified} therefore hold, and Theorem~\ref{thm:enclosure} gives $T(\beta)\le T_{\mathrm{LP}}(\beta;N)$.
The value of the program is the minimum of $\sum_{z\in\mathcal Z}\bigl(\mu^*(z)-\mu(z)\bigr)_+$ over the output measures it can reach (as in Remark~\ref{rem:ipm}).
The sum is zero only at $\mu^*$, so the value is zero exactly when some allowed weights generate $\mu^*$, and then $\beta$ is in the identified set.
Conversely, if $\beta$ is in the identified set, $\mathcal M_\beta$ is closed by Proposition~\ref{prop:se_finite_support}(b), so some admissible input measure generates $\mu^*$.
By part~(a), one such measure puts $\nu$ on at most $|\mathcal Z|$ values, and the list of those values gives value zero.

Let $(\phi^*,\zeta^*)$ be the test function and the envelope at an optimal solution of~\eqref{eq:se_inner_lp}, in the notation of Section~\ref{sec:computation}.
Then $T_{\mathrm{LP}}(\beta;N)=\inner{\phi^*,\mu^*}-\zeta^*$, and at every atom $\nu^i$ in the list, $\inner{\phi^*,\mu_\pi}\le\zeta^*$ for every $\pi\in\Pi(\nu^i)$.
We check that inequality at every $\nu\in\Delta$ by solving
\begin{equation}
\label{eq:se_pricing}
\delta^*
\equiv
\max_{\nu\in\Delta,\ \pi\in\Pi(\nu)}\ \Bigl\{\inner{\phi^*,\mu_\pi}-\zeta^*\Bigr\}.
\end{equation}
The maximum is attained (Proposition~\ref{prop:se_finite_support}(b)).
Let $p^*$ be optimal weights of the dual of the program, one per row of $\mathcal W'$, as in the proof of Theorem~\ref{thm:enclosure}.
Some atom $\nu^i$ carries positive mass under $p^*$, and rescaling its weights to sum to one leaves the cell equalities intact, so the result is a $\pi\in\Pi(\nu^i)$.
Every row with positive weight is tight at the optimum, so the maximand is zero at that $\pi$.
The maximum is therefore never negative.
Every $\gamma\in\Gamma_\beta$ satisfies those equalities, so the slack of~\eqref{eq:slack} integrates to $\EE{\gamma}{s(W)}=\zeta^*-\inner{\phi^*,\Lbeta\gamma}$.
By Proposition~\ref{prop:se_finite_support}(a), the $\Lbeta\gamma$ are exactly the mixtures of the $\mu_\pi$.
Hence $\inf_{\gamma\in\Gamma_\beta}\EE{\gamma}{s(W)}=-\delta^*$, which is the row residual of Remark~\ref{rem:row_measures}, an infimum over admissible input measures rather than over input values.
Any $\delta_{\mathrm{ub}}\ge\delta^*$ therefore serves as $r_{\mathrm{low}}=-\delta_{\mathrm{ub}}$ in Theorem~\ref{thm:enclosure}, read in the form of Remark~\ref{rem:row_measures}, and gives the enclosure
$$
\max\{0,\ T_{\mathrm{LP}}(\beta;N)-\delta_{\mathrm{ub}}\}\ \le\ T(\beta)\ \le\ T_{\mathrm{LP}}(\beta;N).
$$

At every iteration of column-and-row generation, if the lower end of the enclosure is positive, then $T(\beta)>0$ and $\beta$ is outside the identified set, with nothing further to compute.
If $\delta^*>0$, the maximizing $\nu$ is a value to add, and it is not already in the list, since the maximand is at most zero there.
We solve~\eqref{eq:se_pricing} in the smaller exact parameterization of Additional Appendix~\ref{app:se_history_reduction}, to a bound that is guaranteed to be at least $\delta^*$ rather than to a local maximum.

\section{Additional proofs}
\label{sec:app-additional-proofs}

This appendix proves Corollary~\ref{C:inf_finite} of Section~\ref{sec:inference}.
We use the notation of the proof of Proposition~\ref{P:inf_finite} in Appendix~\ref{app:proofs}, in particular $\Phi\equiv\Phi(\mathcal Z)$.

\begin{proof}[Proof of Corollary~\ref{C:inf_finite}]
We establish the coverage property~\eqref{E:coverage} and then the consistency property~\eqref{E:power}.

First, define the functional
$$
\Psi_{n,\mu^*,\theta}:
G
\mapsto
\sup_{\phi\in\Phi}
\left(
G(\phi)
+
\lambda_n
\eta_{\theta,\mu^*}(\phi)
\right)
-
\lambda_n
T_{\mu^*}(\theta)
$$
on $\ell^\infty(\Phi)$.
Define also
$$
B_{n,\mu^*}(\theta)
=
\sup_{\phi\in\Phi}
\left(
\mathbb{G}_{n,\mu^*}(\phi)
+
\lambda_n
\eta_{\theta,n}(\phi)
\right)
-
\lambda_n
T_n(\theta),
$$
and $B_{n,\mu^*}^*(\theta)$ as its bootstrap analog with $\mathbb{G}_{n,\mu^*}$ replaced by $\mathbb{G}_{n,\mu^*}^*$.
Because $\Psi_{n,\mu^*,\theta}$ is Lipschitz for all $n$, $\mu^*$, and $\theta$, Lemma~A.2 of \textcite{linton2010} implies that
$$
\Psi_{n,\mu^*,\theta}
(\mathbb{G}_{n,\mu^*}^*)
\overset{\mu^*}{\Rightarrow}
\Psi_{n,\mu^*,\theta}
(\mathbb{G}_{\mu^*})
$$
uniformly in $\theta$ and $\mu^*$, and likewise
$$
\Psi_{n,\mu^*,\theta}
(\mathbb{G}_{n,\mu^*})
\Rightarrow
\Psi_{n,\mu^*,\theta}
(\mathbb{G}_{\mu^*})
$$
uniformly.

Define $\Delta_{n,\mu^*}\equiv\sup_{\phi\in\Phi}\left|\EE{n}{\phi}-\EE{\mu^*}{\phi}\right|$.
By the exact cancellation in~\eqref{E:etaconv}, $\sup_{\phi\in\Phi}|\eta_{\theta,n}(\phi)-\eta_{\theta,\mu^*}(\phi)|=\Delta_{n,\mu^*}$ for every $\theta\in\Theta^{\mathrm{feas}}\equiv\{\theta\in\Theta:\Gamma_\theta\neq\varnothing\}$, and the right-hand side does not depend on $\theta$.
Moreover, $|T_n(\theta)-T_{\mu^*}(\theta)|\le\Delta_{n,\mu^*}$.
Hence replacing $\eta_{\theta,\mu^*}$ by $\eta_{\theta,n}$ and $T_{\mu^*}(\theta)$ by $T_n(\theta)$ changes the value of the functional by at most $2\lambda_n\Delta_{n,\mu^*}$.
Since $\Delta_{n,\mu^*}=O_p(n^{-1/2})$ uniformly over $\mu^*\in\mathbf P$ and $\lambda_n=o(\sqrt n)$, we have $2\lambda_n\Delta_{n,\mu^*}=o_p(1)$ uniformly over $\mu^*\in\mathbf P$ and $\theta\in\Theta^{\mathrm{feas}}$.
The preceding uniform convergences therefore remain valid with the empirical penalty and centering; in particular, $B_{n,\mu^*}^*(\theta)\overset{\mu^*}{\Rightarrow}\Psi_{n,\mu^*,\theta}(\mathbb G_{\mu^*})$ uniformly.

For any constant $r$, let $h_r:\R\ra[0,1]$ be identically $1$ for $t\le r$, $0$ for $t\ge r+\ve/2$, and linear between $r$ and $r+\ve/2$.
The functions $h_r$ are Lipschitz uniformly in $r$.
Let $d\in(0,1-\alpha)$.
By bounded Lipschitz convergence,
\begin{align}
\label{E:boundedlipcor}
\liminf_{n\ra\infty}
\inf_{\substack{\mu^*\in\mathbf P\\\theta\in\Theta}}
\PP{\mu^*}{
\sup_{r\in\R}
\left|
\mathbb E^*[h_r(B^*_{n,\mu^*}(\theta))]
-
\EE{\mu^*}{
h_r(
\Psi_{n,\mu^*,\theta}(\mathbb G_{\mu^*})
)
}
\right|
\le d
}
=
1.
\end{align}

Let $F_{n,\mu^*,\theta}$ denote the distribution function of $\Psi_{n,\mu^*,\theta}(\mathbb G_{\mu^*})$.
On the event above,
\begin{align}
F_{n,\mu^*,\theta}
\big(
\hat c_{1-\alpha}(\theta)+\ve/2
\big)
&\ge
\int
h_{\hat c_{1-\alpha}(\theta)}
\,\d F_{n,\mu^*,\theta}
\nonumber\\
&\ge
\mathbb E^*[
h_{\hat c_{1-\alpha}(\theta)}
(
B_{n,\mu^*}^*(\theta)
)
]
-d
\nonumber\\
&\ge
1-\alpha-d,
\label{E:conlb1}
\end{align}
and so
$$
\hat c_{1-\alpha}(\theta)+\ve/2
\ge
q_{n,\mu^*,1-\alpha-d}(\theta),
$$
where $q_{n,\mu^*,1-\alpha-d}(\theta)$ is the $(1-\alpha-d)^{\text{th}}$ quantile of $\Psi_{n,\mu^*,\theta}(\mathbb G_{\mu^*})$.
Similarly, applying $h_{q_{n,\mu^*,1-\alpha-d}(\theta)}$ to $B_{n,\mu^*}(\theta)$ recovers the bound
\begin{align}
\label{E:conlb2}
\inf_{\substack{\mu^*\in\mathbf P\\
\theta\in\Thetaw(\mu^*)}}
\PP{\mu^*}{
B_{n,\mu^*}(\theta)
\le
q_{n,\mu^*,1-\alpha-d}(\theta)
+\ve/2
}
&\ge
\inf_{\substack{\mu^*\in\mathbf P\\
\theta\in\Thetaw(\mu^*)}}
\EE{\mu^*}{
h_{q_{n,\mu^*,1-\alpha-d}(\theta)}
(
B_{n,\mu^*}(\theta)
)
},
\end{align}
which in the $\liminf$ is at least $1-\alpha-d$.
As~\eqref{E:conlb1} holds with probability tending uniformly to $1$, \eqref{E:conlb2}, \eqref{E:inf_bound}, and the union bound imply
\begin{align*}
\liminf_{n\ra\infty}
\inf_{\substack{\mu^*\in\mathbf P\\
\theta\in\Thetaw(\mu^*)}}
\PP{\mu^*}{
\sqrt n T_n(\theta)
\le
\hat c_{1-\alpha}(\theta)+\ve
}
\ge
1-\alpha-d.
\end{align*}
As $d$ was arbitrary, \eqref{E:coverage} is proved.

Let $d\in(0,\alpha)$.
Substituting $r=q_{n,\mu^*,1-\alpha+d}(\theta)$ in the bounded Lipschitz display~\eqref{E:boundedlipcor} likewise shows that
\begin{align}
\label{E:boundedliptwo}
\mathbb P^*(
B^*_{n,\mu^*}(\theta)
\le
q_{n,\mu^*,1-\alpha+d}(\theta)
+\ve/2
)
\ge
1-\alpha,
\end{align}
with probability tending uniformly to $1$ in $\mu^*\in\mathbf P$, $\theta\in\Theta$.

The proof of Proposition~\ref{P:inf_finite} shows that the limit processes $\mathbb G_{\mu^*}$ are uniformly $O_p(1)$, and the term $\lambda_n\eta_{\theta,\mu^*}(\phi)-\lambda_nT_{\mu^*}(\theta)$ which appears in $\Psi_{n,\mu^*,\theta}(\mathbb G_{\mu^*})$ is always nonpositive, so $\Psi_{n,\mu^*,\theta}(\mathbb G_{\mu^*})\le\sup_{\phi\in\Phi}\mathbb G_{\mu^*}(\phi)$, and
$$
C
\equiv
\sup_{n,\mu^*,\theta}
q_{n,\mu^*,1-\alpha+d}(\theta) < \infty.
$$
The bound~\eqref{E:boundedliptwo} guarantees
\begin{align}
\liminf_{n\ra\infty}
\inf_{\substack{\mu^*\in\mathbf P\\\theta\in\Theta}}
\PP{\mu^*}{
\hat c_{1-\alpha}(\theta)
\le
C+\ve/2
}
=
1.
\label{E:boundedlipboot2}
\end{align}
On the other hand, because the processes $\mathbb G_{n,\mu^*}$ are $O_p(1)$ uniformly, \eqref{E:etaconv} implies
$$
\sqrt nT_n(\theta)
\ge
\sqrt nT_{\mu^*}(\theta)
-
O_p(1)
$$
uniformly in $\mu^*$ and $\theta$.
Therefore, for any $\delta>0$, there exists some $\Delta$ large enough such that
\begin{align*}
\limsup_{n\ra\infty}
\sup_{\substack{\mu^*\in\mathbf P\\\theta\in\Theta}}
\PP{\mu^*}{
\sqrt n
(
T_n(\theta)-T_{\mu^*}(\theta)
)
\le
C+3\ve/2-\Delta
}
\le
\delta.
\end{align*}
By the definition of $C$ and~\eqref{E:boundedlipboot2}, the above is an upper bound for
\begin{align*}
\limsup_{n\ra\infty}
\sup_{\substack{
\mu^*\in\mathbf P\\
\theta\in\Theta_n^\Delta(\mu^*)
}}
\PP{\mu^*}{
\sqrt nT_n(\theta)
\le
\hat c_{1-\alpha}(\theta)+\ve
},
\end{align*}
which therefore converges to $0$ as $\Delta\to\infty$.
\end{proof}

\section{Two-player entry game}
\label{app:entry_game}

This appendix specializes Example~\ref{ex:entry_game_main} to bivariate standard normal errors, which is the entry game of Example~3.1 in \textcite{beresteanuSharpIdentificationRegions2011} (henceforth BMM), with the error distribution of their Section~3.4.
We show that the completed game has a LIMR (Subsection~\ref{app:entry_setup}), that with no restriction on the equilibrium selector its identified set is BMM's sharp identification region (Subsection~\ref{app:entry_equivalence}), and that its ADF has the same zero set as BMM's criterion and can be approximated arbitrarily well by finite linear programs (Subsection~\ref{app:entry_LP}).

\subsection{Model}
\label{app:entry_setup}

Firms $j=1,2$ choose $y_j\in\{0,1\}$, and firm $j$ earns $y_j(\delta_jy_{-j}+v_j)$ at error term realization $v$, with $\theta\equiv(\delta_1,\delta_2)\in\Theta\subset(-\infty,0)^2$.
Both firms observe $V\equiv(V_1,V_2)$ before they play, and its distribution $H\equiv N(0,1)^{\otimes2}$ is known.
The observable is the action profile $Z=Y\in\mathcal Y\equiv\{(0,0),(1,0),(0,1),(1,1)\}$, with true distribution $\mu^*\in\Pa(\mathcal Y)$.
We identify each $q\in\Pa(\mathcal Y)$ with its probability vector in $\R^4$, coordinates ordered as in $\mathcal Y$, and write $e_y$ for the point mass at $y$.

The solution concept is Nash equilibrium in mixed strategies, and the equilibria are those described by BMM \parencite[Example~3.1]{beresteanuSharpIdentificationRegions2011}.
Call $B_{1,\theta}\times B_{2,\theta}$, with $B_{j,\theta}\equiv[0,-\delta_j)$, the multiplicity cell, and let $\mathcal N_\theta\equiv\{v\in\R^2:v_j\in\{0,-\delta_j\}\text{ for some }j\}$ be the set of threshold points, so that $H(\mathcal N_\theta)=0$ because $H$ has a density.
For $v\notin\mathcal N_\theta$ outside the cell, the equilibrium is unique and pure, with action profile $y_\theta(v)$.
For $v\notin\mathcal N_\theta$ in the cell, there are three equilibria: the pure profiles $(1,0)$ and $(0,1)$, and a mixed equilibrium in which firm $j$ enters with probability $\sigma^m_{j,\theta}(v)\equiv v_{-j}/(-\delta_{-j})$.
The mixed equilibrium induces the outcome distribution
$$
q^{\mathrm{mix}}_\theta(v)
\equiv
\bigl((1-\sigma^m_{1,\theta})(1-\sigma^m_{2,\theta}),\ \sigma^m_{1,\theta}(1-\sigma^m_{2,\theta}),\ (1-\sigma^m_{1,\theta})\sigma^m_{2,\theta},\ \sigma^m_{1,\theta}\sigma^m_{2,\theta}\bigr),
$$
with each $\sigma^m_{j,\theta}$ evaluated at $v$.
Let $Q_\theta(v)\subseteq\Pa(\mathcal Y)$ be the set of outcome distributions of all Nash equilibria at $v$, as in BMM.
It is nonempty and compact \parencite[Remark~3.2]{beresteanuSharpIdentificationRegions2011}.
Off $\mathcal N_\theta$, it is $\{e_{y_\theta(v)}\}$ outside the cell and $\{e_{(1,0)},e_{(0,1)},q^{\mathrm{mix}}_\theta(v)\}$ in it.

We complete the game with an equilibrium selector $S\in\mathcal S\equiv\{1,2,3\}$, so the input is $W=(V,S)\in\mathcal W\equiv\R^2\times\mathcal S$.
The output kernel is the outcome distribution of the equilibrium that $s$ selects at $v$: for $v\notin\mathcal N_\theta$,
$$
K_\theta(\cdot\mid v,s)
\equiv
\begin{cases}
e_{y_\theta(v)}, & v\notin B_{1,\theta}\times B_{2,\theta}, \\[0.2em]
e_{(1,0)}, & v\in B_{1,\theta}\times B_{2,\theta},\ s=1, \\[0.2em]
e_{(0,1)}, & v\in B_{1,\theta}\times B_{2,\theta},\ s=2, \\[0.2em]
q^{\mathrm{mix}}_\theta(v), & v\in B_{1,\theta}\times B_{2,\theta},\ s=3.
\end{cases}
$$
On $\mathcal N_\theta$, let $K_\theta(\cdot\mid v,s)\equiv e_y$ for all $s\in\mathcal S$, where $y$ is the first profile in the order of $\mathcal Y$ that is a pure equilibrium at $v$.
The four pure-equilibrium regions are $\{v_1\le0,v_2\le0\}$, $\{v_1\ge0,v_2\le-\delta_2\}$, $\{v_1\le-\delta_1,v_2\ge0\}$, and $\{v_1\ge-\delta_1,v_2\ge-\delta_2\}$, respectively.
These closed sets cover $\R^2$, so a pure equilibrium always exists and choosing the first one in the stated order is measurable.
Thus $K_\theta(\cdot\mid v,s)\in Q_\theta(v)$ for every $v$ and $s$, and the output operator
\begin{equation}
\label{eq:entry_Loperator}
(\Lgen\gamma)(\{y\})
=
\int_{\R^2\times\mathcal S}
K_\theta(\{y\}\mid v,s)\,\d\gamma(v,s),
\qquad y\in\mathcal Y,
\end{equation}
is linear in $\gamma$.
The admissible set fixes the distribution of the error terms and leaves the conditional distribution of the selector unrestricted:
$$
\Gamma_\theta
\equiv
\left\{\gamma\in\Pa(\R^2\times\mathcal S):\EE{\gamma}{\one\{V\le c\}-H((-\infty,c])}=0\quad\text{for all }c\in\R^2\right\},
$$
where $V\le c$ holds componentwise, so that $\gamma_V=H$.
Assumptions~\ref{asm:measurable}--\ref{asm:domination} hold: $\R^2$ carries its Borel $\sigma$-algebra and $\mathcal S$ and $\mathcal Y$ are finite, $\Gamma_\theta$ is defined by moment equalities, $K_\theta$ is a probability kernel that does not depend on $\gamma$ (condition~\eqref{eq:probability_kernel} is vacuous without covariates), and counting measure on $\mathcal Y$ dominates.

\begin{remark}
\label{rem:entry_tamer_selection}
The completion spans the unrestricted equilibrium selection mechanisms of \textcite{Tamer2003} and \textcite{BerryTamer2007}.
Every $\gamma\in\Gamma_\theta$ disintegrates as $\d\gamma(v,s)=\nu_\gamma(s\mid v)\,H(\d v)$, with the conditional distribution $\nu_\gamma$ of the selector unrestricted.
Its pushforward $\lambda_\gamma(\cdot\mid v)$, the distribution of $K_\theta(\cdot\mid v,S)$ when $S$ has distribution $\nu_\gamma(\cdot\mid v)$, is a selection mechanism in their sense, a probability kernel from $\R^2$ to $\Pa(\mathcal Y)$ that puts mass one on $Q_\theta(v)$, and it generates $\Lgen\gamma$.
Conversely, every such mechanism $\lambda$ equals $\lambda_\gamma$, $H$-almost surely, for the $\gamma\in\Gamma_\theta$ with $\nu_\gamma(s\mid v)=\lambda(\{K_\theta(\cdot\mid v,s)\}\mid v)$ in the cell off $\mathcal N_\theta$, where the three elements of $Q_\theta(v)$ are distinct, and $\nu_\gamma(1\mid v)=1$ elsewhere.
Restrictions across $V$ and $S$, such as independence or symmetric selection between the two pure equilibria, are linear in $\gamma$ because $\gamma_V=H$ is fixed, so they fit the framework of Section~\ref{sec:computation}. Under such restrictions, however, \eqref{eq:entry_dual} below no longer holds as stated.
\end{remark}

\subsection{BMM equivalence}
\label{app:entry_equivalence}

BMM characterize their sharp identification region through the Aumann expectation of the random set $Q_\theta$:
$$
\EE{H}{Q_\theta}
\equiv
\left\{\int q(v)\,H(\d v): q \text{ measurable},\ q(v)\in Q_\theta(v)\text{ for all }v\in\R^2\right\},
$$
and their region is $\Thetaw^{\mathrm{BMM}}\equiv\{\theta\in\Theta:\mu^*\in\EE{H}{Q_\theta}\}$.
Such $q$ are the measurable selections of $Q_\theta$ \parencite[Definition~A.3]{beresteanuSharpIdentificationRegions2011}.
This is equivalent to BMM's definition.
Requiring $q(v)\in Q_\theta(v)$ for every $v$ rather than $H$-almost surely is without loss, since a selection can be redefined as $K_\theta(\cdot\mid v,1)$ on a null set.
No closure is needed, because $\EE{H}{Q_\theta}$ is closed \parencite[p.~1793]{beresteanuSharpIdentificationRegions2011}.
BMM allow selections that depend on randomness beyond $V$, but conditioning such a selection on $V$ gives a selection of $\operatorname{co}Q_\theta$ with the same integral,%
\footnote{For each $u\in\R^4$, $\inner{u,\EE{}{\tilde q\mid V}}=\EE{}{\inner{u,\tilde q}\mid V}\le\max_{q\in Q_\theta(V)}\inner{u,q}$ almost surely.
Taking a countable dense set of $u$ and using that $\operatorname{co}Q_\theta(V)$ is closed and convex gives $\EE{}{\tilde q\mid V}\in\operatorname{co}Q_\theta(V)$ almost surely.}
and $\EE{H}{\operatorname{co}Q_\theta}=\EE{H}{Q_\theta}$ by the convexification property used in the proof below.

\begin{prop}
\label{prop:entry_equivalence}
For every $\theta\in\Theta$, $\Lgen\Gamma_\theta=\EE{H}{Q_\theta}$, and this set is closed in total variation.
Hence $\Mtheta=\EE{H}{Q_\theta}$ and $\Thetaw=\Thetaw^{\mathrm{BMM}}$.
\end{prop}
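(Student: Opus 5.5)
The plan is to prove the two inclusions $\Lgen\Gamma_\theta\subseteq\EE{H}{Q_\theta}$ and $\EE{H}{Q_\theta}\subseteq\Lgen\Gamma_\theta$. TV-closedness will then follow from the closedness of $\EE{H}{Q_\theta}$, since the output space $\mathcal Y$ is finite, so TV and Euclidean closure coincide (Remark~\ref{rem:tv_indistinguishability}). With closedness in hand, $\Mtheta=\Lgen\Gamma_\theta=\EE{H}{Q_\theta}$, and comparing the definitions of $\Thetaw$ and $\Thetaw^{\mathrm{BMM}}$ gives $\Thetaw=\Thetaw^{\mathrm{BMM}}$.

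For $\Lgen\Gamma_\theta\subseteq\EE{H}{Q_\theta}$, fix $\gamma\in\Gamma_\theta$ and disintegrate it as $\nu_\gamma(s\mid v)H(\d v)$, which is possible because $\gamma_V=H$ and $\R^2\times\mathcal S$ is Polish. Define
\[
\bar q(v)\equiv\sum_s\nu_\gamma(s\mid v)K_\theta(\cdot\mid v,s).
\]
This is measurable, lies in $\operatorname{co}Q_\theta(v)$ for every $v$ because $K_\theta(\cdot\mid v,s)\in Q_\theta(v)$, and integrates to $\Lgen\gamma$. It therefore remains to show $\int\bar q\,\d H\in\EE{H}{Q_\theta}$, that is, $\EE{H}{\operatorname{co}Q_\theta}=\EE{H}{Q_\theta}$. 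This is the Lyapunov-type convexification property of Aumann integrals over a nonatomic measure, and $H$ is nonatomic because it has a density. I would invoke it directly (Aumann's theorem, as used by BMM). A constructive alternative also works: by Carathéodory in the simplex, $\bar q(v)$ is a measurable mixture of at most three elements of $Q_\theta(v)$ with weights $\nu_\gamma(\cdot\mid v)$. On the cell I would then split $H$-mass measurably, using Lyapunov's convexity theorem applied to the vector measure $A\mapsto\bigl(\int_A\nu_\gamma(s\mid v)K_\theta(\cdot\mid v,s)H(\d v)\bigr)_s$, to obtain a partition $A_1,A_2,A_3$ of the cell such that the pure selection $q(v)=K_\theta(\cdot\mid v,s)$ on $A_s$ has the same integral.

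For $\EE{H}{Q_\theta}\subseteq\Lgen\Gamma_\theta$, take a measurable selection $q$ of $Q_\theta$. Off $\mathcal N_\theta$ and in the cell, the three elements $e_{(1,0)}$, $e_{(0,1)}$, and $q^{\mathrm{mix}}_\theta(v)$ are distinct, because $q^{\mathrm{mix}}$ has full support there. Define the selector $s(v)$ by $q(v)=K_\theta(\cdot\mid v,s(v))$; this is measurable since each set $\{q=e_{(1,0)}\}$, etc., is measurable. Set $s(v)=1$ elsewhere. Then $\gamma\equiv(\mathrm{id},s)_*H$ lies in $\Gamma_\theta$, and since $\mathcal N_\theta$ is $H$-null, $\Lgen\gamma=\int q\,\d H$. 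This is essentially the converse half of Remark~\ref{rem:entry_tamer_selection}.

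The main obstacle is the convexification step in the first inclusion: the LIMR allows randomized selectors, so $\Lgen\gamma$ is a priori an integral of a selection of $\operatorname{co}Q_\theta$ rather than of $Q_\theta$. I would handle this with Lyapunov's theorem on the nonatomic $H$ restricted to the multiplicity cell, checking measurability of the resulting partition. Closedness is then taken from BMM (p.~1793); if a self-contained argument is wanted, it also follows because $\EE{H}{\operatorname{co}Q_\theta}$ is the image of the weak-* compact set of selector kernels under a continuous linear map.
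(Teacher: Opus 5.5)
Your proposal is correct and follows the paper's proof essentially step for step. You disintegrate $\gamma$ so that $\Lgen\gamma$ integrates a selection of $\operatorname{co}Q_\theta$ and invoke $\EE{H}{\operatorname{co}Q_\theta}=\EE{H}{Q_\theta}$ from BMM. You recover a deterministic selector from a selection using the distinctness of the equilibrium outcomes off $\mathcal N_\theta$, with an arbitrary value on the $H$-null set $\mathcal N_\theta$. You take closedness from BMM together with TV being half the $\ell^1$ distance on the finite $\mathcal Y$.

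The optional Lyapunov sketch is not needed. If you keep it, apply the purification to the unweighted measures $A\mapsto\int_A K_\theta(\cdot\mid v,s)\,H(\d v)$, with $\nu_\gamma$ as the fractional partition being purified, not to the $\nu_\gamma$-weighted measures. Your weak-* compactness remark is a self-contained way to see that $\Lgen\Gamma_\theta$ is closed without citing BMM.
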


\begin{proof}
The random set $Q_\theta$ is measurable, as in BMM's Proposition~3.1 and Remark~3.2, compact-valued, and integrably bounded because its values lie in the simplex \parencite[Chapters~1--2]{molchanov2017theory}.
Fix $\gamma\in\Gamma_\theta$.
Because $\gamma_V=H$ and $\mathcal S$ is finite, $\d\gamma(v,s)=\nu_\gamma(s\mid v)\,H(\d v)$ for a probability kernel $\nu_\gamma$, and~\eqref{eq:entry_Loperator} gives
$$
\Lgen\gamma=\int\sum_{s\in\mathcal S}\nu_\gamma(s\mid v)\,K_\theta(\cdot\mid v,s)\,H(\d v).
$$
The integrand lies in $\operatorname{co}Q_\theta(v)$ for every $v$, because $K_\theta(\cdot\mid v,s)\in Q_\theta(v)$.
So $\Lgen\gamma\in\EE{H}{\operatorname{co}Q_\theta}$, which equals $\EE{H}{Q_\theta}$ \parencite[p.~1793]{beresteanuSharpIdentificationRegions2011}.

Conversely, let $\mu=\int q(v)\,H(\d v)$ for a selection $q$.
Off $\mathcal N_\theta$, every element of $Q_\theta(v)$ equals $K_\theta(\cdot\mid v,s)$ for some $s$, so $s^*(v)\equiv\min\{s\in\mathcal S:K_\theta(\cdot\mid v,s)=q(v)\}$ is well defined there, and we set $s^*(v)\equiv1$ on $\mathcal N_\theta$.
The map $s^*$ is measurable because $q$ and $K_\theta$ are.
The distribution $\gamma$ of $(V,s^*(V))$ with $V\sim H$ lies in $\Gamma_\theta$, and $\Lgen\gamma=\int K_\theta(\cdot\mid v,s^*(v))\,H(\d v)=\mu$ because the two integrands agree off the null set $\mathcal N_\theta$.

Finally, on $\Pa(\mathcal Y)$ the total variation distance is half the $\ell^1$ distance in $\R^4$, and $\EE{H}{Q_\theta}$ is closed, as noted above.
Therefore $\Lgen\Gamma_\theta$ is closed in total variation and $\Mtheta=\Lgen\Gamma_\theta$, and the last claim follows from the definitions of $\Thetaw$ and $\Thetaw^{\mathrm{BMM}}$.
\end{proof}
With no restriction on the equilibrium selector, the identified set~\eqref{eq:identified_set} is therefore BMM's sharp identification region.

\subsection{Computation}
\label{app:entry_LP}

Since $\mathcal Y$ is finite, Assumption~\ref{asm:finite_Z} holds, test functions are vectors $\phi\in[0,1]^4$, and the input-space payoff is $q_{\phi,\theta}(v,s)=\inner{\phi,K_\theta(\cdot\mid v,s)}$.
Because the conditional distribution of the selector is unrestricted, the largest model expectation of $q_{\phi,\theta}$ is attained by choosing, at each $v$, the selector value with the largest payoff:
\begin{equation}
\label{eq:entry_dual}
\sup_{\gamma \in \Gamma_\theta} \EE{\gamma}{q_{\phi,\theta}(V,S)}
=
\int \max_{s \in \mathcal S} q_{\phi,\theta}(v,s)\, H(\d v).
\end{equation}
No conditional mixture of the three payoffs exceeds their maximum, and the smallest maximizing index is a measurable selector that attains it.
Off the multiplicity cell, the integrand does not depend on $s$ and integrates to $\inner{\phi,\pi(\theta)}$, where
$$
\pi(\theta)
\equiv
\Bigl(\tfrac14,\ \
\tfrac14+\bar\Phi(-\delta_1)\bigl[\Phi(-\delta_2)-\tfrac12\bigr],\ \
\tfrac14+\bigl[\Phi(-\delta_1)-\tfrac12\bigr]\bar\Phi(-\delta_2),\ \
\bar\Phi(-\delta_1)\,\bar\Phi(-\delta_2)\Bigr)
$$
is the vector contributed by the region of unique equilibrium, $\Phi$ is the standard normal CDF, and $\bar\Phi\equiv1-\Phi$.
Its coordinates sum to $1-H(B_{1,\theta}\times B_{2,\theta})$.
By~\eqref{eq:entry_Loperator}, $\EE{\Lgen\gamma}{\phi}=\EE{\gamma}{q_{\phi,\theta}(V,S)}$.
Combining this identity with~\eqref{eq:entry_dual} gives the following expression for the ADF of Section~\ref{sec:identification}:
\begin{equation}
\label{eq:entry_T_support}
T(\theta)
=
\sup_{\phi \in [0,1]^4} \Bigl\{ \inner{\phi, \mu^* - \pi(\theta)} - \int_{B_{1,\theta} \times B_{2,\theta}} \max\bigl\{ \phi_{(1,0)},\ \phi_{(0,1)},\ \inner{\phi, q^{\mathrm{mix}}_\theta(v)} \bigr\}\, H(\d v) \Bigr\}.
\end{equation}
Introducing a bounded measurable function $r\colon B_{1,\theta}\times B_{2,\theta}\to\R$ gives the equivalent linear program
\begin{equation}
\label{eq:entry_LP}
\begin{aligned}
T(\theta)=\max_{\phi\in[0,1]^4,\,r}\quad
&\inner{\phi,\mu^*-\pi(\theta)}
-\int_{B_{1,\theta}\times B_{2,\theta}}r(v)\,H(\d v)\\
\text{subject to}\quad
&r(v)\ge\phi_{(1,0)},\quad
r(v)\ge\phi_{(0,1)},\quad
r(v)\ge\inner{\phi,q^{\mathrm{mix}}_\theta(v)}\\
&\text{for all }v\in B_{1,\theta}\times B_{2,\theta}.
\end{aligned}
\end{equation}
For each test function, the smallest feasible $r$ is the pointwise maximum in~\eqref{eq:entry_T_support}, so the program is exact.
It is infinite-dimensional because $r$ is a function on the multiplicity cell.
Since $q^{\mathrm{mix}}_\theta$ is continuous on the closure of the cell, replacing $H$ on the cell by a finite-support measure of the same total mass, chosen independently of $\phi$, produces a finite linear program that approximates $T(\theta)$ arbitrarily closely.
The selector is maximized out in closed form, which gives the program without column-and-row generation.
By Theorem~\ref{thm:main_sharpness} and Proposition~\ref{prop:entry_equivalence}, the zeros of $T$ form BMM's sharp identification region.
BMM's criterion (their Theorem~3.2) is a different function of $\theta$ with the same zero set.

\printbibliography[heading=bibintoc]
\clearpage
\end{refsection}

\setcounter{page}{1}
\renewcommand{\thepage}{OA\arabic{page}}
\setcounter{section}{0}
\renewcommand{\thesection}{OA\Alph{section}}
\renewcommand{\theHsection}{OA\Alph{section}}
\begin{center}
  {\Large\bfseries Additional Appendix to\\[0.5ex]
  ``An Adversarial Approach to Identification, Computation, and Inference in Models with a Linear-in-Measures Representation''}\\[1ex]
  Irene Botosaru, Isaac Loh, and Chris Muris
\end{center}

\begin{refsection}
\section{Additional results: parametric binary choice}
\label{app:parametric_details}

This appendix supplies derivations, numerical details, and additional numerical results for Section~\ref{subsec:numerical_example1}, together with the computation and numerical results for Example~\ref{ex:binary_interval}.
Supplemental Appendix~\ref{app:computation_example1} derives the programs and row bounds, while Supplemental Appendix~\ref{app:algorithm} states the general algorithm.

\subsection{Computation for Figure~\ref{fig:numerical_example1_beta}}
\label{app:parametric_implementation}

Fix a candidate $\beta$ and let $(\phi^*,\zeta^*)$ be an optimizer of its auxiliary finite linear program.

\paragraph{Computing $r_{\mathrm{low}}$.}
Under the baseline and serial dependence specifications, we map $a\in\overline\R$ to $(1+e^{-a})^{-1}\in[0,1]$ and partition $[0,1]$ into finitely many intervals.
Under the baseline, for every covariate path $x$ and interval $J$, we compute a number $U_x(J)$ satisfying $U_x(J)\ge q_{\phi^*,\beta}(x,a)$ whenever the transformed value of $a$ lies in $J$.
The rule for $U_x(J)$ is given at the end of this subsection, and $q_{\phi,\beta}$ is defined in Supplemental Appendix~\ref{app:joint_parametric}.
We then set $q_{\mathrm{up}}(x)\equiv\max_J U_x(J)$, which bounds $\sup_{a\in\overline\R}q_{\phi^*,\beta}(x,a)$ from above because the transformed value of every $a$ belongs to one of the intervals.
We set $q_{\mathrm{up}}\equiv\max_x q_{\mathrm{up}}(x)$ and $r_{\mathrm{low}}\equiv\zeta^*-q_{\mathrm{up}}$.

Under serial dependence, the pathwise payoff is
$$
\widetilde q_{\phi^*,\beta}(x,a)
\equiv
\max_{\pi\in\mathcal F(x,a)}\sum_y\phi^*(y,x)\pi_y,
$$
where $\mathcal F(x,a)$ is the Fr\'echet polytope in Supplemental Appendix~\ref{app:marginal_parametric}.
Applying the same partition and interval bounds to $\widetilde q_{\phi^*,\beta}$ gives $q_{\mathrm{up}}(x)\ge\sup_{a\in\overline\R}\widetilde q_{\phi^*,\beta}(x,a)$.
The quantities $q_{\mathrm{up}}$ and $r_{\mathrm{low}}$ then follow as above.

Under fixed effect--error dependence, the finite program contains every reduced input row, so its row residual is zero and we take $r_{\mathrm{low}}=0$.

To tighten $q_{\mathrm{up}}(x)$, we split the interval $J$ with the largest $U_x(J)$ and compute new bounds for its two pieces.
After any number of splits, $\max_J U_x(J)$ remains a valid pathwise upper bound and therefore gives a valid $r_{\mathrm{low}}$ through the formulas above.
This repeated splitting is the interval branch and bound calculation.
If evaluating the payoff at a particular $(x,a)$ gives a value above $\zeta^*$, we add that pair as a row.
Thus, this procedure plays the role of RB in the algorithm described in Supplemental Appendix~\ref{app:algorithm}.

\paragraph{Bisection for $\beta$.}
For fixed $\beta$, let $\overline T(\beta)$ be the value $T_{\mathrm{LP}}(\beta)$ of the auxiliary program at a column-certified $(\mathcal J',\mathcal K')$, and let $\underline T(\beta)\equiv\max\{0,\overline T(\beta)+r_{\mathrm{low}}\}$.
We stop with an \textsc{in} verdict when $\overline T(\beta)=0$, and with an \textsc{out} verdict when $\underline T(\beta)>0$.
If neither condition holds, we tighten the branch and bound calculation or add a violated row.
If the computational budget is exhausted before either condition holds, the coefficient remains \textsc{undecided}.
A value of the auxiliary program is treated as zero only to linear programming accuracy, taken as $10^{-9}$, so neither verdict uses a separate membership threshold for $T(\beta)$.
That numerical zero does not bind at the reported resolution: every coefficient classified \textsc{in} has a value below $10^{-12}$, and the smallest value recorded above the zero exceeds $10^{-8}$.

A coarse grid first locates one or more \textsc{in} coefficients bracketed by \textsc{out} coefficients on both sides, and each endpoint is then bisected between its adjacent \textsc{in} and \textsc{out} values, stopping when the midpoint is \textsc{undecided}.
The final brackets in Figure~\ref{fig:numerical_example1_beta} are narrower than $2\times10^{-4}$.

\paragraph{Validity of the interval bounds.}
Fix an interval $J$ in the partition of $[0,1]$ and let $I=[a_\ell,a_u]\subseteq\overline\R$ be the corresponding interval of fixed effect values.
Write $p_t(a)\equiv H(x_t'\beta+a)$.
Monotonicity of $H$ gives $p_t(a)\in[p_t(a_\ell),p_t(a_u)]$ for every $a\in I$.
Write $\phi_{ij}\equiv\phi^*((i,j),x)$ for $i,j\in\{0,1\}$.
At $T=2$ the baseline payoff is $q\bigl(p_1(a),p_2(a)\bigr)$ with $q(p_1,p_2)=c_0+c_1p_1+c_2p_2+c_{12}p_1p_2$, where $c_0\equiv\phi_{00}$, $c_1\equiv\phi_{10}-\phi_{00}$, $c_2\equiv\phi_{01}-\phi_{00}$, and $c_{12}\equiv\phi_{00}-\phi_{10}-\phi_{01}+\phi_{11}$.
It is affine in each probability separately, so its maximum over the rectangle $[p_1(a_\ell),p_1(a_u)]\times[p_2(a_\ell),p_2(a_u)]$ is attained at a corner, and that corner value bounds $\sup_{a\in I}q_{\phi^*,\beta}(x,a)$ because the curve $a\mapsto\bigl(p_1(a),p_2(a)\bigr)$ stays in the rectangle.
Under serial dependence, the Fr\'echet maximization in closed form gives $\widetilde q_{\phi^*,\beta}(x,a)=\widetilde q\bigl(p_1(a),p_2(a)\bigr)$ with
$$
\widetilde q(p_1,p_2)
=
c_0+c_1p_1+c_2p_2+c_{12}g(p_1,p_2),
$$
where
$$
g(p_1,p_2)
=
\begin{cases}
\min\{p_1,p_2\}, & c_{12}\ge0,\\
\max\{0,p_1+p_2-1\}, & c_{12}<0.
\end{cases}
$$
The function $\widetilde q$ is affine on either side of a single kink line, $p_1=p_2$ in the first case and $p_1+p_2=1$ in the second, so its maximum over the rectangle is attained at a corner or at a point where the kink line meets the rectangle boundary.

On a constant covariate path, $p_1(a)=p_2(a)$, so the payoff is a quadratic or piecewise linear function of the single probability $p_1(a)$, and we set $U_x(J)$ equal to its exact maximum over $[p_1(a_\ell),p_1(a_u)]$.
On every other path, we use the maximum over the rectangle.

\subsection{The programs behind Figure~\ref{fig:numerical_example1_joint}}
\label{app:parametric_reductions}

Fix a coefficient $\beta$ in the identified set.
Throughout this subsection we let the fixed effect take the boundary values $\pm\infty$, with $H(-\infty)=0$ and $H(+\infty)=1$.
Let
$$
\Gamma_\beta(\mu^*)
\equiv
\left\{
\gamma\in\Gamma_\beta:
\Lbeta\gamma=\mu^*
\right\}
$$
be the set of admissible input measures that reproduce $\mu^*$, and let $h_\beta(w)$ denote the counterfactual integrand defining $\tau_{\mathrm{ASF}}$.
Under the baseline and serial dependence specifications, $h_\beta(w)=H(\bar x'\beta+a)$; under fixed effect--error dependence, $h_\beta(w)=\one\{\bar x'\beta+a-v_1\ge0\}$.
The ASF bounds at this $\beta$ are the values of the linear programs%
\footnote{With $\tau_{\mathrm{ASF}}$ fixed as well as $\beta$, an approximating sequence with finite fixed effects may reach the output only in the limit, but mixing it with vanishing weight on a finite fixed effect whose ASF value falls on the other side of $\tau_{\mathrm{ASF}}$ hits the target exactly without changing the limiting output, so the interval below is the section of the identified set at $\beta$.
Only interior targets arise here: $\bar x=1$ forces $\mu^*(Y_1=1,X_1=1)\le\tau_{\mathrm{ASF}}\le1-\mu^*(Y_1=0,X_1=1)$, and both probabilities are positive in the probit design.}
$$
\underline\tau_{\mathrm{ASF}}(\beta)
=
\inf_{\gamma\in\Gamma_\beta(\mu^*)}
\int h_\beta(w)\,\gamma(\d w),
\qquad
\overline\tau_{\mathrm{ASF}}(\beta)
=
\sup_{\gamma\in\Gamma_\beta(\mu^*)}
\int h_\beta(w)\,\gamma(\d w).
$$
They share the matching constraints and differ only in the direction of optimization, so no adversarial test function is needed.
This is the route behind Figure~\ref{fig:numerical_example1_joint}, rather than the enlarged parameter route of Supplemental Appendix~\ref{app:computation_example1}, in which the counterfactual enters as a moment restriction.
Each specification is solved through a reduction that removes the fixed effect.
Under the baseline and under serial dependence, on every covariate path with a period at the counterfactual covariate value, the path's counterfactual value equals the observed outcome probability in that period and is point identified, so the width comes from the remaining path alone.
On that path the observable content is the first $T$ moments of the distribution of the period probability under the baseline, giving a truncated Hausdorff moment problem that we solve through its polynomial majorant dual, and the distribution of the success count under serial dependence (the outcome distribution on that path is exchangeable in this design), giving the convex order constraint, which has a closed form for log-concave links.
Under fixed effect--error dependence the reduced input space is finite and each per-path section is a transportation program over orderings and regions.
The coefficient endpoints at three and four periods are located by bisection on the enclosure.
Under fixed effect--error dependence the finite program is exact.
Compactness and convexity of the pairs of output measure and ASF value attainable with the extended fixed effect imply that the counterfactual values form a closed interval with both endpoints attained.
The model preserves the finite observable $X$ and imposes no restriction across covariate paths, so both programs decompose by covariate path, and the bounds are the averages of the per-path bounds weighted by the observed path probabilities.

\subsection{The coefficient set along a probit--logit link mixture}
\label{app:parametric_mixture}

To examine the transition from probit to logit under the baseline, consider the link mixture $H_c=(1-c)\Phi+c\Lambda_s$, where $\Phi$ is the standard normal CDF, $\Lambda_s$ is the standardized logistic CDF, and $c\in[0,1]$.
For each $c$, we generate the probability measure from the baseline design of Section~\ref{subsec:numerical_example1} with link $H_c$ and compute the coefficient set under the same link.
The sets contract toward $\beta_0$ as the link moves from probit to logit, where the conditional logit argument point identifies the coefficient (Figure~\ref{fig:numerical_example1_mixture}).

\begin{figure}
    \centering
    \includegraphics[width=0.65\linewidth]{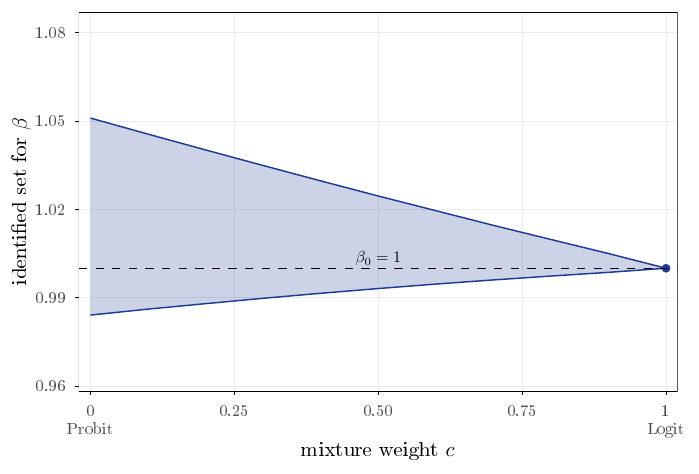}
    \caption{Baseline coefficient sets along the link mixture $H_c=(1-c)\Phi+c\Lambda_s$, from standardized probit ($c=0$) to standardized logit ($c=1$), in the $T=2$ design of Section~\ref{subsec:numerical_example1}.}
    \label{fig:numerical_example1_mixture}
\end{figure}

\subsection{Computation for Figure~\ref{fig:numerical_example1_power}}
\label{app:parametric_inference_implementation}

Little model-specific work remains for inference: the sample and bootstrap statistics are the linear programs of Supplemental Appendix~\ref{app:inference_computation}, and their only example-specific input is the row bound $r_{\mathrm{low}}$, computed using the interval branch and bound procedure of Subsection~\ref{app:parametric_implementation}, now at the optimizer of the sample program.
For each candidate coefficient and each Monte Carlo sample, column-and-row generation on the sample program therefore runs once and terminates with a retained row set $\mathcal W'$, the bound $r_{\mathrm{low},n}$, and the enclosure $\underline T_n\le T_n(\beta)\le T_{\mathrm{LP},n}(\beta)$.
Every bootstrap program is solved on this same $\mathcal W'$, with no further rows and no row bound calculations.
Under fixed effect--error dependence the reduced row family is finite and complete, so $r_{\mathrm{low}}=0$.

Monte Carlo samples are drawn once and shared across coefficients, penalties, and specifications, and the observation paths are nested across sample sizes so that the comparison in $n$ is paired.
The design uses $B=499$ bootstrap draws, $n\in\{10^3,\,10^4,\,10^5\}$, and $200$ Monte Carlo samples.
Critical values are empirical $(1-\alpha)$ quantiles from the $B$ bootstrap replicates, and the tolerance $\varepsilon$ of Corollary~\ref{C:inf_finite} is $10^{-6}$.
The penalty is the rule $\lambda_n=2\sqrt{\log n}$ of Section~\ref{subsec:numerical_example1}.
Figure~\ref{fig:numerical_example1_penalty} repeats the rejection curves of Figure~\ref{fig:numerical_example1_power} over the ladder $c\in\{1/2,1,2,5\}$, using $100$ Monte Carlo samples.
The constant matters most at the smallest sample size, and the spread across constants closes as the sample size grows.

\begin{figure}
    \centering
    \includegraphics[width=\linewidth]{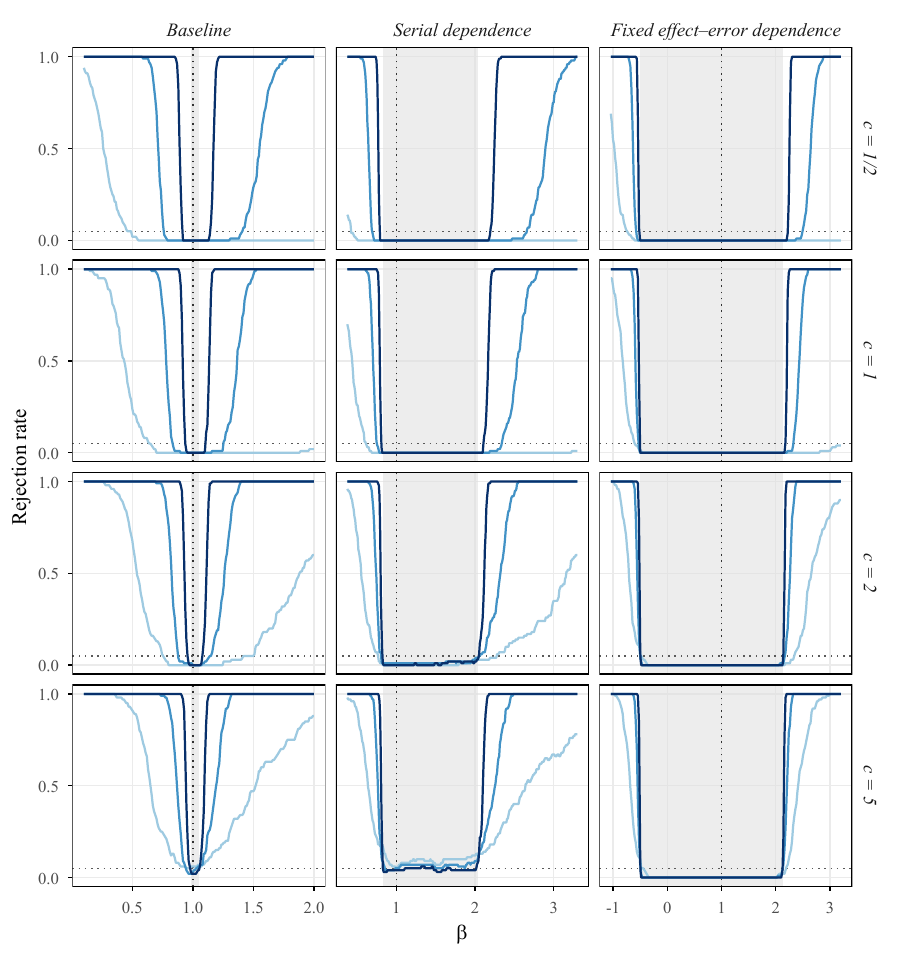}
    \caption{Rejection rates for the coefficient across the penalty ladder, at nominal level $0.05$ and $100$ Monte Carlo samples.
    Rows are the constant in $\lambda_n=c\sqrt{\log n}$, columns are the three specifications, and the curves are the sample sizes $n\in\{10^3,10^4,10^5\}$, light to dark.
    The shaded band is the identified set, and the dotted lines mark $\beta_0=1$ and the nominal level.}
    \label{fig:numerical_example1_penalty}
\end{figure}

\subsection{Interval-censored covariates}
\label{app:interval_parametric}

We now turn to the panel model of Example~\ref{ex:binary_interval}: the baseline specification of Example~\ref{ex:binary_parametric}, with the covariate path observed only through a bracket.
For each endpoint path $b=(x_L,x_U)$, the latent path $x^\star$ lies in the box $B_b\equiv\{x^\star:x_{L,t}\le x_t^\star\le x_{U,t}\ \text{componentwise for every }t\}$, so $\mathcal W=\bigsqcup_b(\{b\}\times B_b\times\R)$.
The payoff and auxiliary program are those of the baseline model, now with $x^\star$ ranging over $B_b$.
Because the covariate enters through the finite observable $b$ and no restriction links different endpoint paths, coefficient feasibility reduces to a separate check for each endpoint path.

The row residual problem differs from the baseline only in also maximizing over the latent path $x^\star\in B_b$.
At fixed $a$, the payoff is multilinear in the period probabilities $p_t\equiv H((x_t^\star)'\beta+a)$, and each period index $(x_t^\star)'\beta$ is minimized and maximized at a vertex of its own box.
The payoff is therefore maximized at one of the $2^T$ paths that take, in each period, either the index-minimizing or the index-maximizing vertex.
Collecting those paths in $C_b(\beta)$, the row residual for endpoint path $b$ is the baseline one-dimensional problem in $a$,
$$
r_b^*=\inf_{a\in\overline\R}\ \min_{x^\star\in C_b(\beta)}\bigl\{\zeta^*-q_{\phi^*,\beta}(b,x^\star,a)\bigr\}.
$$
Each endpoint path then gets the same full-line branch and bound as the baseline (Subsection~\ref{app:parametric_implementation} above).

\subsection{Interval-censored covariates: results}
\label{app:interval_results}

For the numerical illustration, we set $T=2$ and $\beta_0=0.25$, and let $A$ take the values $\{-0.5,0,0.5\}$ with probabilities $(0.25,0.5,0.25)$.
The latent covariates $X_{t}^\star\sim\operatorname{Uniform}[0,1]$ and error terms $V_{t}\sim\mathcal N(0,1)$ are i.i.d.\ over time, mutually independent, and independent of $A$.
The econometrician observes only which of $N$ equal-width bins contains each $X_{t}^\star$.
The model does not restrict $A$ to the three-point support of the design.

\begin{figure}
\centering
\includegraphics[width=0.7\linewidth]{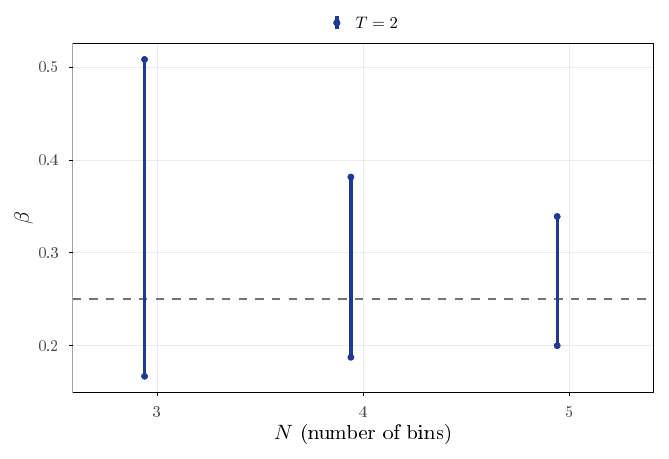}
\caption{Identified sets for $\beta$ with interval-censored covariates under the probit specification, at $T=2$ across bin counts $N\in\{3,4,5\}$. The dashed line marks $\beta_0=0.25$.}
\label{fig:numerical_interval_beta}
\end{figure}

With three bins the identified set is $[0.17,0.51]$ (Figure~\ref{fig:numerical_interval_beta}): the sign of $\beta$ is identified even though the latent covariate path is known only through coarse brackets.
Four bins narrow it to $[0.19,0.38]$ and five bins to $[0.20,0.34]$, a reduction of nearly $60\%$ in width relative to three bins.

\end{refsection}

\clearpage
\begin{refsection}

\section{Additional results: sequential exogeneity}
\label{app:se_details}

This appendix supports Supplemental Appendix~\ref{app:semiparametric_binary_choice}: Subsection~\ref{app:se_history_reduction} writes the row residual problem in terms of the observable histories, and Subsection~\ref{app:se_cells} shows that finitely many values of $\beta$ settle the identified sets of Section~\ref{subsec:numerical_semiparametric}.
Notation is that of Supplemental Appendix~\ref{app:se_cell_reduction} with $\theta=\beta$, so the counterfactual threshold $\bar c$ is dropped from $\mathcal C(\beta)$, and $M$ and the cells $C_j$ are read off the reduced threshold set.

\subsection{The row residual problem in terms of observable histories}
\label{app:se_history_reduction}

Fix $\nu\in\Delta$ and let $p(y,x)$ be a probability mass function on $\{0,1\}^T\times\mathcal X$.
For a history $x^t$, write $n_t(x^t;p)\equiv\Pr_p(X^t=x^t)$ and $m_t(x^t;p)\equiv\Pr_p(Y_t=1,X^t=x^t)$.
For each initial covariate value $x_1$, let $\bar F_{\nu,x_1}$ assign to every threshold $c\in\mathcal C(\beta)$ a number in $[0,1]$, weakly decreasing in $c$, to be read as the survival probability $\Pr(\tilde V_1\ge c\mid X_1=x_1)$ at this $\nu$.

\begin{prop}
\label{prop:se_history_reduction}
Fix $\beta$ and $\nu$.
A probability mass function $p$ of $(Y,X)$ is generated by a distribution of $(X,\tilde V)$ with covariate marginal $\nu$ that satisfies~\eqref{eq:se_fiber_se} if and only if its $X$ marginal is $\nu$ and there are weakly decreasing $\bar F_{\nu,x_1}$ with
\begin{equation}
\label{eq:se_history_factorization}
m_t(x^t;p)
=
\bar F_{\nu,x_1}\bigl(c(t,x_t;\beta)\bigr)\,n_t(x^t;p)
\end{equation}
for every $t$ and every history $x^t$ with initial value $x_1$.
Hence $\{\mu_\pi:\pi\in\Pi(\nu)\}$ is the set of such $p$, and Proposition~\ref{prop:se_finite_support} holds in these terms.
\end{prop}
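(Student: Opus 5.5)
The plan is to prove the two directions separately, using the fact that under the cell reduction the output is $Y_t=\one\{\tilde V_t\ge c(t,X_t;\beta)\}$. Necessity is a direct computation. Sufficiency is an explicit construction of $\tilde V$ from $p$ by truncated draws. The final sentence of the proposition then follows from Lemma~\ref{lem:se_cell_reduction}.

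\emph{Necessity.} Let $(X,\tilde V)$ have covariate marginal $\nu$ and satisfy~\eqref{eq:se_fiber_se}, and let $p$ be the law of $(Y,X)$.
1. The $X$ marginal of $p$ is $\nu$, because the output preserves $X$.
2. For $x_1$ with $\nu_1(x_1)>0$, define $\bar F_{\nu,x_1}(c)\equiv\Pr(\tilde V_1\ge c\mid X_1=x_1)$ on $\mathcal C(\beta)$. This is a survival function, hence weakly decreasing. For $x_1$ with $\nu_1(x_1)=0$, take any decreasing function, say $\bar F\equiv 0$.
3. For a history with $n_t(x^t;p)>0$, write
$$m_t(x^t;p)=\Pr(\tilde V_t\ge c(t,x_t;\beta)\mid X^t=x^t)\,n_t(x^t;p).$$
By~\eqref{eq:se_fiber_se}, the conditional probability equals $\bar F_{\nu,x_1}(c(t,x_t;\beta))$. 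At $t=1$ this holds by definition.
4. When $n_t(x^t;p)=0$, both sides of~\eqref{eq:se_history_factorization} vanish.

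\emph{Sufficiency.} Let $p$ have $X$ marginal $\nu$ and decreasing $\bar F_{\nu,x_1}$ satisfying~\eqref{eq:se_history_factorization}.
1. For each $x_1$ with $\nu_1(x_1)>0$, define a distribution $G_{x_1}$ on the cell representatives $\{m_j\}$. It gives $C_j$ the mass $\bar F(c_{(j-1)})-\bar F(c_{(j)})$, with the conventions $\bar F(c_{(0)})\equiv1$ and $\bar F(c_{(M+1)})\equiv0$. Monotonicity makes these masses nonnegative. Then $G_{x_1}([c,\infty))=\bar F_{\nu,x_1}(c)$ for every $c\in\mathcal C(\beta)$, because no cell straddles a threshold.
2. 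Draw $(Y,X)\sim p$. Then, independently across $t$ given $(Y,X)$, draw $\tilde V_t$ from $G_{X_1}$ conditioned on $[c(t,X_t;\beta),\infty)$ if $Y_t=1$, and on its complement if $Y_t=0$.
3. The conditioning events can have $G_{x_1}$-probability zero, which is the main technical point. By~\eqref{eq:se_history_factorization}, such a conditioning event arises only on histories where $Y_t$ takes that value with $p$-probability zero. The draw can therefore be defined arbitrarily there without affecting any law.
4. By construction $\one\{\tilde V_t\ge c(t,X_t;\beta)\}=Y_t$, so the generated output law is $p$, and the covariate marginal is $\nu$.

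\emph{Verifying~\eqref{eq:se_fiber_se}.} This is the step I expect to need the most care. Conditioning is on $X^t$, not on the full path $X$, so the future covariates and the other outcomes must be integrated out.
1. The conditional law of $\tilde V_t$ given $(Y,X)$ depends only on $(Y_t,x_1,x_t)$, all of which are $X^t$-measurable once $Y_t$ is included.
2. Given $X^t=x^t$ with $n_t>0$, the factorization gives $\Pr(Y_t=1\mid X^t=x^t)=m_t/n_t=\bar F_{\nu,x_1}(c(t,x_t;\beta))$.
3. Hence the law of $\tilde V_t$ given $X^t=x^t$ is
$$\bar F\cdot G_{x_1}(\cdot\mid[c,\infty))+(1-\bar F)\cdot G_{x_1}(\cdot\mid(-\infty,c)).$$
This mixture equals $G_{x_1}$ because $G_{x_1}([c,\infty))=\bar F$.
4. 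The resulting law depends on $x^t$ only through $x_1$ and is the same for every $t$, including $t=1$. This is~\eqref{eq:se_fiber_se}.

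\emph{Final claim.} By Lemma~\ref{lem:se_cell_reduction}, $\{\mu_\pi:\pi\in\Pi(\nu)\}$ is exactly the set of output laws generated by admissible distributions at $\nu$. So it coincides with the set of $p$ just characterized. Proposition~\ref{prop:se_finite_support} then transfers verbatim, since it depends only on this set of output measures at each $\nu$.
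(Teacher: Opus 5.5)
Your proposal is correct and follows the paper's proof. Necessity is the same conditioning on positive-probability histories. Sufficiency is the same construction: draw $(Y,X)$ from $p$, then draw each $\tilde V_t$ from a distribution with survival probabilities $\bar F_{\nu,x_1}$ at the thresholds, truncated according to $Y_t$. The final claim likewise comes from Lemma~\ref{lem:se_cell_reduction}. Your explicit $G_{x_1}$ on the cell points is one concrete choice of the paper's $F_{x_1}$, and your treatment of $\nu_1(x_1)=0$ and of integrating out later covariates spells out steps the paper leaves implicit.
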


\begin{proof}
\emph{Necessity.}
Under~\eqref{eq:se_fiber_se}, $\Pr(Y_t=1\mid X^t=x^t)=\Pr(\tilde V_t\ge c(t,x_t;\beta)\mid X^t=x^t)=\bar F_{\nu,x_1}(c(t,x_t;\beta))$ on every history of positive probability, and multiplying by the probability of the history gives~\eqref{eq:se_history_factorization}, with both sides zero on histories of probability zero.

\emph{Sufficiency.}
For each $x_1$ choose a distribution $F_{x_1}$ on $\R$ with survival probabilities $\bar F_{\nu,x_1}$ at the thresholds.
Draw $(Y,X)$ from $p$ and, given $(y,x)$, draw the $\tilde V_t$ independently from $F_{x_1}$ conditioned on $\tilde V_t\ge c(t,x_t;\beta)$ when $y_t=1$ and on $\tilde V_t<c(t,x_t;\beta)$ when $y_t=0$ (by~\eqref{eq:se_history_factorization}, an event of $F_{x_1}$-probability zero is conditioned on only on an event of $p$-probability zero).
Given $X^t=x^t$, the first conditional distribution is used with probability $m_t(x^t;p)/n_t(x^t;p)=F_{x_1}([c(t,x_t;\beta),+\infty))$, so $\tilde V_t$ given $X^t=x^t$ has distribution $F_{x_1}$ for every $t$, which is~\eqref{eq:se_fiber_se}, and the outcome equation returns $Y$.
The last claim is Lemma~\ref{lem:se_cell_reduction}.
\end{proof}

The restrictions~\eqref{eq:se_history_factorization} are products of a survival probability and a history probability, so the row residual problem~\eqref{eq:se_pricing} in these terms, with objective $\inner{\phi^*,p}-\zeta^*$, is bilinear rather than linear, but it has $|\mathcal X|2^T$ history probabilities and at most $|\mathcal X_1|\,M$ survival probabilities at each $\nu$, where $\mathcal X_1\equiv\{x_1:x\in\mathcal X\}$, in place of $|\mathcal X|(M+1)^T$ cell probabilities.
We bound its maximum from above by relaxing each product on pieces of the range of the survival probability \parencite{McCormick1976}, and that bound serves as $\delta_{\mathrm{ub}}$ in Supplemental Appendix~\ref{app:se_exact_pricing}.

\subsection{Coefficient cells}
\label{app:se_cells}

In the designs of Section~\ref{subsec:numerical_semiparametric} the thresholds $-(t-1)-\beta x_t$ with $x_t\in\{0,1,2\}$ are affine in $\beta$, and two of them coincide only at $\beta=(r-t)/(x-x')$ with $x\ne x'$, so the tie values are $\{-1,-1/2,0,1/2,1\}$ at two periods and $\{-2,-1,-1/2,0,1/2,1,2\}$ at three.

\begin{prop}
\label{prop:se_cells}
\begin{enumerate}
\item[(a)] The set $\mathcal M_\beta$, the ADF $T(\beta)$, and the auxiliary program~\eqref{eq:se_inner_lp} at any fixed list of values of $\nu$ depend on $\beta$ only through which thresholds in $\mathcal C(\beta)$ coincide and how the distinct ones are ordered, so in the scalar designs they are constant on each open interval between consecutive tie values and on the two unbounded intervals beyond them.
\item[(b)] In the scalar designs, let $b$ be a tie value and $I_-$ and $I_+$ the open intervals on either side of it.
Then $\mathcal M_b\subseteq\mathcal M_{I_-}\cap\mathcal M_{I_+}$, so $b$ is excluded from the identified set whenever $\mu^*$ lies outside $\mathcal M_{I_-}$ or outside $\mathcal M_{I_+}$.
\end{enumerate}
\end{prop}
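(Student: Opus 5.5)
For part~(a), I would work entirely with the cell representation of Supplemental Appendix~\ref{app:se_cell_reduction}, extended by Proposition~\ref{prop:se_history_reduction}. By Proposition~\ref{prop:se_finite_support}(a), $\mathcal M_\beta$ is the convex hull of $\{\mu_\pi:\nu\in\Delta,\ \pi\in\Pi(\nu)\}$, so it is enough to check how each ingredient depends on $\beta$. The ingredients are:
\begin{itemize}
\item the cell set $\{1,\ldots,M+1\}$;
\item the linear restrictions~\eqref{eq:se_sc_cells}--\eqref{eq:se_se_cells}, which do not involve $\beta$ apart from the number of cells;
\item the output map $(x,k)\mapsto z(x,k;\beta)$.
\end{itemize}

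Suppose $\beta$ and $\beta'$ induce the same coincidence pattern and the same ordering on the indexed thresholds $\{c(t,x_t;\cdot)\}$. Then $M$ is the same for both. Each indexed threshold also occupies the same rank $\rho(t,x_t)$ among the distinct values. Since $\one\{m_j\ge c_{(\rho)}\}=\one\{j>\rho\}$, we get $y(x,k;\beta)=y(x,k;\beta')$ for every $(x,k)$.

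It follows that $\Pi(\nu)$ and the maps $\pi\mapsto\mu_\pi$ coincide at $\beta$ and $\beta'$. Hence $\mathcal M_\beta=\mathcal M_{\beta'}$, and therefore $T(\beta)=T(\beta')$ by~\eqref{eq:discrepancy_function_model}. The auxiliary program~\eqref{eq:se_inner_lp} at a fixed list $N$ is built from the same rows $\mathcal X\times\{1,\ldots,M+1\}^T\times N$, the same payoff $\phi(z(x,k;\beta))$, and the same cell equalities, so it is literally identical.

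In the scalar designs the thresholds are $-(t-1)-\beta x_t$, which are affine in $\beta$. Two of them can swap order or coincide only at a tie value. So on each open interval between consecutive tie values, and on the two unbounded end intervals, the pattern is constant, which gives the claimed constancy.

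For part~(b), I would use the representation of Proposition~\ref{prop:se_history_reduction}: $p\in\{\mu_\pi:\pi\in\Pi(\nu)\}$ at $\beta$ if and only if $p$ has $X$-marginal $\nu$ and there exist weakly decreasing $\bar F_{\nu,x_1}$ on $\mathcal C(\beta)$ satisfying~\eqref{eq:se_history_factorization}. Take $p$ feasible at the tie value $b$, with survival functions $\bar F$ on $\mathcal C(b)$. Pick $\beta\in I_-$ or $\beta\in I_+$. Some thresholds that coincide at $b$ split at $\beta$, but they keep the same order relative to every threshold from which they were already distinct at $b$. Define $\bar F'$ on $\mathcal C(\beta)$ by $\bar F'(c(t,x_t;\beta))\equiv\bar F(c(t,x_t;b))$.

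Two things must then be checked.
\begin{itemize}
\item $\bar F'$ is well defined. Equal thresholds at $\beta$ are equal at $b$, because a tie at a non-tie value would persist across the whole interval and hence also hold at its endpoint $b$ by continuity.
\item $\bar F'$ is weakly decreasing. If $c(\beta)<c'(\beta)$, then by continuity $c(b)\le c'(b)$, so $\bar F(c(b))\ge\bar F(c'(b))$.
\end{itemize}
Relation~\eqref{eq:se_history_factorization} holds term by term because $m_t$ and $n_t$ do not depend on $\beta$. Thus $p$ is feasible at $\beta$. Taking convex hulls gives $\mathcal M_b\subseteq\mathcal M_{I_\pm}$, where $\mathcal M_{I_\pm}$ is well defined by part~(a).

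The exclusion statement then follows from closedness. By Proposition~\ref{prop:se_finite_support}(b) the sets are compact, so $\overline{\mathcal M}_b=\mathcal M_b$. If $\mu^*\notin\mathcal M_{I_-}$, then $\mu^*\notin\mathcal M_b$, and Theorem~\ref{thm:main_sharpness} excludes $b$ from the identified set.

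The main obstacle I expect is in part~(a): verifying that the transfer of $\Pi(\nu)$ really is an identity of sets, including cases where $\nu$ puts zero mass on some histories. I would handle this by noting that~\eqref{eq:se_se_cells} contains no thresholds at all, so only the relabelling of the output map needs care. That relabelling is the rank argument above.
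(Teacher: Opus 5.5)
Your proposal is correct and takes essentially the same route as the paper. For part~(a), both arguments observe that $\beta$ enters the cell representation only through $M$ and the threshold indicators at the cell points; your rank argument makes this explicit. For part~(b), both transfer the survival probabilities from the tie value $b$ to $\beta'\in I_\pm$ in the factorization~\eqref{eq:se_history_factorization} and then conclude by closedness and Theorem~\ref{thm:main_sharpness}. Your explicit well-definedness check for the transferred survival function adds detail that the paper covers only implicitly, by noting that thresholds coinciding at $b$ separate at $\beta'$ unless they are the same affine function.
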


\begin{proof}
\emph{Part~(a).}
The order determines $M$, the cells $C_j$, and the indicators $\one\{m_j\ge c\}$ with $c\in\mathcal C(\beta)$, and $\beta$ enters $\mathcal M_\beta$, $T(\beta)$, and the auxiliary program only through these (Lemma~\ref{lem:se_cell_reduction} and Proposition~\ref{prop:se_finite_support}).
In the scalar designs the order changes only at a tie value.

\emph{Part~(b).}
Fix $\beta'$ in $I_-$ or in $I_+$.
No tie value lies between $b$ and $\beta'$, so thresholds that differ at $b$ keep their order at $\beta'$, and thresholds that coincide at $b$ separate at $\beta'$ unless they are the same affine function.
Given $(p,\bar F)$ satisfying~\eqref{eq:se_history_factorization} at $b$, give each threshold at $\beta'$ the survival probability it had at $b$: the array is weakly decreasing, because every strict inequality between thresholds at $\beta'$ holds weakly at $b$, and~\eqref{eq:se_history_factorization} holds at $\beta'$ with the same $p$.
Hence $\mathcal M_b\subseteq\mathcal M_{\beta'}$, and the last claim follows from Theorem~\ref{thm:main_sharpness} because $\mathcal M_\beta$ is closed (Proposition~\ref{prop:se_finite_support}(b)).
\end{proof}

By Proposition~\ref{prop:se_cells}(a), one value of $\beta$ in each interval between tie values settles the interval.
It is included when the auxiliary program of Supplemental Appendix~\ref{app:se_exact_pricing} reaches value zero, and excluded when the lower end of the enclosure there is positive.
A tie value next to an excluded interval is excluded by Proposition~\ref{prop:se_cells}(b), and the remaining tie values are treated like the intervals.
The three-period identified set is contained in the two-period one, because dropping the last period of an admissible three-period input measure gives an admissible two-period input measure that generates the two-period marginal of $\mu^*$.
At three periods the list of values of $\nu$ in the auxiliary program is replaced by a list of pairs $(\nu,\pi)$ with $\pi\in\Pi(\nu)$, whose mixtures are output measures of the model by Proposition~\ref{prop:se_finite_support}(a), so the enclosure of Supplemental Appendix~\ref{app:se_exact_pricing} applies unchanged.

\printbibliography[heading=subbibliography]
\end{refsection}

\end{document}